
\documentclass[preprint,12pt,authoryear]{elsarticle}



\pdfoutput=1
\usepackage{amssymb}
\usepackage{amsthm}
\usepackage{multirow}%
\usepackage{amsmath,amssymb,amsfonts}%
\usepackage{amsthm}%
\usepackage{mathrsfs}%
\usepackage[mathscr]{eucal}
\usepackage[title]{appendix}%
\usepackage{xcolor}%
\usepackage{textcomp}%
\usepackage{manyfoot}%
\usepackage{booktabs}%
\usepackage{algorithm}%
\usepackage{algorithmicx}%
\usepackage{algpseudocode}%
\usepackage{listings}%
\usepackage{subfigure}%
\usepackage[titles,subfigure]{tocloft}
\usepackage{caption}
\usepackage{hyperref}
\usepackage{indentfirst}
\usepackage{soul} 
\usepackage{color, xcolor} 
\usepackage{epsfig}
\usepackage{graphicx}
\usepackage{boondox-cal}
\usepackage{manyfoot}%

\soulregister{\cite}7
\soulregister{\citep}7
\soulregister{\citet}7
\soulregister{\ref}7
\soulregister{\pageref}7

\usepackage{hyperref}
\hypersetup{
hidelinks,
colorlinks=true,
citecolor=blue}

\def\mathcal{\mathscr}
\newtheorem{lemma}{Lemma}[]

\theoremstyle{thmstyleone}%
\newtheorem{theorem}{Theorem}

\theoremstyle{thmstyletwo}%

\theoremstyle{thmstylethree}%
\raggedbottom

\journal{Applied Soft Computing}

\begin{document}

\begin{frontmatter}




\title{Robust kernel-free quadratic surface twin support vector machine with capped $L_1$-norm distance metric}

\author[1,2]{Qi Si}
\author[1,2]{Zhixia Yang\corref{cor1}}\ead{yangzhx@xju.edu.cn}
\cortext[cor1]{Corresponding authors}
\address[1]{College of Mathematics and Systems Science, Xinjiang University, Urumuqi 830046, China}
\address[2]{Institute of Mathematics and Physics, Xinjiang University, Urumuqi 830046, China}

\begin{abstract}
Twin support vector machine (TSVM) is a very classical and practical classifier for pattern classification. However, the traditional TSVM has two limitations. Firstly, it uses the $L_2$-norm distance metric that leads to its sensitivity to outliers. Second, it needs to select the appropriate kernel function and the kernel parameters for nonlinear classification. To effectively avoid these two problems, this paper proposes a robust capped $L_1$-norm kernel-free quadratic surface twin support vector machine (C$L_1$QTSVM). The strengths of our model are briefly summarized as follows. 1) The robustness of our model is further improved by employing the capped $L_1$ norm distance metric. 2) Our model is a kernel-free method that avoids the time-consuming process of selecting appropriate kernel functions and kernel parameters. 3) The introduction of $L_2$-norm regularization term to improve the generalization ability of the model. 4) To efficiently solve the proposed model, an iterative algorithm is developed. 5) The convergence, time complexity and existence of locally optimal solutions of the developed algorithms are further discussed. Numerical experiments on numerous types of datasets validate the classification performance and robustness of the proposed model.
\end{abstract}

\begin{keyword}
Capped $L_1$-norm\sep Robust classification\sep Kernel-free trick  \sep Twin support vector machine\sep Iterative algorithms.


\end{keyword}

\end{frontmatter}


\section{Introduction}
  With the development of data science, the processing and learning of data is increasingly important. In view of the strong generalization ability of the support vector machine (SVM) \citep{SVM} and its following the principle of structural risk minimization (SRM), which leads to its increasing application in classification and regression learning tasks \citep{LpLSTSVM,Huber-huigui-ELM}. In addition, the classical SVM deal with nonlinear classification problems by introducing suitable kernel functions \citep{hejishu}. Nowadays, SVM models have been widely applied in the fields of fault diagnosis \citep{guzhang}, pattern classification \citep{LpLSTSVM}, gene expression \citep{jiyin}, face recognition \citep{yingyong3}, and text classification \citep{yingyong4}.

  However, the classical SVM model can be limited to handle certain data distributions, such as the XOR problem. In addition, traditional SVMs obtain the optimal solutions of the primal problem by solving a quadratic programming problem (QPP), which leads to some limitations in dealing with large-scale problems. To address the above two problems, twin SVM (TSVM) has been proposed \citep{TSVM}.  In comparison to classical SVMs, traditional TSVM solve two smaller size quadratic programming problems (QPPs), which leads to its computational speed theoretically to be faster than that of SVMs. The classical TSVM only considers the empirical risk minimization principle and not the SRM principle. Therefore, to make TSVM follow the SRM principle, Shao et al. proposed twin bounded SVM (TBSVM) by introducing a regularization term in the TSVM model \citep{zhengzexiang}.  In recent years, the TSVM model has been further improved and applied to the fields of robust classification learning \citep{SLTSVM}, multi-view data classification \citep{duoshijiao1}, unbalanced data classification \citep{TSVM2} and deep learning \citep{xieshenduTSVM}.

  Recently, some scholars have proposed numerous improvements by combining kernel-free techniques with SVM and TSVM models.  The kernel-free soft-margin quadratic surface support vector machine (SQSSVM) was first proposed by Luo {\it et al.} for the binary classification problem \citep{luoSQSSVM}. Similarly, Bai {\it et al.} proposed quadratic least squares support vector machine (QLSSVM) \citep{baiQSLSSVM}. Subsequently, Gao et al. proposed quadratic TSVM (QTSVM) and quadratic least squares TSVM model (LSQTSVM) by combining kernel-free techniques with TSVM \citep{gaoQSLSTSVM}. In order to make QSSVM deal with high-dimensional feature problems effectively, Gao {\it et al.} proposed $\nu$-version fuzzy reduced SQSSVM ($\nu$-FRSQSSVM) \citep{gaonu-FRSQSSVM}. In recent years, the double well potential (DWP) function, which is a 4th order polynomial function, has also been studied more and more extensively\citep{shuangshijin1,shuangshijin2,gao-DWPSVM,gao-reg-LSDWPTSVM}.  Although the above kernel-free methods improve the performance of SVM and TSVM models for nonlinear classification problems, they have not been applied to the field of robust classification learning, and thus this is an extendable research direction.

   Although TSVM has achieved good performance in pattern classification, there is still room for improvement. The main reason for this is that TSVM uses the $L_2$ norm distance metric and the hinge loss function, which causes it to be sensitive to outliers or noises \citep{L2yichang,Hingeyichang}. In order to enhance the robustness of the model, some scholars have proposed many improvements based on the distance norm. For instance, Wang {\it et al.} combined the capped $L_1$-norm distance metric with TSVM (CTSVM) to deal with robust classification problems \citep{cappedL1TSVM}. In order to make CTSVM follow the SRM principle, Ma {\it et al.} proposed capped $L_1$-norm distance metric-based fast robust TBSVM (FRTBSVM) by introducing a regularization term in the CTSVM model \citep{FRTBSVM}. Similarly, Yuan {\it et al.} introduced the new capped $L_{2,p}$ norm metric in the least squares TSVM model \citep{CL2p}. Recently, Wang {\it et al.} also introduced capped $L_{2,p}$ norm metric and bounded Welsch loss function into TSVM \citep{CL2p1}.  In addition, the capped $L_1$-norm distance metric has been introduced into methods such as the twin extreme learning machines (C$L_1$TELM) \citep{L1TELM} and the twin projection extreme learning machines (C$L_1$TPELM) \citep{L1TPELM}. It can be seen that the above methods improve the robustness of TSVM, but they are not combined with kernel-free techniques, which is very interesting.

   Noting that no scholars have combined capped $L_1$-norm distance metric with kernel-free TSVM to solve robust classification problems until now, this paper proposes a new kernel-free quadratic surface TSVM model based on capped $L_1$-norm distance metric. Our model is to find two quadratic surfaces such that each surface is as close to one class of points and as far away from the other class of points as possible. The main contributions of the paper are summarized as follows:

\begin{enumerate}
  \item[$\bullet$] The robustness of the LSQTSVM model to outliers is improved by introducing capped $L_1$-norm loss.
  \item[$\bullet$]  To comply with the structural risk minimization (SRM) principle, we introduce a regularization term to improve the generalization ability of model.
      \item[$\bullet$] And efficient and fast iterative algorithms are developed to solve the optimization problem of proposed models.
        \item[$\bullet$] The convergence, time complexity and existence of optimal solutions of our algorithm are analyzed theoretically to further validate the effectiveness of our C$L_1$QTSVM model.
         \item[$\bullet$] Experiments on synthetic and real datasets validate that the proposed C$L_1$QTSVM model slightly outperforms the compared state-of-the-art methods in terms of robustness.
    \end{enumerate}

The remaining parts of this paper are summarized as follows. Section \ref{2} briefly reviews the work related to this paper, including some definitions, models and norms. Section \ref{3} presents our model and its corresponding solution algorithm. Furthermore, some further theoretical analysis of the algorithm is provided. Section \ref{4} shows the results of numerical experiments on synthetic and real datasets. Section \ref{6} briefly summarizes the contributions of this paper and future work.


\section{Related work}\label{2}
			In this section, we first introduce some notations and double well potential functions used in the paper. Then, some related models and  distance metrics are briefly reviewed.
 \subsection{Notation}
 Throughout this paper, scalars, vectors and matrices are represented in italics. Furthermore, scalars are not required to be bolded, while vectors and matrices are required to be bolded. $\mathbb{R}^{n}$ and $\mathbb{R}^{n}_{+}$ denote the $n$-dimensional real number space and the $n$-dimensional non-negative real space respectively. $\mathbb{S}^{n}$ and $\mathbb{D}^{n}$ represent the real symmetric and real diagonal matrices of $n\times n$, respectively. The $\boldsymbol{0}$ means a zero matrix of appropriate dimension, while $\boldsymbol{I}$ represents a identity matrix of arbitrary dimension. All vectors used in this paper default to column vectors. $\top$ indicates a transpose operation. $\boldsymbol{e}_+$ and $\boldsymbol{e}_-$ represent $m_+$ and $m_-$ dimensional all-one column vectors, respectively.

For a binary dataset $\mathcal{T}$, it can be formulated mathematically as follows
\begin{equation}\label{shujuD}
\mathcal{T}=\left\{\left(\boldsymbol{x}_{i}^{\pm}, y_{i}^{\pm}\right)_{i=1,2,...,m_{\pm}}|\boldsymbol{x}_{i}^{\pm}\in \mathbb{R}^{n}, y_{i}^{\pm} \in \left\{+1,-1\right\}\right\},
\end{equation}
 where $m = m_++m_-$, $m_+$ and $m_-$ represent the number of positive and negative sample points, respectively. Furthermore, $n$ represents the feature dimension of the data $\boldsymbol{x}$. Meanwhile, the positive and negative sample matrices are defined as $\boldsymbol X_{+}=\left[\boldsymbol x_{1}^{+}, \ldots, \boldsymbol x_{m_{+}}^{+}\right]^{\top} \in \mathbb{R}^{m_{+} \times n}$ and $\boldsymbol X_{-}=\left[\boldsymbol x_{1}^{-}, \ldots, \boldsymbol x_{m_{-}}^{-}\right]^{\top} \in \mathbb{R}^{m_{-} \times n}$, respectively.

Then, we briefly introduce the four vectorization operators employed in this paper, which are extensively used in kernel-free SVM , TSVM models  \citep{gaonu-FRSQSSVM,gao-reg-LSDWPTSVM}. For any $\boldsymbol{A}\in\mathbb{S}^{n}$, its semi-vectorized operator follows
\begin{equation}\label{dingyi2}
  \begin{split}
 \operatorname{hvec}(\boldsymbol{A}) & \triangleq  {\left[A_{11}, \ldots, A_{1 n}, A_{22}, \ldots, A_{2 n}, \ldots, A_{n-1, n-1}, A_{n-1, n}, A_{n n}\right]^{\top} } \\
 & \in \mathbb{R}^{n(n+1) / 2}.
  \end{split}
\end{equation}

If $\boldsymbol{A}\in\mathbb{D}^{n}$, the diagonal vectorization operator is
\begin{equation}\label{dingyi3}
\operatorname{dvec}(\boldsymbol{A}) \triangleq \left[A_{11}, A_{22}, \ldots, A_{n-1, n-1}, A_{n n}\right]^{\top}  \in \mathbb{R}^{n}.
\end{equation}

Furthermore, given the vector $\boldsymbol{x}=\left[x_1, \ldots, x_n\right]^{\top}\in\mathbb{R}^{n}$, its quadratic vectorization operator is defined as follows
\begin{equation}\label{dingyi4}
\operatorname{lvec}(\boldsymbol{x}) \triangleq \left[\frac{1}{2}x_{1}^{2}, x_{1}x_{2}, \ldots, x_{1}x_{n}, \frac{1}{2}x_{2}^{2}, x_{2}x_{3}, \ldots, x_{2}x_{n}, \ldots, \frac{1}{2}x_{n}^{2}\right]^{\top}  \in \mathbb{R}^{\frac{n(n+1)}{2}},
\end{equation}
and its quadratic vectorization operator without cross terms can be defined as follows
\begin{equation}\label{dingyi5}
\operatorname{qvec}(\boldsymbol{x}) \triangleq \left[\frac{1}{2}x_{1}^{2}, \frac{1}{2}x_{2}^{2}, \ldots, \frac{1}{2}x_{n}^{2}\right]^{\top}  \in \mathbb{R}^{n}.
\end{equation}

		\subsection{Related models}
In this subsection, we will briefly review two classical correlation methods called LSQTSVM and FRTBSVM.

        \subsubsection{LSQTSVM}
        Recently, kernel-free TSVM methods have been further extensively studied. Gao {\it et al.} proposed the least squares quadratic TSVM (LSQTSVM) model \citep{gaoQSLSTSVM}. In particular, the LSQTSVM model solves a pair of quadratic surfaces as follows
        \begin{equation}\label{erciqumian}
\frac{1}{2} \boldsymbol{x}^{\top} \boldsymbol{A}_{\pm} \boldsymbol{x}+\boldsymbol{b}_{\pm}^{\top}\boldsymbol{x}+c_{\pm}=0,
\end{equation}
where $\boldsymbol{A}_{\pm}\in\mathbb{S}^{n}$, $\boldsymbol{b}_{\pm}\in\mathbb{R}^{n}$, $c_{\pm}\in\mathbb{R}$.

To obtain the above pair of quadratic surfaces (\ref{erciqumian}), the LSQTSVM model is built as follows

$\text{(LSQTSVM1)}$
\begin{equation}\label{LSQTSVM1}
	\begin{split}
\ \ \ \qquad \min \ & \frac{1}{2}\sum_{i=1}^{m_{+}}(\frac{1}{2} {\boldsymbol{x}_{i}^{+}}^{\top} \boldsymbol{A}_+ \boldsymbol{x}_{i}^{+}+ \boldsymbol{b}_+^{\top}\boldsymbol{x}_{i}^{+}+c_+)^2 +C \sum_{j=1}^{m_{-}} (\xi_{j}^{-})^2, \\	
\ \ \ \qquad \text{s.t.} \ & -(\frac{1}{2} {\boldsymbol{x}_{j}^{-}}^{\top} \boldsymbol{A}_+ \boldsymbol{x}_{j}^{-}+\boldsymbol{b}_+^{\top}\boldsymbol{x}_{j}^{-}+c_+)=1-\xi_{j}^{-},\quad j=1,2, \ldots, m_{-},\\
& \boldsymbol{A}_+ \in \mathbb{S}^{n}, \boldsymbol{b}_+ \in \mathbb{R}^{n}, c_+\in\mathbb{R}, \boldsymbol{\xi}_- \in \mathbb{R}^{m_-}.
     \end{split}
\end{equation}

and

$\text{(LSQTSVM2)}$
\begin{equation}\label{LSQTSVM2}
	\begin{split}
\min \ & \frac{1}{2}\sum_{j=1}^{m_{-}}(\frac{1}{2} {\boldsymbol{x}_{j}^{-}}^{\top} \boldsymbol{A}_- \boldsymbol{x}_{j}^{-}+\boldsymbol{b}_-^{\top}\boldsymbol{x}_{j}^{-}+c_-)^2 +C \sum_{i=1}^{m_{+}} (\xi_{i}^{+})^2, \\	
\text{s.t.} \ & \frac{1}{2} {\boldsymbol{x}_{i}^{+}}^{\top} \boldsymbol{A}_- \boldsymbol{x}_{i}^{+}+\boldsymbol{b}_-^{\top}\boldsymbol{x}_{i}^{+}+c_-=1-\xi_{i}^{+},\quad i=1,2, \ldots, m_{+},\\
& \boldsymbol{A}_- \in \mathbb{S}^{n}, \boldsymbol{b}_- \in \mathbb{R}^{n}, c_-\in\mathbb{R}, \boldsymbol{\xi}_+ \in \mathbb{R}^{m_+}.
     \end{split}
\end{equation}
where $C$ is a given positive parameter. $\boldsymbol{\xi}_{+}=\left[\xi_{1}^{+}, \ldots,
\xi_{m_{+}}^{+}\right]^{\top} \in \mathbb{R}^{m_{+}}$, $\boldsymbol\xi_{-}=\left[\xi_{1}^{-}, \ldots, \xi_{m_{-}}^{-}\right]^{\top} \in \mathbb{R}^{m_{-}}$. Take the objective function of the optimization problem (\ref{LSQTSVM1}) as an example and briefly introduce the constructive idea. Its first term is to make the positive points as close as possible to the positive quadratic surface $\frac{1}{2} \boldsymbol{x}^{\top} \boldsymbol{A}_+ \boldsymbol{x}+\boldsymbol{x}^{\top} \boldsymbol{b}_++c_+$, and its second term is to measure the loss of each negative point by the square loss function. The optimization problems (\ref{LSQTSVM1}) and (\ref{LSQTSVM2}) can be equivalently transformed by the definitions (\ref{dingyi2}) and (\ref{dingyi4}), then the corresponding solutions are obtained, as detailed in Ref\citep{gaoQSLSTSVM}.

If the solutions $\boldsymbol{A}_{\pm}$, $\boldsymbol{b}_{\pm}$ and $c_{\pm}$ of the optimization problems (\ref{LSQTSVM1}) and (\ref{LSQTSVM2}) are obtained, then  the label of a new sample $\boldsymbol{x}$ can be predicted by the following decision rule
\begin{equation}\label{juece}
{\rm Class}\ \boldsymbol{x}=\underset{k=+,-}{\operatorname{argmin}}\left\{\frac{\left|\frac{1}{2} \boldsymbol{x}^{\top} \boldsymbol{A}_k \boldsymbol{x}+\boldsymbol{x}^{\top} \boldsymbol{b}_k+c_k \right|}{\Vert\boldsymbol{A}_k \boldsymbol{x}+\boldsymbol{b}_k\Vert_{2}}\right\}.
    \end{equation}

\subsubsection{FRTBSVM}
As we all know the traditional TSVM model is lower than the classical SVM model in terms of computational consumption. However, the TSVM model uses the $L_2$ distance metric and hinge loss to increase the effect of outliers on the classifier. To improve this limitation of TSVM, Ma {\it et al.} proposed a robust and fast TBSVM model based on the capped $L_1$-norm distance metric called FRTBSVM \citep{FRTBSVM}. 

For a given binary dataset $\mathcal{T}$ (\ref{shujuD}), the kernel version of FRTBSVM tries to find the following pair of hyperplanes
\begin{equation}\label{chaoqumian}
\boldsymbol{u}_{\pm}^{\top} \boldsymbol{x}+b_{\pm}=0, 
\end{equation}
where $\boldsymbol{u}_{\pm}\in\mathbb{R}^{n}$, $b_{\pm}\in\mathbb{R}$. 

To find two hypersurfaces (\ref{chaoqumian}), the kernel version of the FRTBSVM model is presented as follows

$\text{(C${L}_1$FRTBSVM1)}$
\begin{equation}\label{KFRTBSVM1}
	\begin{split}
&\min \sum_{i=1}^{m_{+}} \min \left(\left|\boldsymbol{u}_{+}^{\top} \boldsymbol{x}_i^{+}+b_{+}\right|,  \varepsilon\right)+\frac{c}{2}\left(\left\|\boldsymbol{u}_{+}\right\|_{2}^{2}+b_{+}^{2}\right)+C\sum_{j=1}^{m_{-}} \min (|\eta_{j}^-|,  \varepsilon)\\
&\text { s.t. }-\left(\boldsymbol{X}_-\boldsymbol{u}_{+}+\boldsymbol{e}_-b_{+}\right)= \boldsymbol{e}_--\boldsymbol{\eta}_-, \\
&\qquad \ \boldsymbol{u}_{+}\in \mathbb{R}^{n}, b_{+}\in \mathbb{R}, \boldsymbol{\eta}_-\in\mathbb{R}^{m_-}, 
    \end{split}
\end{equation}
and

$\text{(C${L}_1$FRTBSVM2)}$
\begin{equation}\label{KFRTBSVM2}
	\begin{split}
&\min \sum_{j=1}^{m_{-}} \min \left(\left|\boldsymbol{u}_{-}^{\top} \boldsymbol{x}_j^{-}+b_{-}\right|,  \varepsilon\right)+\frac{c}{2}\left(\left\|\boldsymbol{u}_{-}\right\|_{2}^{2}+b_{-}^{2}\right)+C\sum_{i=1}^{m_{+}} \min (|\eta_{i}^+|,  \varepsilon)\\
&\text { s.t. }-\left(\boldsymbol{X}_+\boldsymbol{u}_{-}+\boldsymbol{e}_+b_{-}\right)= \boldsymbol{e}_+-\boldsymbol{\eta}_+, \\
&\qquad \ \boldsymbol{u}_{-}\in \mathbb{R}^{n}, b_{-}\in \mathbb{R}, \boldsymbol{\eta}_+\in\mathbb{R}^{m_+}, 
    \end{split}
\end{equation}
where $c$ and $C$ are known regularization parameters. The optimization problems (\ref{KFRTBSVM1}) and (\ref{KFRTBSVM2}) can be solved efficiently by an iterative algorithm. Specific details of the solution process and optimization problem formulation can be found in paper \citep{FRTBSVM}. If we obtain the solutions $\boldsymbol{u}_{\pm}$ and $b_{\pm}$, then the label of a new sample can be obtained by the following decision rule 
\begin{equation}\label{Kyuce}
{\rm Class}\ \boldsymbol{x}=\underset{k=+, -}{\operatorname{argmin}}\left\{\left|\boldsymbol{u}_{\pm}^{\top} \boldsymbol{x}+b_{\pm}\right|\right\} .
    \end{equation}
		\subsection{Capped $L_1$-norm}
As mentioned in the introduction, the capped $L_1$-norm loss is effective in reducing the effect of outliers on the classifier\citep{24,27,30,32}. Because outliers in the data usually have large errors, the capped $L_1$-norm usually helps the model filter out these points during the training process. However, using the capped $L_1$-norm distance metric usually results in the model becoming a non-smooth problem. Furthermore, the mathematical formulation of the capped $L_1$-norm loss is shown as follows
\begin{equation}
L_\epsilon(u)=min(|u|,\epsilon),
\end{equation}
where $\epsilon$ is a given parameter, $\epsilon=0.5$ in Figure \ref{Figure1}.


\begin{figure}[htbp]
\centering
\includegraphics[width=8cm]{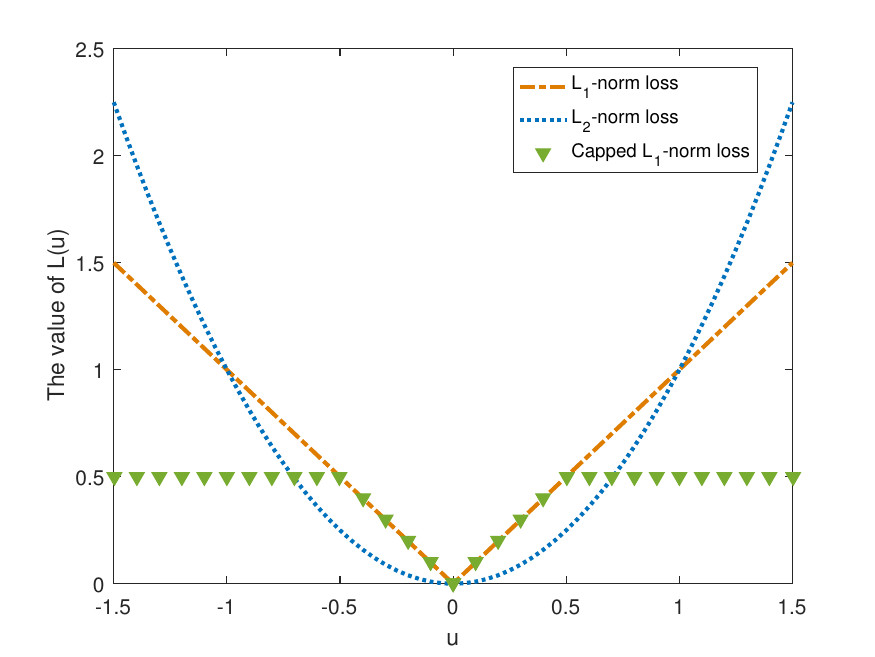}
\caption{The geometric interpretations of different norm metrics}\label{Figure1}
\end{figure}	

Figure \ref{Figure1} illustrates some geometric variations of the distance norm. As shown in Figure \ref{Figure1}, compared with the traditional $L_1$-norm loss and $L_2$-norm loss, the value of capped $L_1$-norm loss will remain constant when $u > \epsilon$. In other words, the capped $L_1$-norm loss can be used to measure noises and outliers. Therefore, capped $L_1$-norm loss is more robust than $L_1$-norm loss and $L_2$-norm loss.

\section{The proposed method}\label{3}
In this section, we first introduce the capped  $L_1$-norm quadric TSVM (C$L_1$QTSVM). Then, to solve the proposed models efficiently, we design fast iterative algorithms for it. Finally, we also analyze convergence, computational complexity and local optimal solutions of the algorithm in detail.
\subsection{C$L_1$QTSVM}
 For a given training set $\mathcal{T}$ (\ref{shujuD}), our goal is to also find two quadratic surfaces (\ref{erciqumian}), thus the mathematical expression for our C$L_1$QTSVM model is as follows

$\text{(C$L_1$QTSVM1)}$
\begin{equation}\label{CL1SLQTSVM1}
	\begin{split}
&\min \sum_{i=1}^{m_{+}} \min \left(\left|\frac{1}{2} {\boldsymbol{x}_{i}^{+}}^{\top} \boldsymbol{W}_+ \boldsymbol{x}_{i}^{+}+\boldsymbol{b}_+^{\top}\boldsymbol{x}_{i}^{+} +c_+ \right|, \varepsilon\right) +\frac{1}{2}c_1(\Vert\operatorname{hvec}(\boldsymbol{W}_{+})\Vert_{2}^{2}+\Vert\boldsymbol{b}_{+}\Vert_{2}^{2}+c_+^2)\\
&\qquad +c_2\sum_{j=1}^{m_{-}} \min \left(\left|\eta_j^-\right|, \varepsilon\right),\\
&\ \ \text{s.t.}  -\left(\frac{1}{2} {\boldsymbol{x}_{j}^{-}}^{\top} \boldsymbol{W}_+ \boldsymbol{x}_{j}^{-}+ \boldsymbol{b}_+^{\top}\boldsymbol{x}_{j}^{-}+c_+\right)=1-\eta_j^{-}, \quad j=1,2, \ldots, m_{-}, \\
&\; \qquad \boldsymbol{W}_+ \in \mathbb{S}^{n}, \boldsymbol{b}_+ \in \mathbb{R}^{n}, c_+\in\mathbb{R}, \boldsymbol{\eta}_- \in \mathbb{R}^{m_-},
     \end{split}
\end{equation}

and

$\text{(C$L_1$QTSVM2)}$
\begin{equation}\label{CL1SLQTSVM2}
	\begin{split}
&\min \sum_{j=1}^{m_{-}} \min \left(\left|\frac{1}{2} {\boldsymbol{x}_{j}^{-}}^{\top} \boldsymbol{W}_- \boldsymbol{x}_{j}^{-}+\boldsymbol{b}_-^{\top}\boldsymbol{x}_{j}^{-} +c_- \right|, \varepsilon\right) +\frac{1}{2}c_1(\Vert\operatorname{hvec}(\boldsymbol{W}_{-})\Vert_{2}^{2}+\Vert\boldsymbol{b}_{-}\Vert_{2}^{2}+c_-^2)\\
&\qquad +c_2 \sum_{i=1}^{m_{+}} \min \left(\left|\eta_i^{+}\right|, \varepsilon\right),\\
&\ \ \text{s.t.}\  \frac{1}{2} {\boldsymbol{x}_{i}^{+}}^{\top} \boldsymbol{W}_- \boldsymbol{x}_{i}^{+}+ \boldsymbol{b}_-^{\top}\boldsymbol{x}_{i}^{+}+c_-=1-\eta_i^{+}, \quad i=1,2, \ldots, m_{+}, \\
&\; \qquad \boldsymbol{W}_- \in \mathbb{S}^{n}, \boldsymbol{b}_- \in \mathbb{R}^{n}, c_-\in\mathbb{R}, \boldsymbol{\eta}_+ \in \mathbb{R}^{m_+},
     \end{split}
\end{equation}
 where $c_1$ and $c_2$ are known regularization parameters. $\varepsilon$ is a small values that is set to $10^{-5}$ in our experiments. 
 
 In the established C$L_1$-QTSVM model, the capped $L_1$-norm distance metric is used to weaken the effect of outliers on the kernel-free quadratic dual support vector machine model. Taking the objective function of the optimization problem (\ref{CL1SLQTSVM1}) as an example, we briefly introduce its construction principle. The first term of the objective function of the optimization problem (\ref{CL1SLQTSVM1}) is to make the positive points as close as possible to the positive quadratic decision hypersurface $\frac{1}{2} \boldsymbol{x}^{\top} \boldsymbol{W}_+ \boldsymbol{x}+\boldsymbol{x}^{ \top} \boldsymbol{b}_++c_+$. If the modulus of this positive point to a positive quadratic decision hypersurface is larger than $\varepsilon$, then set its metric to $\varepsilon$. The second term of the objective function of the optimization problem (\ref{CL1SLQTSVM1}) is the $L_2$ norm regularization term, which is used to improve the generalization ability of our model as well as to avoid the problem of non-existence of matrix inverses during the solution process. The third term of the objective function for the optimization problem (\ref{CL1SLQTSVM1}) is the loss term, whose value is determined by the location of the negative points. If the modulus of a negative point to a positive quadratic proximal hypersurface satisfies $\left|\frac{1}{2} \boldsymbol{x}_{j}^{\top} \boldsymbol{W}_+ \boldsymbol{x}_{j}^{-}+ \boldsymbol{b}_+^{\top}\boldsymbol{x}_{j}^{-}+c_++1\right|>\varepsilon$, the loss of the negative class point is set to $\varepsilon$. This helps to minimize the effect of outliers in the negative points on the positive quadratic decision hypersurfaces. The objective function of the optimization problem (\ref{CL1SLQTSVM2}) is built in a similar way and is not described here.

To further solve the optimization problems (\ref{CL1SLQTSVM1}) and (\ref{CL1SLQTSVM2}) , we define the following equations
\begin{equation}\label{Cgongshi1}
   \begin{split}
    \boldsymbol{w}_{\pm}\triangleq\left[\operatorname{hvec}^{\top}(\boldsymbol{W}_{\pm}), \boldsymbol{b}_{\pm}^{\top},c_{\pm}\right]^{\top}\in \mathbb{R}^{\frac{n^2+3n+2}{2}},
    \end{split}
\end{equation}

\begin{equation}\label{Cgongshi3}
   \begin{split}
   \boldsymbol{z}_{i}^{+}\triangleq\left[\operatorname{lvec}^{\top}(\boldsymbol{x}_{i}^{+}), {\boldsymbol{x}_{i}^{+}}^{\top}, 1\right]^{\top},\in \mathbb{R}^{\frac{n^2+3n+2}{2}},i=1, 2,\ldots, m_{+},
    \end{split}
\end{equation}

\begin{equation}\label{Cgongshi4}
   \begin{split}
   \boldsymbol{z}_{j}^{-}\triangleq\left[\operatorname{lvec}^{\top}(\boldsymbol{x}_{j}^{-}), {\boldsymbol{x}_{j}^{-}}^{\top}, 1\right]^{\top},\in \mathbb{R}^{\frac{n^2+3n+2}{2}},j=1, 2,\ldots, m_{-}.
    \end{split}
\end{equation}

By the definition of Eqs. (\ref{Cgongshi1})-(\ref{Cgongshi4}), we simplify the above two optimization problems as follows

$\text{(C$L_1$QTSVM1)}$
\begin{equation}\label{CL1SLQTSVM11}
	\begin{split}
\min \ & \ \sum_{i=1}^{m_{+}} \min \left(\left|\boldsymbol{w}_+^{\top}\boldsymbol{z}_i^+\right|, \varepsilon\right) +\frac{1}{2}c_1\Vert\boldsymbol{w}_{+}\Vert_{2}^{2}+c_2\sum_{j=1}^{m_{-}} \min \left(\left|\eta_j^-\right|, \varepsilon\right), \\	
\text{s.t.}& -\boldsymbol{w}_{+}^{\top}\boldsymbol{z}_{j}^{-}=1-\eta_{j}^{-},\quad j=1,2, \ldots, m_{-},\\
&\ \boldsymbol{w}_+ \in \mathbb{R}^{\frac{n^2+3n+2}{2}}, \boldsymbol{\eta}_- \in \mathbb{R}^{m_-},
     \end{split}
\end{equation}

and

$\text{(C$L_1$QTSVM2)}$
\begin{equation}\label{CL1SLQTSVM22}
	\begin{split}
\min \ & \ \sum_{j=1}^{m_{-}} \min \left(\left|\boldsymbol{w}_-^{\top}\boldsymbol{z}_j^+\right|, \varepsilon\right) +\frac{1}{2}c_1\Vert\boldsymbol{w}_{-}\Vert_{2}^{2}+c_2\sum_{i=1}^{m_{+}} \min \left(\left|\eta_i^+\right|, \varepsilon\right), \\	
\text{s.t.}& \; \boldsymbol{w}_{-}^{\top}\boldsymbol{z}_{i}^{+}=1-\eta_{i}^{+},\quad i=1,2, \ldots, m_{+},\\
&\ \boldsymbol{w}_- \in \mathbb{R}^{\frac{n^2+3n+2}{2}}, \boldsymbol{\eta}_+ \in \mathbb{R}^{m_+},
     \end{split}
\end{equation}

Furthermore, we note that it is difficult to solve the above optimization problems (\ref{CL1SLQTSVM11}) and (\ref{CL1SLQTSVM22}) by conventional convex optimization methods. By employing the re-weighted trick \citep{24,27,30}, we can reformulate the above two optimization problems into the following two approximate optimization problems

\text{(C$L_1$QTSVM1)}
\begin{equation}\label{minCL1SLQTSVM111}
  \begin{split}
 \min _{\boldsymbol{w}_{+},\boldsymbol{\eta}_{-}} \frac{1}{2}\left(\boldsymbol{G}_+^{\top}\boldsymbol{w}_+\right)^{\top}\boldsymbol{Q}\boldsymbol{G}_+^{\top}\boldsymbol{w}_+ +\frac{1}{2}c_1\Vert\boldsymbol{w}_{+}\Vert_{2}^{2}
+\frac{1}{2}c_2\boldsymbol{\eta}_-^{\top}\boldsymbol{U}\boldsymbol{\eta}_-,
  \end{split}
\end{equation}
where $\boldsymbol{Z}_{+}\triangleq\left[\boldsymbol{z}_{1}^{+},\ldots,\boldsymbol{z}_{m_+}^{+}\right]\in \mathbb{R}^{(\frac{n^2+3n+2}{2})\times m_+}$, and $\boldsymbol{Z}_{-}\triangleq\left[\boldsymbol{z}_{1}^{-},\ldots,\boldsymbol{z}_{m_-}^{-}\right]\in \mathbb{R}^{(\frac{n^2+3n+2}{2})\times m_-}$. $\boldsymbol{Q}\in\mathbb{R}^{m_+\times m_+}$ and $\boldsymbol{U}\in\mathbb{R}^{m_-\times m_-}$ are two diagonal matrices whose diagonal elements are defined as
\begin{equation}\label{q}
q_i=\left\{\begin{array}{ccc}
\frac{1}{\left|\boldsymbol{w}_{+}^{\top}\boldsymbol{z}_{i}^{+}\right|}, & \left|\boldsymbol{w}_{+}^{\top}\boldsymbol{z}_{i}^{+}\right|\leq \varepsilon,\\
\varepsilon, & {\rm otherwise}.
\end{array}\right. ,
\end{equation}
and
\begin{equation}\label{u}
u_j=\left\{\begin{array}{ccc}
\frac{1}{\left|\eta_j^{-}\right|}, & \left|\eta_j^{-}\right| \leq \varepsilon,\\
\varepsilon, & {\rm otherwise}.
\end{array}\right. .
\end{equation}

\text{(C$L_1$QTSVM2)}
\begin{equation}\label{minCL1SLQTSVM222}
  \begin{split}
 \min _{\boldsymbol{w}_{-},\boldsymbol{\eta}_{+}} \frac{1}{2}\left(\boldsymbol{G}_-^{\top}\boldsymbol{w}_-\right)^{\top}\boldsymbol{F}\boldsymbol{G}_-^{\top}\boldsymbol{w}_- +\frac{1}{2}c_1\Vert\boldsymbol{w}_{-}\Vert_{2}^{2}
+\frac{1}{2}c_2\boldsymbol{\eta}_+^{\top}\boldsymbol{G}\boldsymbol{\eta}_+,
  \end{split}
\end{equation}
where $\boldsymbol{F}\in\mathbb{R}^{m_-\times m_-}$ and $\boldsymbol{G}\in\mathbb{R}^{m_+\times m_+}$ are two diagonal matrices whose diagonal elements are defined as
\begin{equation}\label{f}
f_j=\left\{\begin{array}{ccc}
\frac{1}{\left|\boldsymbol{w}_{-}^{\top}\boldsymbol{z}_{j}^{-}\right|}, & \left|\boldsymbol{w}_{-}^{\top}\boldsymbol{z}_{j}^{-}\right|\leq \varepsilon,\\
\varepsilon, & {\rm otherwise}.
\end{array}\right. ,
\end{equation}
and
\begin{equation}\label{z}
g_i=\left\{\begin{array}{ccc}
\frac{1}{\left|\eta_i^{+}\right|}, & \left|\eta_i^{+}\right|  \leq \varepsilon,\\
\varepsilon, & {\rm otherwise}.
\end{array}\right. .
\end{equation}

\noindent\textbf{Remark 1.} It is worth noting that in the objective function of the optimization problem (\ref{minCL1SLQTSVM111}), the matrices $\boldsymbol{Q}$ and $\boldsymbol{U}$ are used to ``discard" outliers from the data. If positive points are far from the positive decision quadratic surface, then these points are considered positive outliers and are discarded. Similarly, if negative points are far from the positive proximal quadratic surface, they are also considered negative outliers and are excluded. The matrices $\boldsymbol{F}$ and $\boldsymbol{G}$ are interpreted similarly.

To solve the optimization problems (\ref{minCL1SLQTSVM111}) and (\ref{minCL1SLQTSVM222}), we first equivalently convert them into the following form
\begin{equation}\label{minCL1SLQTSVM1111}
  \begin{split}
\text{(C$L_1$QTSVM1)} \min _{\boldsymbol{w}_{+}} & \ J_+(\boldsymbol{w}_{+})\triangleq \frac{1}{2}\left(\boldsymbol{Z}_+^{\top}\boldsymbol{w}_+\right)^{\top}\boldsymbol{Q}\boldsymbol{Z}_+^{\top}\boldsymbol{w}_+ +\frac{1}{2}c_1\Vert\boldsymbol{w}_{+}\Vert_{2}^{2}\\
&+\frac{1}{2}c_2\left(\boldsymbol{Z}_-^{\top}\boldsymbol{w}_++\boldsymbol{e}_-\right)^{\top}\boldsymbol{U}\left(\boldsymbol{Z}_-^{\top}\boldsymbol{w}_++\boldsymbol{e}_-\right),
  \end{split}
\end{equation}
and
\begin{equation}\label{minCL1SLQTSVM2222}
  \begin{split}
\text{(C$L_1$QTSVM2)} \min _{\boldsymbol{w}_{-}} & \ J_-(\boldsymbol{w}_{-})\triangleq \frac{1}{2}\left(\boldsymbol{Z}_-^{\top}\boldsymbol{w}_-\right)^{\top}\boldsymbol{F}\boldsymbol{Z}_-^{\top}\boldsymbol{w}_- +\frac{1}{2}c_1\Vert\boldsymbol{w}_{-}\Vert_{2}^{2}\\
&+\frac{1}{2}c_2\left(\boldsymbol{e}_+-\boldsymbol{Z}_+^{\top}\boldsymbol{w}_-\right)^{\top}\boldsymbol{G}\left(\boldsymbol{e}_+-\boldsymbol{Z}_+^{\top}\boldsymbol{w}_-\right).
  \end{split}
\end{equation}

And to obtain the gradient for the optimization problems (\ref{minCL1SLQTSVM1111}) and (\ref{minCL1SLQTSVM2222}) and to set them to zero, we have
\begin{equation}\label{CL1SLQTSVM1111}
\nabla J_{+}\left(\boldsymbol{w}_{+}\right)= \boldsymbol{Z}_+\boldsymbol{Q}\boldsymbol{Z}_+^{\top}\boldsymbol{w}_{+}+c_1\boldsymbol{I}\boldsymbol{w}_{+}+c_2\boldsymbol{Z}_-\boldsymbol{U}\left(\boldsymbol{Z}_-^{\top}\boldsymbol{w}_++\boldsymbol{e}_-\right)=\boldsymbol{0},
\end{equation}
and
\begin{equation}\label{CL1SLQTSVM2222}
\nabla J_{-}\left(\boldsymbol{w}_{-}\right)= \boldsymbol{Z}_-\boldsymbol{F}\boldsymbol{Z}_-^{\top}\boldsymbol{w}_{-}+c_1\boldsymbol{I}\boldsymbol{w}_{-}-c_2\boldsymbol{Z}_+\boldsymbol{G}\left(\boldsymbol{e}_+-\boldsymbol{Z}_+^{\top}\boldsymbol{w}_-\right)=\boldsymbol{0}.
\end{equation}

To obtain the final iterative equations, we first define the following equations
\begin{equation}
\boldsymbol{M}_-^{t}\triangleq \boldsymbol{Y}^{t}-\boldsymbol{Y}^{t}\boldsymbol{Z}_-\left(\frac{1}{c_2}(\boldsymbol{U}^{t})^{-1}+\boldsymbol{Z}_-^{\top}\boldsymbol{Y}^{t}\boldsymbol{Z}_-\right)^{-1}\boldsymbol{Z}_-^{\top}\boldsymbol{Y}^{t},
\end{equation}
\begin{equation}
\boldsymbol{Y}^{t}\triangleq \frac{1}{c_1}\left(\boldsymbol{I}-\boldsymbol{Z}_+\left(c_1\boldsymbol{Q}^{t}+\boldsymbol{Z}_+^{\top}\boldsymbol{Z}_+\right)^{-1}\boldsymbol{Z}_+^{\top}\right),
\end{equation}
\begin{equation}
\boldsymbol{M}_+^{t}\triangleq \boldsymbol{H}^{t}-\boldsymbol{H}^{t}\boldsymbol{Z}_+\left(\frac{1}{c_2}(\boldsymbol{G}^{t})^{-1}+\boldsymbol{Z}_+^{\top}\boldsymbol{H}^{t}\boldsymbol{Z}_+\right)^{-1}\boldsymbol{Z}_+^{\top}\boldsymbol{H}^{t},
\end{equation}
\begin{equation}\label{45}
\boldsymbol{H}^{t}\triangleq \frac{1}{c_1}\left(\boldsymbol{I}-\boldsymbol{Z}_-\left(c_1\boldsymbol{F}^{t}+\boldsymbol{Z}_-^{\top}\boldsymbol{Z}_-\right)^{-1}\boldsymbol{Z}_-^{\top}\right).
\end{equation}

And then based on Eqs. (\ref{CL1SLQTSVM1111})-(\ref{45}) and Sherman-Morrison-Woodbury (SMW) theorem \citep{LSTSVM}, we propose the following iterative formulas with respect to the variables $\boldsymbol{w}_{+}$ and $\boldsymbol{w}_{-}$
\begin{equation}\label{jieCL1SLQTSVM11}
\boldsymbol{w}_{+}^{t+1}=\left\{\begin{array}{ccc}
-c_2\boldsymbol{M}_-^{t}\boldsymbol{Z}_-\boldsymbol{U}^{t}\boldsymbol{e}_-   , & \frac{n^2+3n+2}{2}>m_-,\\
 -c_2\left(\boldsymbol{Z}_+\boldsymbol{Q}^{t}\boldsymbol{Z}_+^{\top}+c_1\boldsymbol{I}+c_2\boldsymbol{Z}_-\boldsymbol{U}^{t}\boldsymbol{Z}_-^{\top}\right)^{-1}\boldsymbol{Z}_-\boldsymbol{U}^{t}\boldsymbol{e}_-   ,& {\rm otherwise},
\end{array}\right.,
\end{equation}
and
\begin{equation}\label{jieCL1SLQTSVM12}
\boldsymbol{w}_{-}^{t+1}=\left\{\begin{array}{ccc}
c_2\boldsymbol{M}_+^{t}\boldsymbol{Z}_+\boldsymbol{G}^{t}\boldsymbol{e}_+   , & \frac{n^2+3n+2}{2}>m_+,\\
 c_2\left(\boldsymbol{Z}_-\boldsymbol{F}^{t}\boldsymbol{Z}_-^{\top}+c_1\boldsymbol{I}+c_2\boldsymbol{Z}_+\boldsymbol{G}^{t}\boldsymbol{Z}_+^{\top}\right)^{-1}\boldsymbol{Z}_+\boldsymbol{G}^{t}\boldsymbol{e}_+   ,& {\rm otherwise},
\end{array}\right..
\end{equation}
where $t$ is the iteration number. By given initial values $\boldsymbol{w}_{+}^{0}$ and $\boldsymbol{w}_{-}^{0}$, the iterative process works by repeating Eqs. (\ref{jieCL1SLQTSVM11}) and (\ref{jieCL1SLQTSVM12}) until convergence. Thus, if solutions $\boldsymbol{w}_{+}$ and $\boldsymbol{w}_{-}$ are obtained, the two quadratic surfaces (\ref{erciqumian}) are also found. For a new sample $\boldsymbol{x}$, its label can be predicted by the decision function (\ref{juece}).

Then we briefly summarize the pseudo-code algorithmic framework for the C$L_1$QTSVM1 model as follows, and the algorithm for the C$L_1$QTSVM2 model is similar.
\begin{algorithm}
\caption{C$L_1$QTSVM1}\label{suanfaCL1SLDWPTSVM1}
\begin{algorithmic}[1]
\Require Training set $\mathcal{T}$ (\ref{shujuD}), convergence precision $\epsilon=10^{-8}$, maximum iteration number $N=30$, parameter $c_1$, $c_2$, threshold parameter $\varepsilon=10^{-8}$, model parameter $\boldsymbol{w}_{+}^{0}=\boldsymbol0_{(\frac{n^2+3n+2}{2})\times1}$, iteration number $t=0$.
\Ensure $\boldsymbol{W}_{+}$, $\boldsymbol{b}_{+}$,$c_+$.
\State   $\boldsymbol{z}_{i}^{+}\triangleq\left[\operatorname{hvec}^{\top}(\boldsymbol{x}_{i}), \boldsymbol{x}_{i}^{\top}, 1\right]^{\top}, \boldsymbol{z}_{j}^{-}\triangleq\left[\operatorname{hvec}^{\top}(\boldsymbol{x}_{j}), \boldsymbol{x}_{j}^{\top}, 1\right]^{\top}$.
\State $\boldsymbol{Z}_{+}\triangleq\left[\boldsymbol{z}_{1}^{+}, \ldots, \boldsymbol{z}_{m_+}^{+}\right]$,$\boldsymbol{Z}_{-}\triangleq\left[\boldsymbol{z}_{1}^{+}, \ldots, \boldsymbol{z}_{m_-}^{-}\right]$.
 \While{$t \leq N$}
  \For{$i \gets 1$ to $m_{+}$}
         \If{$\left|{\boldsymbol{z}_{i}^{+}}^{\top}\boldsymbol{w}_{+}^{t}\right|\leq \varepsilon$}
             \State $z_i^t\gets\frac{1}{\left|{\boldsymbol{z}_{i}^{+}}^{\top}\boldsymbol{w}_{+}^{t}\right|};$
        \Else
             \State  $z_i^t\gets\varepsilon.$
        \EndIf
 \EndFor
 \For{$j \gets 1$ to $m_{-}$}
      \If{$\left|1+{\boldsymbol{z}_{j}^{-}}^{\top}\boldsymbol{w}_{+}^{t}\right|\leq \varepsilon$}
             \State $u_j^t\gets\frac{1}{\left|1+{\boldsymbol{z}_{j}^{-}}^{\top}\boldsymbol{w}_{+}^{t}\right|};$
        \Else
             \State  $u_j^t\gets\varepsilon.$
        \EndIf
 \EndFor
 \State $\boldsymbol{Y}^{t}\gets \frac{1}{c_1}\left(\boldsymbol{I}-\boldsymbol{Z}_+\left(c_1\boldsymbol{Q}^{t}+\boldsymbol{Z}_+^{\top}\boldsymbol{Z}_+\right)^{-1}\boldsymbol{Z}_+^{\top}\right)$.
 \State $\boldsymbol{M}_-^{t}\gets \boldsymbol{Y}^{t}-\boldsymbol{Y}^{t}\boldsymbol{Z}_-\left(\frac{1}{c_2}(\boldsymbol{U}^{t})^{-1}+\boldsymbol{Z}_-^{\top}\boldsymbol{Y}^{t}\boldsymbol{Z}_-\right)^{-1}\boldsymbol{Z}_-^{\top}\boldsymbol{Y}^{t}.$
\If{$\frac{n^2+3n+2}{2}>m_-$}
             \State $\boldsymbol{w}_{+}^{t+1}\gets-c_2\boldsymbol{M}_-^{t}\boldsymbol{Z}_-\boldsymbol{U}^{t}\boldsymbol{e}_-;$
        \Else
             \State  $\boldsymbol{w}_{+}^{t+1}\gets-c_2\left(\boldsymbol{Z}_+\boldsymbol{Q}^{t}\boldsymbol{Z}_+^{\top}+c_1\boldsymbol{I}+c_2\boldsymbol{Z}_-\boldsymbol{U}^{t}\boldsymbol{Z}_-^{\top}\right)^{-1}\boldsymbol{Z}_-\boldsymbol{U}^{t}\boldsymbol{e}_- .$
        \EndIf
  \State$t\gets t+1$.
 \EndWhile
 \State $\left[\operatorname{hvec}^{\top}(\boldsymbol{W}_{+}), \boldsymbol{b}_{+}^{\top},c_+\right]^{\top}\gets\boldsymbol{w}_{+}^{t+1}.$
\end{algorithmic}
\end{algorithm}

\subsection{Convergence analysis}
In this subsection we analyze the convergence of the proposed Algorithm \ref{suanfaCL1SLDWPTSVM1} in detail. In particular, the convergence of Algorithm \ref{suanfaCL1SLDWPTSVM1} is given by Theorem \ref{dingli1}. To further prove Theorem \ref{dingli1} in detail, we first give the following lemma \ref{yinli1}.
\begin{lemma}\label{yinli1}
For any non-negative variables $x,y\in\mathbb{R}_+$, the following inequality holds:
\begin{equation}
\sqrt{x}-\frac{x}{2\sqrt{y}}\leq \sqrt{y}-\frac{y}{2\sqrt{y}}.
\end{equation}
\end{lemma}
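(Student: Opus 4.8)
The plan is to reduce this inequality to a perfect-square statement by a change of variables. Since the right-hand side involves division by $\sqrt{y}$, I first note that the inequality is meaningful only for $y>0$ (the case $y=0$ being excluded by the division), so I assume $y>0$ throughout. The immediate observation is that the right-hand side simplifies dramatically: $\sqrt{y}-\frac{y}{2\sqrt{y}}=\sqrt{y}-\frac{\sqrt{y}}{2}=\frac{\sqrt{y}}{2}$. Hence the claim is equivalent to
\begin{equation}
\sqrt{x}-\frac{x}{2\sqrt{y}}\leq \frac{\sqrt{y}}{2}.
\end{equation}

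Next I would clear denominators. Multiplying both sides by $2\sqrt{y}>0$ preserves the inequality and yields the equivalent statement $2\sqrt{y}\sqrt{x}-x\leq y$, which rearranges to $0\leq y-2\sqrt{x}\sqrt{y}+x$. Setting $a=\sqrt{x}\geq 0$ and $b=\sqrt{y}>0$, the right-hand side is exactly $a^{2}-2ab+b^{2}=(a-b)^{2}=(\sqrt{x}-\sqrt{y})^{2}\geq 0$, which holds for all admissible $x,y$. Tracing the chain of equivalences backward then establishes the lemma.

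There is no genuine obstacle here: every step is a reversible algebraic manipulation, and the only point requiring minor care is the implicit assumption $y>0$ so that $\sqrt{y}$ appears in a denominator legitimately; this is consistent with how the lemma is later applied, since the re-weighting matrices in Eqs.~(\ref{q})--(\ref{z}) place the reciprocals $1/|\cdot|$ only at nonzero arguments. I would present the proof in the concise ``reduce to $(\sqrt{x}-\sqrt{y})^{2}\geq 0$'' form, as this is the cleanest exposition and makes transparent why equality holds precisely when $x=y$, a fact that will be convenient when the lemma is invoked to show the monotone decrease of the objective across iterations in Theorem~\ref{dingli1}.
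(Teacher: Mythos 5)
Your proposal is correct and is essentially the paper's own proof run in reverse: the paper starts from $\left(\sqrt{x}-\sqrt{y}\right)^{2}\geq 0$, expands to $x-2\sqrt{xy}+y\geq 0$, and divides by $2\sqrt{y}$ to reach $\sqrt{x}-\frac{x}{2\sqrt{y}}\leq\frac{\sqrt{y}}{2}=\sqrt{y}-\frac{y}{2\sqrt{y}}$, which is exactly your chain of reversible equivalences read backward. Your explicit remark that $y>0$ is needed for the denominator is a small but worthwhile point of care that the paper leaves implicit.
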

\begin{proof}
 According to the inequality $\left(\sqrt{x}-\sqrt{y}\right)^2\geq0$,we have
 \begin{equation}
  \begin{split}
\left(\sqrt{x}-\sqrt{y}\right)^{2} \geq 0 & \Rightarrow x-2 \sqrt{x y}+y \geq 0 \Rightarrow \sqrt{x}-\frac{x}{2 \sqrt{y}} \leq \frac{\sqrt{y}}{2} \\
& \Rightarrow \sqrt{x}-\frac{x}{2\sqrt{y}} \leq \sqrt{y}-\frac{y}{2 \sqrt{y}}.
  \end{split}
 \end{equation}
\end{proof}

\begin{lemma}\label{yinli2}
For any non-negative variables $x,y,z,t,c\in\mathbb{R}_+$, the following inequality holds:
\begin{equation}
\sqrt{x}-\frac{x}{2\sqrt{y}}+c\left(\sqrt{z}-\frac{z}{2\sqrt{t}}\right)\leq \sqrt{y}-\frac{y}{2\sqrt{y}}+c\left(\sqrt{t}-\frac{t}{2\sqrt{t}}\right).
\end{equation}
\end{lemma}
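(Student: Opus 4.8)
The plan is to derive Lemma~\ref{yinli2} directly from the already-established Lemma~\ref{yinli1} by splitting both sides of the desired inequality into a block depending only on the pair $(x,y)$ and a block depending only on the pair $(z,t)$, and bounding each block separately. Since the two blocks are structurally identical up to renaming of variables and a nonnegative scaling factor $c$, no new analytic work beyond Lemma~\ref{yinli1} is required.

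First I would apply Lemma~\ref{yinli1} to the pair $(x,y)$ to obtain $\sqrt{x}-\frac{x}{2\sqrt{y}}\leq \sqrt{y}-\frac{y}{2\sqrt{y}}$. Then I would invoke the very same lemma with $(z,t)$ substituted for $(x,y)$, which yields $\sqrt{z}-\frac{z}{2\sqrt{t}}\leq \sqrt{t}-\frac{t}{2\sqrt{t}}$. Because $c\in\mathbb{R}_+$ is nonnegative, multiplying this second inequality through by $c$ preserves its direction, giving $c\bigl(\sqrt{z}-\frac{z}{2\sqrt{t}}\bigr)\leq c\bigl(\sqrt{t}-\frac{t}{2\sqrt{t}}\bigr)$.

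Finally I would add the two inequalities term by term: the sum of the two left-hand sides is precisely the left-hand side of the claim, and the sum of the two right-hand sides is precisely its right-hand side, which completes the argument. The only point requiring care — and the step I would flag most explicitly, though it is elementary — is the nonnegativity of $c$, which is exactly what guarantees that the scaling in the second step does not reverse the inequality; this is also why the hypothesis $c\in\mathbb{R}_+$ is built into the statement. The degenerate cases $y=0$ or $t=0$ are inherited from the convention already implicit in Lemma~\ref{yinli1}, so they need no separate treatment here.
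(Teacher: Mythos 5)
Your proposal is correct and follows essentially the same route as the paper: apply Lemma~\ref{yinli1} separately to the pairs $(x,y)$ and $(z,t)$, scale the second inequality by the nonnegative constant $c$, and add the results. Your explicit remark about why $c\in\mathbb{R}_+$ is needed is a small clarity improvement over the paper's version, but the argument is identical in substance.
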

\begin{proof}
From Lemma \ref{yinli1}, the following inequality holds
\begin{equation}\label{yinli22}
c\left(\sqrt{z}-\frac{z}{2\sqrt{t}}\right)\leq c\left(\sqrt{t}-\frac{t}{2\sqrt{t}}\right).
\end{equation}

Thus based on Lemma \ref{yinli1} and Eq. (\ref{yinli22}), we have
\begin{equation}
\sqrt{x}-\frac{x}{2\sqrt{y}}+c\left(\sqrt{z}-\frac{z}{2\sqrt{t}}\right)\leq \sqrt{y}-\frac{y}{2\sqrt{y}}+c\left(\sqrt{t}-\frac{t}{2\sqrt{t}}\right).
\end{equation}
\end{proof}

Next, we will prove the convergence of Algorithm \ref{suanfaCL1SLDWPTSVM1} based on the optimization problem (\ref{CL1SLQTSVM1}). Similarly, the proposed Algorithm \ref{suanfaCL1SLDWPTSVM1} to solve the optimization problem (\ref{CL1SLQTSVM2}) is also convergent.
\begin{theorem}\label{dingli1}
Algorithm \ref{suanfaCL1SLDWPTSVM1} will decrease the objective value of the optimization problem (\ref{CL1SLQTSVM1}) with each iteration until it converges.
\end{theorem}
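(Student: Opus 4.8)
The plan is to read Algorithm~\ref{suanfaCL1SLDWPTSVM1} as a majorization--minimization (re-weighted least squares) scheme and to show that the true capped objective strictly decreases at every iteration. Substituting the equality constraint $\eta_j^-=1+(\boldsymbol{w}_+)^{\top}\boldsymbol{z}_j^-$ into (\ref{CL1SLQTSVM1}), I would write the genuine objective of (\ref{CL1SLQTSVM11}) as
\[
F(\boldsymbol{w}_+) = \sum_{i=1}^{m_+}\min\!\left(|(\boldsymbol{w}_+)^{\top}\boldsymbol{z}_i^+|,\varepsilon\right) + \tfrac{1}{2}c_1\|\boldsymbol{w}_+\|_2^2 + c_2\sum_{j=1}^{m_-}\min\!\left(|1+(\boldsymbol{w}_+)^{\top}\boldsymbol{z}_j^-|,\varepsilon\right),
\]
and let $J_+^{t}$ denote the quadratic surrogate (\ref{minCL1SLQTSVM1111}) with the diagonal weights $\boldsymbol{Q}^t,\boldsymbol{U}^t$ frozen at $\boldsymbol{w}_+^t$. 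The first step is immediate: since $J_+^t$ is a convex, $c_1$-regularized quadratic and $\boldsymbol{w}_+^{t+1}$ is its exact minimizer (the zero-gradient equation (\ref{CL1SLQTSVM1111}) producing the closed form (\ref{jieCL1SLQTSVM11}) via the SMW identity), we obtain $J_+^{t}(\boldsymbol{w}_+^{t+1})\le J_+^{t}(\boldsymbol{w}_+^{t})$.

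The second step turns this surrogate decrease into a decrease of $F$, and this is where Lemmas~\ref{yinli1} and~\ref{yinli2} enter. For an active positive index ($|(\boldsymbol{w}_+^t)^{\top}\boldsymbol{z}_i^+|\le\varepsilon$, so $q_i^t=1/|(\boldsymbol{w}_+^t)^{\top}\boldsymbol{z}_i^+|$) I would set $x=((\boldsymbol{w}_+^{t+1})^{\top}\boldsymbol{z}_i^+)^2$ and $y=((\boldsymbol{w}_+^{t})^{\top}\boldsymbol{z}_i^+)^2$; then $\sqrt{x}=|(\boldsymbol{w}_+^{t+1})^{\top}\boldsymbol{z}_i^+|$ and Lemma~\ref{yinli1} reads $\sqrt{x}-\tfrac12 q_i^t x\le\tfrac12\sqrt{y}$, i.e.\ the true absolute residual is bounded by its surrogate quadratic term plus half the old residual. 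Applying the same bound to the $c_2$-weighted negative residuals and adding the regularizer, Lemma~\ref{yinli2} (with multiplier $c=c_2$) combines the two groups into
\[
\sum_i|(\boldsymbol{w}_+^{t+1})^{\top}\boldsymbol{z}_i^+| + \tfrac{c_1}{2}\|\boldsymbol{w}_+^{t+1}\|_2^2 + c_2\sum_j|1+(\boldsymbol{w}_+^{t+1})^{\top}\boldsymbol{z}_j^-| \le J_+^{t}(\boldsymbol{w}_+^{t+1}) + \tfrac12\!\left(\sum_i|(\boldsymbol{w}_+^{t})^{\top}\boldsymbol{z}_i^+| + c_2\sum_j|1+(\boldsymbol{w}_+^{t})^{\top}\boldsymbol{z}_j^-|\right).
\]
Because $J_+^{t}(\boldsymbol{w}_+^{t})$ itself collapses to $\tfrac12(\sum_i|(\boldsymbol{w}_+^{t})^{\top}\boldsymbol{z}_i^+|+c_2\sum_j|1+(\boldsymbol{w}_+^{t})^{\top}\boldsymbol{z}_j^-|)+\tfrac{c_1}{2}\|\boldsymbol{w}_+^{t}\|_2^2$ (each weighted square $q_i^t y$ returns $\sqrt{y}$), substituting $J_+^{t}(\boldsymbol{w}_+^{t+1})\le J_+^{t}(\boldsymbol{w}_+^{t})$ makes the right-hand side telescope to exactly the $L_1$-type value of $\boldsymbol{w}_+^t$, which equals $F(\boldsymbol{w}_+^t)$ on the active set. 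Since $\min(\cdot,\varepsilon)\le|\cdot|$ on the left, this yields $F(\boldsymbol{w}_+^{t+1})\le F(\boldsymbol{w}_+^t)$.

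For convergence it then suffices to note that $F\ge0$ (each of the three summands is nonnegative), so the monotone nonincreasing sequence $\{F(\boldsymbol{w}_+^t)\}$ is bounded below and converges. I expect the main obstacle to be the rigorous treatment of the \emph{capped} (outlier) indices, where $|\cdot|>\varepsilon$ and the weight is clamped instead of equal to $1/|\cdot|$, so Lemma~\ref{yinli1} does not apply verbatim: there I would argue directly that the capped contribution is $\varepsilon$ both before and after the update whenever the residual stays above the threshold (hence cannot increase), and track separately the indices that \emph{cross} $\varepsilon$ between consecutive iterates, for which $\min(|\cdot|,\varepsilon)$ can only drop. A secondary point needing care is confirming that the two-branch closed form (\ref{jieCL1SLQTSVM11}) really returns the global minimizer of the surrogate in both regimes $\tfrac{n^2+3n+2}{2}\gtrless m_-$, which is what legitimizes the descent inequality $J_+^{t}(\boldsymbol{w}_+^{t+1})\le J_+^{t}(\boldsymbol{w}_+^{t})$ used throughout.
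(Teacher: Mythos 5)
Your proposal follows essentially the same route as the paper's own proof: exact minimization of the frozen-weight quadratic surrogate gives $J_+^{t}(\boldsymbol{w}_+^{t+1})\le J_+^{t}(\boldsymbol{w}_+^{t})$, and Lemmas~\ref{yinli1} and~\ref{yinli2} with $\sqrt{x}=|(\boldsymbol{w}_+^{t+1})^{\top}\boldsymbol{z}_i^+|$, $\sqrt{y}=|(\boldsymbol{w}_+^{t})^{\top}\boldsymbol{z}_i^+|$, $c=c_2$ convert that surrogate decrease into a decrease of the $L_1$-type objective and hence of the capped objective, exactly as in the paper's chain of inequalities. The one place you are actually more careful than the paper is the treatment of the capped (outlier) indices and of indices that cross the threshold $\varepsilon$ between iterates, which the paper dismisses with a one-line claim that the objective is unchanged there; your plan to track those index sets separately addresses a genuine looseness in the published argument rather than a defect in yours.
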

\begin{proof}
First recall that the optimization problem (\ref{CL1SLQTSVM1}) is as follows
\begin{equation}\label{1CL1SLQTSVM1}
	\begin{split}
J&=\min_{\boldsymbol{W}_+,\boldsymbol{b}_+,c_+,\boldsymbol{\eta}_-} \sum_{i=1}^{m_{+}} \min \left(\left|\frac{1}{2} {\boldsymbol{x}_{i}^{+}}^{\top} \boldsymbol{W}_+ \boldsymbol{x}_{i}^{+}+\boldsymbol{b}_+^{\top}\boldsymbol{x}_{i}^{+} +c_+ \right|, \varepsilon\right)+c_2\sum_{j=1}^{m_{-}} \min \left(\left|\eta_j^-\right|, \varepsilon\right)\\
&\qquad \qquad \qquad +\frac{1}{2}c_1(\Vert\operatorname{hvec}(\boldsymbol{W}_{+})\Vert_{2}^{2}+\Vert\boldsymbol{b}_{+}\Vert_{2}^{2}+c_+^2),\\
&=\min_{\boldsymbol{w}_1^+,\boldsymbol{w}_2^+,\boldsymbol{\eta}_-} \sum_{i=1}^{m_{+}} \min \left(\left|{\boldsymbol{w}_1^+}^{\top}\boldsymbol{l}_i^++{\boldsymbol{w}_2^+}^{\top}\boldsymbol{s}_i^+\right|, \varepsilon\right) +c_2\sum_{j=1}^{m_{-}} \min \left(\left|\eta_j^-\right|, \varepsilon\right)\\ &\qquad \qquad \quad +\frac{1}{2}c_1\left(\Vert\boldsymbol{w}_{1}^{+}\Vert_{2}^{2}+\Vert\boldsymbol{w}_{2}^{+}\Vert_{2}^{2}\right),
     \end{split}
\end{equation}
where $\boldsymbol{w}_{1}^{+}\triangleq \operatorname{hvec}(\boldsymbol{W}_{+})\in \mathbb{R}^{\frac{n^2+n}{2}}$,$\boldsymbol{w}_{2}^{+}\triangleq\left[ \boldsymbol{b}_{+}^{\top},c_+\right]^{\top}\in \mathbb{R}^{n+1}$, $\boldsymbol{l}_{i}^{+}\triangleq \operatorname{lvec}(\boldsymbol{x}_{i}^{+})\in \mathbb{R}^{\frac{n^2+n}{2}}$,$\boldsymbol{s}_{i}^{+}\triangleq\left[ {\boldsymbol{x}_{i}^{+}}^{\top}, 1\right]^{\top}\in \mathbb{R}^{n+1},i=1, 2,\ldots, m_{+}$. Note that $\boldsymbol{w}_{1}^{+}$ contains nonlinear information in the data, while $\boldsymbol{w}_{2}^{+}$ contains linear information in the data; when $\boldsymbol{w}_{1}^{+}=\boldsymbol 0$, the optimization problem (\ref{CL1SLQTSVM1}) degenerates to a linear method, which reflects the nonlinear nature of our model.

When $\left|{\boldsymbol{w}_1^+}^{\top}\boldsymbol{l}_i^++{\boldsymbol{w}_2^+}^{\top}\boldsymbol{s}_i^+\right|<\varepsilon$ and $\left|\eta_j^-\right|<\varepsilon$, Eq. (\ref{1CL1SLQTSVM1}) is equivalent to
\begin{equation}\label{11CL1SLQTSVM1}
	\begin{split}
J=\min_{\boldsymbol{w}_1^+,\boldsymbol{w}_2^+,\boldsymbol{\eta}_-} \sum_{i=1}^{m_{+}} \left|{\boldsymbol{w}_1^+}^{\top}\boldsymbol{l}_i^++{\boldsymbol{w}_2^+}^{\top}\boldsymbol{s}_i^+\right|+c_2\sum_{j=1}^{m_{-}} \left|\eta_j^-\right| +\frac{1}{2}c_1\left(\Vert\boldsymbol{w}_{1}^{+}\Vert_{2}^{2}+\Vert\boldsymbol{w}_{2}^{+}\Vert_{2}^{2}\right).
     \end{split}
\end{equation}

We take the derivative of Eq. (\ref{11CL1SLQTSVM1}) with respect to $\boldsymbol{w}_1^+$, $\boldsymbol{w}_2^+$ and $\eta_j^-$ separately, we have
\begin{equation}\label{qiudao}
 \begin{split}
&\frac{\partial J}{\partial \boldsymbol{w}_1^+}=0\Rightarrow \sum_{i=1}^{m_{+}}\frac{\boldsymbol{l}_i^+\left({\boldsymbol{l}_i^+}^{\top}\boldsymbol{w}_1^++{\boldsymbol{s}_i^+}^{\top}\boldsymbol{w}_2^+\right)}{\left|{\boldsymbol{l}_i^+}^{\top}\boldsymbol{w}_1^++{\boldsymbol{s}_i^+}^{\top}\boldsymbol{w}_2^+\right|}+c_1\boldsymbol{w}_{1}^{+}=\boldsymbol{0},\\
&\frac{\partial J}{\partial \boldsymbol{w}_2^+}=0\Rightarrow \sum_{i=1}^{m_{+}}\frac{\boldsymbol{s}_i^+\left({\boldsymbol{l}_i^+}^{\top}\boldsymbol{w}_1^++{\boldsymbol{s}_i^+}^{\top}\boldsymbol{w}_2^+\right)}{\left|{\boldsymbol{l}_i^+}^{\top}\boldsymbol{w}_1^++{\boldsymbol{s}_i^+}^{\top}\boldsymbol{w}_2^+\right|}+c_1\boldsymbol{w}_{2}^{+}=\boldsymbol{0},\\
&\frac{\partial J}{\partial\eta_j^-} =0\Rightarrow c_2\sum_{j=1}^{m_{-}}\frac{\eta_j^-}{\left|\eta_j^-\right|}=0.
 \end{split}
\end{equation}

Then combining the above Eqs. (\ref{qiudao}), we have
\begin{equation}\label{qiudao2}
  \begin{split}
&\sum_{i=1}^{m_{+}}\frac{\boldsymbol{l}_i^+\left({\boldsymbol{l}_i^+}^{\top}\boldsymbol{w}_1^++{\boldsymbol{s}_i^+}^{\top}\boldsymbol{w}_2^+\right)}{\left|{\boldsymbol{l}_i^+}^{\top}\boldsymbol{w}_1^++{\boldsymbol{s}_i^+}^{\top}\boldsymbol{w}_2^+\right|}+\sum_{i=1}^{m_{+}}\frac{\boldsymbol{s}_i^+\left({\boldsymbol{l}_i^+}^{\top}\boldsymbol{w}_1^++{\boldsymbol{s}_i^+}^{\top}\boldsymbol{w}_2^+\right)}{\left|{\boldsymbol{l}_i^+}^{\top}\boldsymbol{w}_1^++{\boldsymbol{s}_i^+}^{\top}\boldsymbol{w}_2^+\right|}+c_1\left(\boldsymbol{w}_{1}^{+}+\boldsymbol{w}_{2}^{+}\right)+c_2\sum_{j=1}^{m_{-}}\frac{\eta_j^-}{\left|\eta_j^-\right|}\\
&=\sum_{i=1}^{m_{+}}\frac{\boldsymbol{z}_i^+{\boldsymbol{z}_i^+}^{\top}\boldsymbol{w}_+}{\left|\boldsymbol{w}_+^{\top}\boldsymbol{z}_i^+\right|}+c_1\boldsymbol{w}_++c_2\sum_{j=1}^{m_{-}}\frac{\eta_j^-}{\left|\eta_j^-\right|}=0,
  \end{split}
\end{equation}
where $\boldsymbol{w}_{+}\triangleq\left[{\boldsymbol{w}_{1}^{+}}^{\top},{\boldsymbol{w}_{2}^{+}}^{\top}\right]^{\top}\in \mathbb{R}^{\frac{n^2+3n+2}{2}}$, $\boldsymbol{z}_{i}^{+}\triangleq\left[{\boldsymbol{l}_{i}^{+}}^{\top},{\boldsymbol{s}_{i}^{+}}^{\top}\right]^{\top}\in \mathbb{R}^{\frac{n^2+3n+2}{2}}$.

Then we construct two diagonal matrices $\boldsymbol{Q}$ and $\boldsymbol{U}$, and define $q_i=\frac{1}{\left|\boldsymbol{w}_+^{\top}\boldsymbol{z}_i^+\right|}$ and $u_j=\frac{1}{\left|\eta_j^-\right|}$ as the diagonal elements of $\boldsymbol{Q}$ and $\boldsymbol{U}$ respectively. Rewriting Eq. (\ref{qiudao2}) leads to
\begin{equation}\label{qiudao3}
\boldsymbol{Z}_+\boldsymbol{Q}\boldsymbol{Z}_+^{\top}\boldsymbol{w}_{+}+c_1\boldsymbol{I}\boldsymbol{w}_{+}+c_2\boldsymbol{U}\boldsymbol{\eta}_-=0.
\end{equation}

Apparently, Eqs. (\ref{qiudao3}) is the optimal solution to the following optimization problem
\begin{equation}
 \min _{\boldsymbol{w}_{+},\boldsymbol{\eta}_{-}} \frac{1}{2}\left(\boldsymbol{Z}_+^{\top}\boldsymbol{w}_+\right)^{\top}\boldsymbol{Q}\boldsymbol{Z}_+^{\top}\boldsymbol{w}_+ +\frac{1}{2}c_1\Vert\boldsymbol{w}_{+}\Vert_{2}^{2}
+\frac{1}{2}c_2\boldsymbol{\eta}_-^{\top}\boldsymbol{U}\boldsymbol{\eta}_-.
\end{equation}

Now, let $\overline{\boldsymbol{w}}_{+}\triangleq\left[\operatorname{hvec}^{\top}(\overline{\boldsymbol{W}}_{+}), \overline{\boldsymbol{b}}_{+}^{\top},\overline{c}_+\right]^{\top}$ and $\overline{\eta}_-\triangleq\boldsymbol{Z}_-^{\top}\overline{\boldsymbol{w}}_++\boldsymbol{e}_-$ represent the updated $\boldsymbol{w}_{+}$ and $\boldsymbol{\eta}_{-}$ by Eqs. (\ref{jieCL1SLQTSVM11}). Then it is easy to obtain the following inequality
\begin{equation}\label{budengshi}
 \begin{split}
&\frac{1}{2}\overline{\boldsymbol{w}}_+^{\top}\boldsymbol{Z}_+\boldsymbol{Q}\boldsymbol{Z}_+^{\top}\overline{\boldsymbol{w}}_+ +\frac{1}{2}c_1\Vert\overline{\boldsymbol{w}}_{+}\Vert_{2}^{2}
+\frac{1}{2}c_2\overline{\boldsymbol{\eta}}_-^{\top}\boldsymbol{U}\overline{\boldsymbol{\eta}}_-\\
&\leq\frac{1}{2}\boldsymbol{w}_+^{\top}\boldsymbol{Z}_+\boldsymbol{Q}\boldsymbol{Z}_+^{\top}\boldsymbol{w}_+ +\frac{1}{2}c_1\Vert\boldsymbol{w}_{+}\Vert_{2}^{2}
+\frac{1}{2}c_2\boldsymbol{\eta}_-^{\top}\boldsymbol{U}\boldsymbol{\eta}_-.
 \end{split}
\end{equation}

Next, rewriting Eqs. (\ref{budengshi}) in component form, we can obtain the following inequality
\begin{equation}\label{budengshi1}
 \begin{split}
&\sum_{i=1}^{m_{+}}\frac{({\boldsymbol{z}_i^+}^{\top}\overline{\boldsymbol{w}}_+)^{\top}{\boldsymbol{z}_i^+}^{\top}\overline{\boldsymbol{w}}_+}{2|\boldsymbol{w}_+^{\top}\boldsymbol{z}_i^+|}+\frac{1}{2}c_1\Vert\overline{\boldsymbol{w}}_{+}\Vert_{2}^{2}+\frac{1}{2}c_2\sum_{j=1}^{m_{-}}\frac{(\overline{\eta}_j^-)^2}{\left|\eta_j^-\right|}\\
&\leq \sum_{i=1}^{m_{+}}\frac{({\boldsymbol{z}_i^+}^{\top}\boldsymbol{w}_+)^{\top}{\boldsymbol{z}_i^+}^{\top}\boldsymbol{w}_+}{2|\boldsymbol{w}_+^{\top}\boldsymbol{z}_i^+|}+\frac{1}{2}c_1\Vert\boldsymbol{w}_{+}\Vert_{2}^{2}+\frac{1}{2}c_2\sum_{j=1}^{m_{-}}\frac{(\eta_j^-)^2}{|\eta_j^-|}.
 \end{split}
\end{equation}

Furthermore, let $\sqrt{x}=|{\boldsymbol{z}_i^+}^{\top}\overline{\boldsymbol{w}}_+|$, $\sqrt{y}=|{\boldsymbol{z}_i^+}^{\top}\boldsymbol{w}_+|$, $\sqrt{z}=|\overline{\eta}_j^-|$,$\sqrt{t}=|\eta_j^-|$,$c=c_2$. From Lemma \ref{yinli2}, we have the following inequality
\begin{equation}\label{1budengshi1}
 \begin{split}
 &|{\boldsymbol{z}_i^+}^{\top}\overline{\boldsymbol{w}}_+|-\frac{({\boldsymbol{z}_i^+}^{\top}\overline{\boldsymbol{w}}_+)^{\top}({\boldsymbol{z}_i^+}^{\top}\overline{\boldsymbol{w}}_+)}{2|{\boldsymbol{z}_i^+}^{\top}\boldsymbol{w}_+|}+c_2\left(|\overline{\eta}_j^-|-\frac{(\overline{\eta}_j^-)^2}{|\eta_j^-|}\right)\\
 &\leq|{\boldsymbol{z}_i^+}^{\top}\boldsymbol{w}_+|-\frac{({\boldsymbol{z}_i^+}^{\top}\boldsymbol{w}_+)^{\top}({\boldsymbol{z}_i^+}^{\top}\boldsymbol{w}_+)}{2|{\boldsymbol{z}_i^+}^{\top}\boldsymbol{w}_+|}+c_2\left(|\eta_j^-|-\frac{(\eta_j^-)^2}{|\eta_j^-|}\right).
 \end{split}
\end{equation}

Thus based on Eq. (\ref{1budengshi1}), we can further obtain the following inequality
\begin{equation}\label{11budengshi1}
 \begin{split}
 &\sum_{i=1}^{m_{+}}|{\boldsymbol{z}_i^+}^{\top}\overline{\boldsymbol{w}}_+|-\sum_{i=1}^{m_{+}}\frac{({\boldsymbol{z}_i^+}^{\top}\overline{\boldsymbol{w}}_+)^{\top}({\boldsymbol{z}_i^+}^{\top}\overline{\boldsymbol{w}}_+)}{2|{\boldsymbol{z}_i^+}^{\top}\boldsymbol{w}_+|}+c_2\sum_{j=1}^{m_{-}}|\overline{\eta}_j^-|-c_2\sum_{j=1}^{m_{-}}\frac{(\overline{\eta}_j^-)^2}{|\eta_j^-|}\\
 &\leq\sum_{i=1}^{m_{+}}|{\boldsymbol{z}_i^+}^{\top}\boldsymbol{w}_+|-\sum_{i=1}^{m_{+}}\frac{({\boldsymbol{z}_i^+}^{\top}\boldsymbol{w}_+)^{\top}({\boldsymbol{z}_i^+}^{\top}\boldsymbol{w}_+)}{2|{\boldsymbol{z}_i^+}^{\top}\boldsymbol{w}_+|}+c_2\sum_{j=1}^{m_{-}}|\eta_j^-|-c_2\sum_{j=1}^{m_{-}}\frac{(\eta_j^-)^2}{|\eta_j^-|}.
 \end{split}
\end{equation}

Then further combining Eqs. (\ref{budengshi1}) and (\ref{11budengshi1}), we have
\begin{equation}\label{111budengshi1}
 \begin{split}
 &\sum_{i=1}^{m_{+}}|{\boldsymbol{z}_i^+}^{\top}\overline{\boldsymbol{w}}_+|+\frac{1}{2}c_1\Vert\overline{\boldsymbol{w}}_{+}\Vert_{2}^{2}+c_2\sum_{j=1}^{m_{-}}|\overline{\eta}_j^-|\\
 &\leq\sum_{i=1}^{m_{+}}|{\boldsymbol{z}_i^+}^{\top}\boldsymbol{w}_+|+\frac{1}{2}c_1\Vert\boldsymbol{w}_{+}\Vert_{2}^{2}+c_2\sum_{j=1}^{m_{-}}|\eta_j^-|.
 \end{split}
\end{equation}

Thus we can obtain the following inequality
\begin{equation}\label{1111budengshi1}
 \begin{split}
 &\sum_{i=1}^{m_{+}}\min\left(|{\boldsymbol{z}_i^+}^{\top}\overline{\boldsymbol{w}}_+|, \varepsilon\right)+\frac{1}{2}c_1\Vert\overline{\boldsymbol{w}}_{+}\Vert_{2}^{2}+c_2\sum_{j=1}^{m_{-}}\min\left(|\overline{\eta}_j^-|, \varepsilon\right)\\
 &\leq\sum_{i=1}^{m_{+}}\min\left(|{\boldsymbol{z}_i^+}^{\top}\boldsymbol{w}_+|, \varepsilon\right)+\frac{1}{2}c_1\Vert\boldsymbol{w}_{+}\Vert_{2}^{2}+c_2\sum_{j=1}^{m_{-}}\min\left(|\eta_j^-|, \varepsilon\right).
 \end{split}
\end{equation}
\end{proof}
Thus we have $J(\overline{\boldsymbol{w}}_+,\overline{\boldsymbol{\eta}}_-)\leq J(\boldsymbol{w}_+,\boldsymbol{\eta}_-)$ established. Similarly, when ${\boldsymbol{z}_i^+}^{\top}\boldsymbol{w}_+>\varepsilon$ and $\eta_j^->\varepsilon$ are satisfied, we have $J(\overline{\boldsymbol{w}}_+,\overline{\boldsymbol{\eta}}_-)=J(\boldsymbol{w}_+,\boldsymbol{\eta}_-)$. In summary, $J(\overline{\boldsymbol{w}}_+,\overline{\boldsymbol{\eta}}_-)\leq J(\boldsymbol{w}_+,\boldsymbol{\eta}_-)$ holds. This means that Algorithm \ref{suanfaCL1SLDWPTSVM1} will decrease the objective function of the optimization problem (\ref{CL1SLQTSVM1}) until it converges. The proof process is similar for the optimization problem (\ref{CL1SLQTSVM2}).

\begin{theorem}\label{dingli2}
Algorithm \ref{suanfaCL1SLDWPTSVM1} will converge to a local optimum to the optimization problem (\ref{CL1SLQTSVM11}).
\end{theorem}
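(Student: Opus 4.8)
The plan is to combine the monotone–decrease property already established in Theorem \ref{dingli1} with a coercivity argument to pin down the limit of the iterates, and then to verify that this limit satisfies the first-order optimality conditions of problem (\ref{CL1SLQTSVM11}). First I would observe that the objective $J$ in (\ref{CL1SLQTSVM1}) is bounded below by $0$: each of its three summands is non-negative, since $\min(|\cdot|,\varepsilon)\ge 0$ and the regularizer $\frac{1}{2}c_1\|\boldsymbol{w}_+\|_2^2\ge0$. By Theorem \ref{dingli1} the sequence $\{J(\boldsymbol{w}_+^{t},\boldsymbol{\eta}_-^{t})\}$ is non-increasing, so being bounded below it converges to some finite limit $J^\star$.

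Second, I would upgrade convergence of the objective values to control on the iterates. The term $\frac{1}{2}c_1\|\boldsymbol{w}_+\|_2^2$ makes $J$ coercive in $\boldsymbol{w}_+$; since $J(\boldsymbol{w}_+^{t})\le J(\boldsymbol{w}_+^{0})$ for all $t$, the whole sequence $\{\boldsymbol{w}_+^{t}\}$ lies in the compact sublevel set $\{\boldsymbol{w}_+:\tfrac{1}{2}c_1\|\boldsymbol{w}_+\|_2^2\le J(\boldsymbol{w}_+^0)\}$. Hence $\{\boldsymbol{w}_+^{t}\}$ has at least one accumulation point $\boldsymbol{w}_+^\star$ by Bolzano--Weierstrass, and along a subsequence $\boldsymbol{w}_+^{t_k}\to\boldsymbol{w}_+^\star$; the weights $q_i^{t_k},u_j^{t_k}$ defined in (\ref{q})--(\ref{u}) and the matrices $\boldsymbol{Q}^{t_k},\boldsymbol{U}^{t_k}$ then converge accordingly.

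Third, I would characterize $\boldsymbol{w}_+^\star$ as a fixed point of the update map (\ref{jieCL1SLQTSVM11}) and identify its optimality. Passing to the limit in the update rule, $\boldsymbol{w}_+^\star$ solves the weighted normal equation (\ref{CL1SLQTSVM1111}) in which $\boldsymbol{Q},\boldsymbol{U}$ are evaluated at $\boldsymbol{w}_+^\star$ itself, i.e.\ the weights are self-consistent. Exactly as in the derivation (\ref{qiudao})--(\ref{qiudao3}) inside the proof of Theorem \ref{dingli1}, this self-consistent stationarity equation is precisely the vanishing-(sub)gradient condition of the capped-$L_1$ objective of (\ref{CL1SLQTSVM11}) on the active region where the arguments lie below $\varepsilon$; on the complementary region the capped terms are locally constant and contribute nothing to the gradient. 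Therefore $\boldsymbol{w}_+^\star$ satisfies the first-order optimality (KKT) conditions of (\ref{CL1SLQTSVM11}), which together with the established monotone descent qualifies it as a local optimum.

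The hard part will be making this last step rigorous in the presence of the two non-smoothnesses of the capped $L_1$ loss: the kink of $|\cdot|$ at the origin and the threshold at $|\cdot|=\varepsilon$. At points where some $\boldsymbol{w}_+^{\star\top}\boldsymbol{z}_i^+=0$ the coefficient $q_i=1/|\boldsymbol{w}_+^\top\boldsymbol{z}_i^+|$ blows up, so I would either argue that such exact incidences do not occur in the limit by a genericity argument, or replace the equality by an inclusion in the Clarke subdifferential and exhibit the limiting weights as a valid subgradient selection. Likewise, crossings of $|\cdot|=\varepsilon$ make the reweighting discontinuous, so I would show that for large $t_k$ the active index sets stabilize, which lets me take the limit within a single smooth branch. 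Once the active sets are locally constant, the objective is smooth there, and the self-consistent stationary point is a genuine local minimizer of that branch, coinciding locally with the capped-$L_1$ objective.
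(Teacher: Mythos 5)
Your proposal is correct, and its decisive step --- showing that at a fixed point of the update (\ref{jieCL1SLQTSVM11}) the self-consistent weighted normal equation of the surrogate (\ref{minCL1SLQTSVM111}) coincides with the first-order condition of the capped-$L_1$ problem (\ref{CL1SLQTSVM11}) on the branch where $|\boldsymbol{w}_+^{\top}\boldsymbol{z}_i^+|<\varepsilon$ and $|\eta_j^-|<\varepsilon$ --- is exactly the content of the paper's proof, which carries it out by writing the two Lagrangians $L_1$ and $L_2$ and observing that their stationarity conditions (\ref{qiudao1111}) and (\ref{qiudao111111}) agree once the weights $\boldsymbol{Q},\boldsymbol{U}$ are evaluated at the limit. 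Where you differ is in everything surrounding that step: the paper simply says ``when Algorithm \ref{suanfaCL1SLDWPTSVM1} converges'' and never establishes that the iterates possess a limit point, whereas your boundedness-below/monotonicity argument combined with the coercivity of the $\frac{1}{2}c_1\Vert\boldsymbol{w}_+\Vert_2^2$ term and Bolzano--Weierstrass supplies that missing existence claim. You also explicitly flag the two non-smoothness hazards --- the blow-up of $q_i=1/|\boldsymbol{w}_+^{\top}\boldsymbol{z}_i^+|$ at incidences $\boldsymbol{w}_+^{\top}\boldsymbol{z}_i^+=0$ and the discontinuity of the reweighting at the threshold $\varepsilon$ --- which the paper sidesteps silently by working only on the open smooth branch; your proposed remedies (genericity or a Clarke-subdifferential inclusion for the kink, stabilization of the active index sets for the threshold) are the right tools and would make rigorous what the paper leaves informal. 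One caution applies to both arguments equally: what is actually verified is stationarity (a self-consistent solution of the weighted normal equations), and promoting this to ``local optimum'' should be credited to the strong convexity of the smooth branch induced by the $c_1\boldsymbol{I}$ term rather than, as you phrase it, to the monotone descent itself.
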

\begin{proof}
Here we use optimization problem (\ref{CL1SLQTSVM11}) as an example to prove the Theorem \ref{dingli2}. When $|\boldsymbol{w}_+^{\top}\boldsymbol{z}_i^+|<\varepsilon$ and $|\eta_j^-|<\varepsilon$, the Lagrangian function of the optimization problem (\ref{CL1SLQTSVM11}) is given as follows
\begin{equation}\label{CCL1SLQTSVM11}
	\begin{split}
&L_1(\boldsymbol{w}_+,\boldsymbol{\eta}_-,\boldsymbol{\alpha})=\sum_{i=1}^{m_{+}} |\boldsymbol{w}_+^{\top}\boldsymbol{z}_i^+| +\frac{1}{2}c_1\Vert\boldsymbol{w}_{+}\Vert_{2}^{2}+c_2\sum_{j=1}^{m_{-}} |\eta_j^-|\\
 &\qquad \qquad \quad \ \ -\boldsymbol{\alpha}^{\top}(-\boldsymbol{Z}_{-}^{\top}\boldsymbol{w}_{+} +\boldsymbol{\eta}_{-}-\boldsymbol{e}_{-}).
     \end{split}
\end{equation}

Then to obtain the gradient of $L_1(\boldsymbol{w}_+,\boldsymbol{\eta}_-,\boldsymbol{\alpha})$ with respect to $\boldsymbol{w}_+$, we have
\begin{equation}\label{qiudao1111}
 \begin{split}
\frac{\partial L_1(\boldsymbol{w}_+,\boldsymbol{\eta}_-,\boldsymbol{\alpha})}{\partial \mathbf{\boldsymbol{w}}_{+}}&= \sum_{i=1}^{m_{+}}\frac{\boldsymbol{z}_i^+{\boldsymbol{z}_i^+}^{\top}\boldsymbol{w}_+}{\left|\boldsymbol{w}_+^{\top}\boldsymbol{z}_i^+\right|}+c_1\boldsymbol{I}\boldsymbol{w}_++\boldsymbol{Z}_-\boldsymbol{\alpha}\\
&=\boldsymbol{Z}_+\boldsymbol{Q}\boldsymbol{Z}_+^{\top}\boldsymbol{w}_{+}+c_1\boldsymbol{I}\boldsymbol{w}_++\boldsymbol{Z}_-\boldsymbol{\alpha}=\boldsymbol{0}.
 \end{split}
\end{equation}

Similarly, we can obtain the Lagrangian function of the optimization problem (\ref{minCL1SLQTSVM111}) as follows
\begin{equation}\label{1222minCL1SLQTSVM111}
  \begin{split}
 L_2(\boldsymbol{w}_{+},\boldsymbol{\eta}_-,\boldsymbol{\alpha})= &\frac{1}{2}\left(\boldsymbol{Z}_+^{\top}\boldsymbol{w}_+\right)^{\top}\boldsymbol{Q}\boldsymbol{Z}_+^{\top}\boldsymbol{w}_+ +\frac{1}{2}c_1\Vert\boldsymbol{w}_{+}\Vert_{2}^{2}
+\frac{1}{2}c_2\boldsymbol{\eta}_-^{\top}\boldsymbol{U}\boldsymbol{\eta}_-\\
&-\boldsymbol{\alpha}^{\top}(-\boldsymbol{Z}_-^{\top}\boldsymbol{w}_++\boldsymbol{\eta}_--\boldsymbol{e}_-).
  \end{split}
\end{equation}

Then the gradient of $L_2(\boldsymbol{w}_{+},\boldsymbol{\eta}_-,\boldsymbol{\alpha})$ with respect to $\boldsymbol{w}_{+}$ yields
\begin{equation}\label{qiudao111111}
 \begin{split}
\frac{\partial L_2(\boldsymbol{w}_+,\boldsymbol{\eta}_-,\boldsymbol{\alpha})}{\partial \mathbf{\boldsymbol{w}}_{+}}\Rightarrow \boldsymbol{Z}_+\boldsymbol{Q}\boldsymbol{Z}_+^{\top}\boldsymbol{w}_{+}+c_1\boldsymbol{I}\boldsymbol{w}_++\boldsymbol{Z}_-\boldsymbol{\alpha}=\boldsymbol{0}.
 \end{split}
\end{equation}
\end{proof}

It is easy to observe that Eqs. (\ref{qiudao1111}) and (\ref{qiudao111111}) are equivalent when Algorithm \ref{suanfaCL1SLDWPTSVM1} converges, which satisfies the Lagrangian function condition of the optimization problem (\ref{CL1SLQTSVM11}). Thus we can solve the optimization problem (\ref{minCL1SLQTSVM111}) instead of the optimization problem (\ref{CL1SLQTSVM11}). This means that the convergent solution satisfies Eq. (\ref{qiudao111111}) when it is a locally optimal solution of the optimization problem (\ref{CL1SLQTSVM11}). A similar proof is available for problem (\ref{CL1SLQTSVM22}).

\subsection{Time complexity}
In this subsection, we briefly analyze the time complexity of Algorithm \ref{suanfaCL1SLDWPTSVM1} for the optimization problem (\ref{CL1SLQTSVM1}). Let $m_+$ and $m_-$ be the number of positive and negative samples, respectively. $n$ represents the feature number of the samples and let $m_l=\frac{n^2+3n+2}{2}$. And $T$ represents the maximum number of iterations, it should be noted that $T=30$ in our experiments. The time complexity of Algorithm \ref{suanfaCL1SLDWPTSVM1} is mainly in the matrix inverse operation, thus we need to analyze its time complexity in the following two cases. When $m_l\geq m_-$, the time complexity of Algorithm \ref{suanfaCL1SLDWPTSVM1} is $O(Tm_-^3)$. When $m_l<m_-$, the time complexity of Algorithm \ref{suanfaCL1SLDWPTSVM1} is $O(Tm_l^3)$. Thus, the time complexity of Algorithm \ref{suanfaCL1SLDWPTSVM1} for the optimization problem (\ref{CL1SLQTSVM1}) is $O(Tmin(m_l^3,m_-^3))$. In summary, our C$L_1$QTSVM model requires time complexity of $O(T(min(m_l^3,m_-^3)+min(m_l^3,m_+^3)))$.

\subsection{Comparison methods}

    To better distinguish it from the comparative state-of-the-art methods, our model is compared with the above methods, as shown in Table \ref{table22}. As a result, it can be summarized that our method has the following four major advantages. (1) Our model uses the Capped $L_1$-norm distance metric, which results in its insensitivity to outliers. (2) Our model does not involve the selection of kernel function and kernel parameters, and has stronger interpretability. (3) The introduction of the $L_2$ regularization term leads our method to follow the structural risk minimization (SRM) principle, and further improves the generalization ability of our model. (4) An efficient iterative algorithm is used to solve the optimization model, resulting in low time cost.
    
  \renewcommand\tabcolsep{15pt}
\begin{table*}[htbp]
\centering
\caption{ Comparison between different methods.}\label{table22}
\vspace{-0.3cm}
\resizebox{1\columnwidth}{1.2cm}{
\begin{tabular}{llllll}
\hline
Model & Loss function & Distance norm& $L_2$ regularization term & Kernel function & Time complexity\\
\hline
   TSVM \citep{TSVM}& Hinge loss& $L_2$-norm&No&Yes&$O(\frac{m^3}{4}+(m+1)^{3})$\\
    C$L_1$-FRTBSVM \citep{FRTBSVM}& Capped $L_1$-norm& Capped $L_1$-norm&Yes&Yes&$O(2T(m+1)^{3})$\\
    C$L_1$-FRTELM \citep{L1TELM}& Capped $L_1$-norm& Capped $L_1$-norm&No&No&$O(2(L^{3}+n^{3}))$\\
    LINEX-TSVM \citep{LINEX-TSVM}& Capped LINEX loss& Capped $L_1$-norm&Yes&Yes&$O(T(\frac{m^3}{4}+(n+1)^{3}))$\\
    C$L_{2,p}$-LSTSVM \citep{CL2p}& Square loss& Capped $L_{2,p}$-norm&No&Yes&$O(T(m\times (n+1)^{2}+(n+1)^{3}))$\\
    $\nu$-FRSQSSVM \citep{gaonu-FRSQSSVM}&Fuzzy hinge loss&Fuzzy $L_2$-norm&No&No&$O(m^{3}+(2n+1)^{3})$\\
     LSQTSVM \citep{gaoQSLSTSVM}& Square loss&$L_2$-norm&No&No&$O(\frac{n^2+3n+2}{2}\times \frac{n^2+3n+2}{2})$\\
     C$L_1$QTSVM & Capped $L_1$-norm&Capped $L_1$-norm&Yes&No&$O(T(min(m_l^3,m_-^3)+min(m_l^3,m_+^3)))$\\
\hline
\end{tabular}
}
\end{table*}

\section{Numerical experiments}\label{4}
In this section, we first perform experiments on synthetic dataset, benchmark dataset and image dataset, and compare our method with the rest seven state-of-the-art methods. Then, parametric sensitivity analysis, convergence analysis and nonparametric tests are also performed to verify the effectiveness of our method.

\subsection{Experiment setups}
(1) Running environment. All experiments were run on a 16GB RAM and Inter Core i7-4790 desktop computer.

(2) Benchmark dataset description.
\renewcommand\tabcolsep{30pt}
\begin{table*}[htbp]
\centering
\caption{Description of the benchmark datasets.}\label{jizhun1}
\vspace{-0.3cm}
\resizebox{.7\columnwidth}{!}{
\begin{tabular}{lll}
\hline
Dataset &\#Instances&\#Attributes \\
\hline
   Australian&690&14\\
		                       Indian&583&10\\
		                      Blood&748&4\\
		                       Cleve&296&13\\
		                      Climate-simulation&540&18\\
		CMC&1473&9\\
		Cylinder-bands&512&35\\
		Glass(0-1-5 vs. 2)&192&9\\
		Haberman&306&3\\
		Hepatitis&155&19\\
		Ionosphere&351&32\\
	    Sonar&208&60\\
		Vote&435&16\\
		WBC&683&9\\
		Yeast(2 vs. 8)&264&8\\
		Robotnavig&2923&24\\
\hline
\end{tabular}
}
\end{table*}

\begin{enumerate}
  \item[$\bullet$] In the synthetic numerical experiments, we use three synthetic datasets, Example 1, Example 2 and Example 3, as shown in Figure \ref{example2}. To further compare the classification performance of each method under different label noise ratios, we add 5$\%$ and 10$\%$ label noise to Example 1, Example 2 and Example 3 respectively.
  \item[$\bullet$] In the numerical experiments on real data, we first use 16 benchmark datasets, which are all from the UCI database, and the details of these 16 benchmark datasets are listed in Table \ref{jizhun1}. In addition, image datasets are also used, and their details are described in the following subsections.
    \end{enumerate}

(3) Data preprocessing. Before the experiment, all data are normalized to $[-1,1]$.

(4) Parameter setting. In all numerical experiments, the 5-fold cross-validation method and grid search method were used to select the optimal parameters for each model. The final classification result is the average of 10 times 10-fold cross-validation. For the nonlinear case, the Gaussian kernel $\boldsymbol{\mathcal{K}}(\boldsymbol{u}, \boldsymbol{v})=\exp \left(-\frac{\|\boldsymbol{u}-\boldsymbol{v}\|^{2}}{2\sigma^{2}}\right)$ is used, the kernel parameter $\sigma$ is chosen from the set $\left\{2^{i}|i=-5,-4, . ... ,4,5\right\}$. The loss function regularization parameter $c_2$ and the regularization parameter $c_1$ of all methods are selected from the set $\left\{10^{i}|i=-5,-4,...,4,5\right\}$. In addition, the specific parameter selection details of some comparison methods are as follows

\begin{enumerate}
  \item[$\bullet$] For FRTELM model \citep{L1TELM}, hidden layer node parameter $L \in \left\{20,40,50,100,200\right\}$.
   \item[$\bullet$] For the LINEX-TSVM model \citep{LINEX-TSVM}, Capped LINEX loss function parameter $a=1$, convergence precision is set to $\epsilon=10^{-4}$.
    \item[$\bullet$] For C$L_{2,p}$LSTSVM model \citep{CL2p}, the truncation parameter $\epsilon$ is selected from the set $\left\{10^{i}|i=-5,-3, ... ,3,5\right\}$, with parameters $p$ selected from the set $\left\{1.1,1.3,... ,1.9\right\}$, the convergence accuracy is set to $\epsilon=10^{-4}$.
\item[$\bullet$] For $\nu$-FRSQSSVM model \citep{gaonu-FRSQSSVM}, the parameters $\nu$ is chosen from the set $\left\{0.05,0.1,... ,0.9,0.99\right\}$.
\end{enumerate}

(5) Evaluation metrics . The experimental part of this section uses Accuracy (Acc) and F1 Score (F1) to evaluate the classification performance of each method, which are defined as follows
\begin{equation}\label{AUC}
{\rm Acc}=\frac{TP+TN}{TP+FN+TN+FP},
\end{equation}
\begin{equation}\label{AUC}
{\rm F1}=\frac{2TP}{2TP+FN+FP},
\end{equation}
where $TN$ and $TP$ represent the number of correctly predicted positive and negative samples, respectively. And $FP$ and $FN$ represent the number of incorrectly predicted positive and negative samples, respectively. In addition, the standard deviation of the test accuracy ($Std$) is also used.

\subsection{Experiments results}
In this subsection, we focus on comparing the classification performance of our method with the rest of the state-of-the-art methods on synthetic datasets, benchmark datasets, and image datasets.

\subsubsection{Performance on synthetic dataset}

First, we compare the classification performance of the C$L_1$QTSVM model with other state-of-the-art methods on three synthetic datasets. The construction methods and images of the three synthetic datasets are shown as follows:

\textbf{Example 1}:\\
class +1: $[x_i]_2=0.2222[x_i]_1^{2}+0.5+\xi_i$,$[x_i]_1 \sim U[-3,3]$,\\
class -1: $[x_j]_2=-0.2222[x_j]_1^{2}+1.5+\xi_j$, $[x_j]_1 \sim U[-3,3]$,\\
where  $\xi_i,\xi_j \sim N(0,0.1)$,$i,j=1,2.... ,200$.

\begin{figure}[htbp]
\centering
\subfigure[Example 1]{
\begin{minipage}[t]{0.3\linewidth}
\centering
\includegraphics[width=4.5cm]{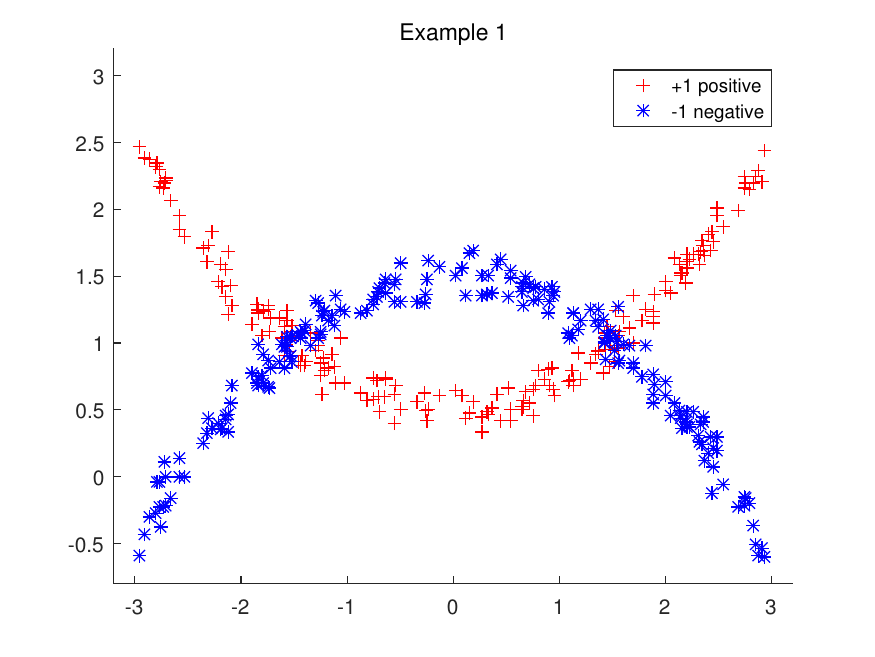}
\end{minipage}%
}%
\subfigure[Example 1(5$\%$)]{
\begin{minipage}[t]{0.3\linewidth}
\centering
\includegraphics[width=4.5cm]{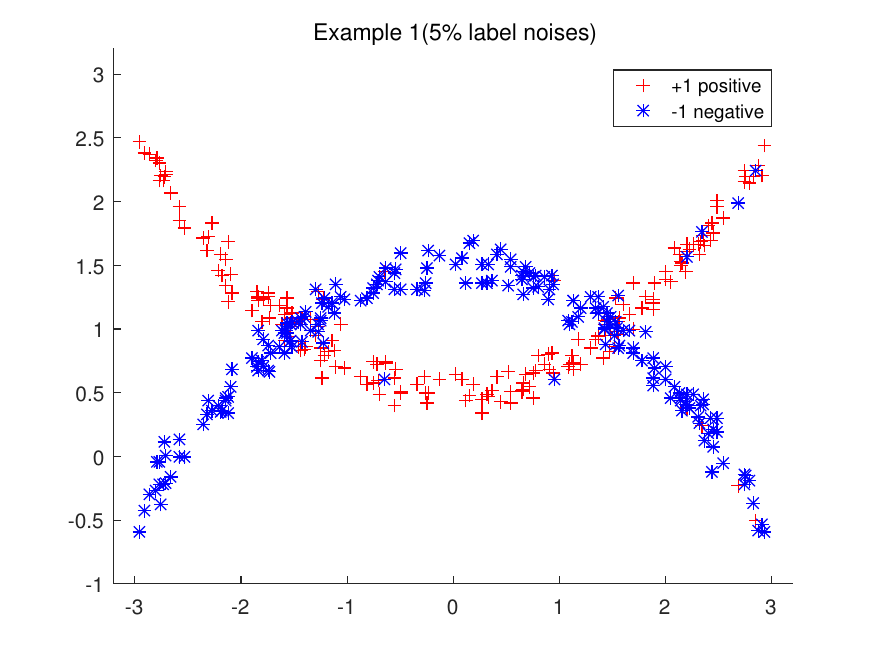}
\end{minipage}%
}
\subfigure[Example 1(10$\%$)]{
\begin{minipage}[t]{0.3\linewidth}
\centering
\includegraphics[width=4.5cm]{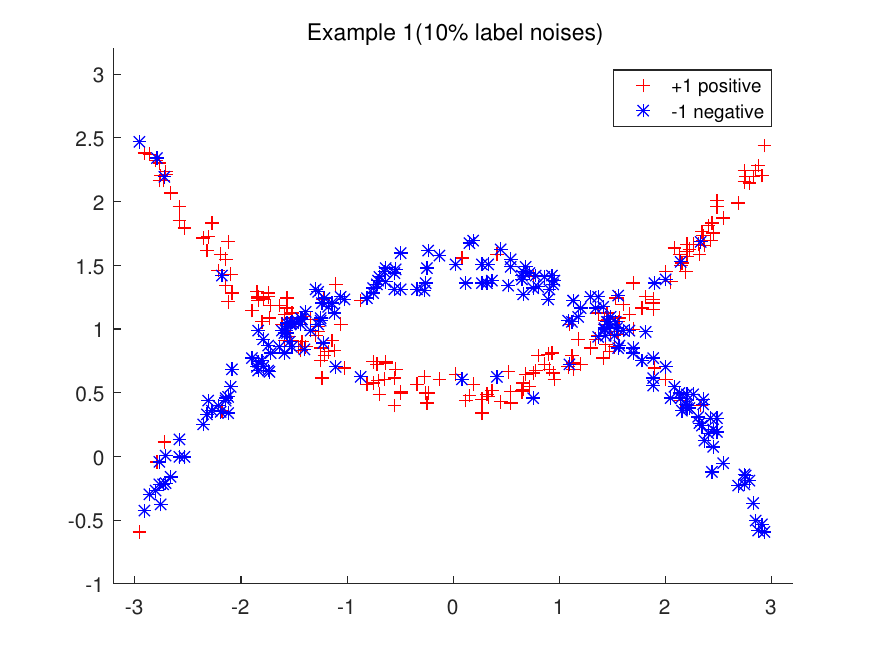}
\end{minipage}%
}
\centering
\caption{The distribution of data for Example 1 with different label noise ratios.}\label{example1}
\end{figure}

\textbf{Example 2}:\\
class +1:  $[x_i]_1=3cos(\theta_i),[x_i]_2=3sin(\theta_i)+\xi_i$,  $\theta_i \sim U[0,2\pi]$, \\
class -1:  $[x_j]_1=3cos(\theta_j),[x_j]_2=3sin(\theta_j)+\xi_j$, $\theta_j \sim U[\pi,2\pi]$,\\
where  $\xi_i,\xi_j \sim N(0,0.2)$,$i,j=1,2.... ,200$.

\begin{figure}[htbp]
\centering
\subfigure[Example 1]{
\begin{minipage}[t]{0.3\linewidth}
\centering
\includegraphics[width=4.5cm]{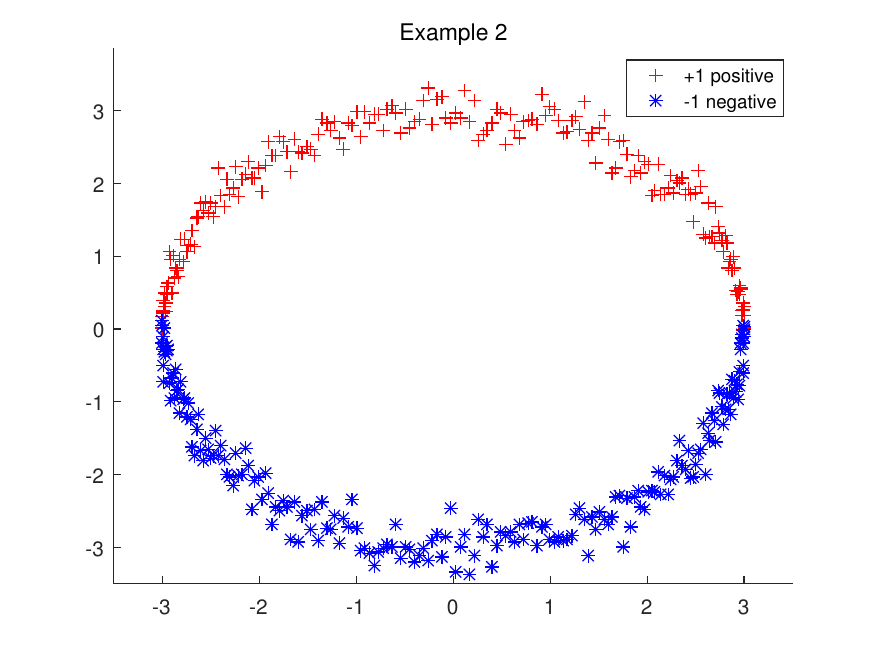}
\end{minipage}%
}%
\subfigure[Example 1(5$\%$)]{
\begin{minipage}[t]{0.3\linewidth}
\centering
\includegraphics[width=4.5cm]{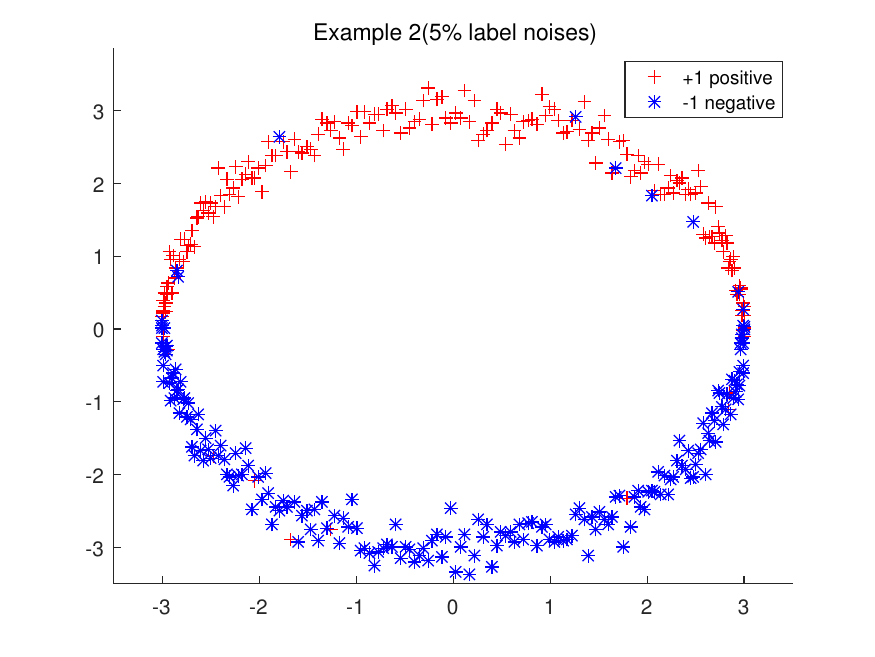}
\end{minipage}%
}
\subfigure[Example 1(10$\%$)]{
\begin{minipage}[t]{0.3\linewidth}
\centering
\includegraphics[width=4.5cm]{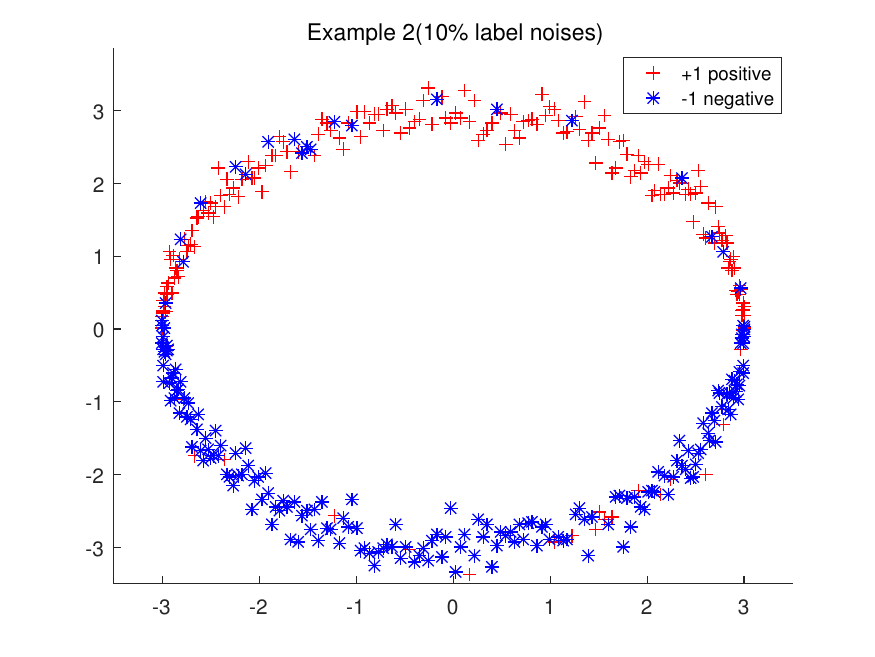}
\end{minipage}%
}
\centering
\caption{The distribution of data for Example 2 with different label noise ratios.}\label{example1}
\end{figure}

\textbf{Example 3}:\\
class +1:   $[x_i]_2=0.75[x_i]_1^2+1.5[x_i]_1+0.75+\xi_i$, $[x_i]_1 \sim U[-3,1]$, \\
class -1:  $[x_j]_2=0.75[x_j]_1^2-1.5[x_j]_1+0.75+\xi_j$, $[x_j]_1 \sim U[-1,3]$,\\
where  $\xi_i,\xi_j \sim N(0,0.1)$,$i,j=1,2.... ,200$.

\begin{figure}[htbp]
\centering
\subfigure[Example 1]{
\begin{minipage}[t]{0.3\linewidth}
\centering
\includegraphics[width=4.5cm]{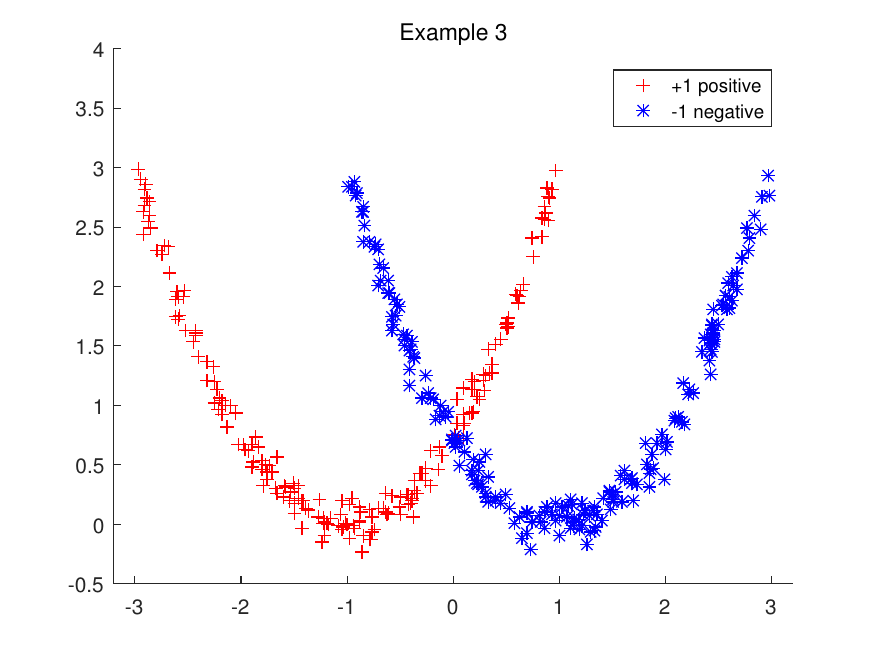}
\end{minipage}%
}%
\subfigure[Example 1(5$\%$)]{
\begin{minipage}[t]{0.3\linewidth}
\centering
\includegraphics[width=4.5cm]{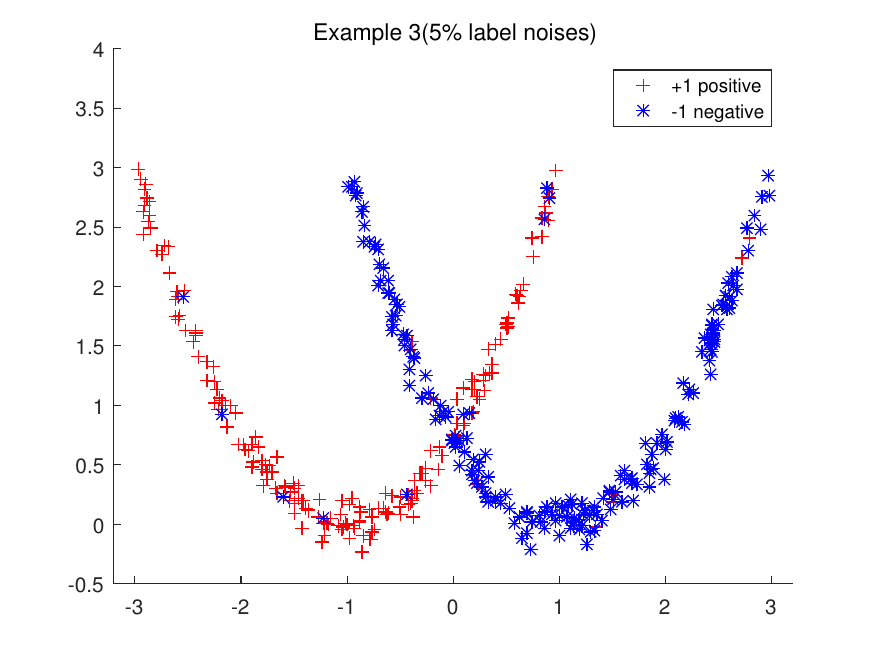}
\end{minipage}%
}
\subfigure[Example 1(10$\%$)]{
\begin{minipage}[t]{0.3\linewidth}
\centering
\includegraphics[width=4.5cm]{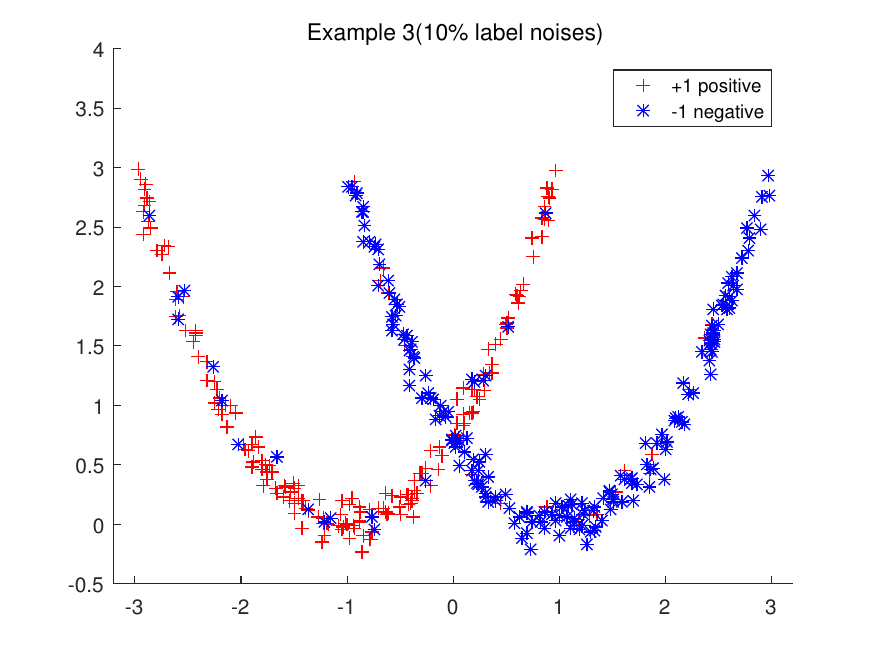}
\end{minipage}%
}
\centering
\caption{The distribution of data for Example 3 with different label noise ratios.}\label{example1}
\end{figure}

The classification results on the synthetic datasets as well as on the synthetic datasets with different percentages of labeled noises are shown in Table \ref{rengongjieguo}. It should be noted that for the FRTELM method, the activation function is $\frac{1}{1+exp(-(\boldsymbol{w}^{\top}\boldsymbol{x}+b))}$, where $\boldsymbol{w}$ and $b$ are randomly generated. From the classification results in Table \ref{rengongjieguo}, it can be seen that on the synthetic datasets Example 1, Example 2 and Example 3 without label noises, the classification accuracies of our C$L_1$QTSVM model are not the highest, but second to the optimal method. However, our C$L_1$QTSVM model has the highest classification accuracy on the synthetic datasets Example 1, Example 2 and Example 3 with label noises. Specifically, on the dataset Example 1 with 5$\%$ and 10$\%$ label noises, the classification accuracies of our C$L_1$QTSVM model are higher than the second ranked method by 0.71$\%$ and 1.83$\%$, respectively. And on the dataset Example 2 with 5$\%$ and 10$\%$ label noises, the classification accuracies of our C$L_1$QTSVM model are higher than the second ranked method by 0.15$\%$ and 0.08$\%$, respectively.  And on the dataset Example 3 with 5$\%$ and 10$\%$ label noises, the classification accuracy of our C$L_1$QTSVM model is higher than the second-ranked method by  0.62$\%$ and 1.80$\%$, respectively. These results validate the advantage of our method in handling data with label noises, which also prepare the way for the experiments of our C$L_1$QTSVM model on real datasets.

\renewcommand\tabcolsep{5pt}
\begin{table}[htbp]
\centering
\caption{Experimental results on synthetic datasets.}\label{rengongjieguo}
\vspace{-0.3cm}
\resizebox{.95\columnwidth}{!}{
\begin{tabular}{lllllllll} 
\toprule
Datasets&TSVM&C$L_1$FRTBSVM&C$L_1$FRTELM&LINEX-TSVM&C$L_{2,p}$-LSTSVM&$\nu$-FRSQSSVM&LSQTSVM&C$L_1$QTSVM\\
   &(Acc,Std)($\%$)&(Acc,Std)($\%$)&(Acc,Std)($\%$)& (Acc,Std)($\%$)& (Acc,Std)($\%$)&(Acc,Std)($\%$)& (Acc,Std)($\%$)&(Acc,Std)($\%$)\\
    &($c_2$,$\sigma$)&($c_1$,$c_2$,$\sigma$) &($c_2$,$L$)&($c_1$,$c_2$)&($c_2$,$\epsilon$,$p$)&$(\nu)$&($c_2$)&($c_1$,$c_2$)\\
    \midrule
  \multirow{2}{*}{Example 1} &(91.75,1.13)&(\textbf{94.45},\textbf{0.16})&(91.53,1.00)&(53.88,0.39)&(62.86,0.56)&(50.40,1.10)&(93.95,0.31)&(94.03,0.38)\\
                                &($10^{-5}$,$2^{-1}$)&($10^{3}$,$10^{-5}$,$2^{5}$)&($10^{-3}$,20)&($10^{2}$,$10^{-4}$)&($10^{-5}$,$10^{-5}$,1.7)&(0.01)&($10^{-1}$)&($10^{2}$,$10^{1}$)\\
    \multirow{2}{*}{Example 1(5$\%$)} &(86.77,0.77)&(85.88,0.82)&(83.60,1.37)&(51.96,0.91)&(59.36,0.34)&(50.52,1.12)&(83.25,0.50)&(\textbf{87.48},\textbf{0.36})\\
                                &($10^{-5}$,$2^{-3}$)&($10^{5}$,$10^{-5}$,$2^{0}$)&($10^{-5}$,200)&($10^{-3}$,$10^{4}$)&($10^{-3}$,$10^{-3}$,1.7)&(0.01)&($10^{0}$)&($10^{-1}$,$10^{2}$)\\
\multirow{2}{*}{Example 1(10$\%$)} &(80.88,1.17)&(80.92,1.21)&(77.10,1.89)&(53.27,2.39)&(59.92,0.58)&(46.50,1.92)&(64.20,1.12)&(\textbf{82.75},\textbf{0.24})\\
                                &($10^{-5}$,$2^{-3}$)&($10^{5}$,$10^{5}$,$2^{0}$)&($10^{-5}$,500)&($10^{-2}$,$10^{-4}$)&($10^{-3}$,$10^{-1}$,1.7)&(0.4)&($10^{0}$)&($10^{0}$,$10^{4}$)\\
\multirow{2}{*}{Example 2} &(\textbf{98.50},\textbf{0.20})&(98.05,0.42)&(97.55,0.63)&(89.02,0.57)&(98.07,0.16)&(97.75,0.61)&(97.77,0.18)&(98.17,0.26)\\
                                &($10^{-5}$,$2^{-3}$)&($10^{1}$,$10^{-1}$,$2^{3}$)&($10^{-3}$,100)&($10^{-4}$,$10^{-1}$)&($10^{3}$,$10^{3}$,1.5)&(0.2)&($10^{-2}$)&($10^{0}$,$10^{2}$)\\
\multirow{2}{*}{Example 2(5$\%$)} &(93.42,0.61)&(93.35,0.21)&(92.85,1.74)&(85.56,0.54)&(93.52,0.29)&(93.32,0.29)&(89.75,1.03)&(\textbf{93.67},\textbf{0.29})\\
                                &($10^{-3}$,$2^{-1}$)&($10^{-1}$,$10^{-1}$,$2^{0}$)&($10^{-1}$,40)&($10^{-4}$,$10^{1}$)&($10^{1}$,$10^{5}$,1.3)&(0.4)&($10^{3}$)&($10^{-3}$,$10^{-5}$)\\
\multirow{2}{*}{Example 2(10$\%$)} &(87.35,1.25)&(87.92,0.42)&(85.82,0.88)&(78.90,0.42)&(87.20,0.25)&(86.95,0.39)&(82.70,0.39)&(\textbf{88.00},\textbf{1.60})\\
                                &($10^{-5}$,$2^{-1}$)&($10^{-1}$,$10^{-5}$,$2^{0}$)&($10^{-5}$,500)&($10^{0}$,$10^{3}$)&($10^{-3}$,$10^{3}$,1.3)&(0.4)&($10^{-5}$)&($10^{-2}$,$10^{-1}$)\\
\multirow{2}{*}{Example 3} &(98.50,0.42)&(98.63,0.56)&(98.07,0.64)&(86.19,0.29)&(82.47,0.14)&(74.42,0.70)&(\textbf{98.95},\textbf{0.11})&(98.83,0.12)\\
                                &($10^{-5}$,$2^{-3}$)&($10^{-3}$,$10^{-1}$,$2^{0}$)&($10^{5}$,500)&($10^{0}$,$10^{-3}$)&($10^{-5}$,$10^{3}$,1.5)&(0.01)&($10^{-5}$)&($10^{-5}$,$10^{-4}$)\\
\multirow{2}{*}{Example 3(5$\%$)} &(91.78,0.59)&(92.30,0.39)&(90.50,0.74)&(82.24,0.55)&(79.07,0.88)&(74.72,0.55)&(87.78,1.11)&(\textbf{92.92},\textbf{1.81})\\
                                &($10^{-5}$,$2^{-2}$)&($10^{5}$,$10^{-3}$,$2^{0}$)&($10^{-5}$,500)&($10^{-2}$,$10^{-5}$)&($10^{-5}$,$10^{-1}$,1.5)&(0.2)&($10^{0}$)&($10^{-3}$,$10^{5}$)\\
\multirow{2}{*}{Example 3(10$\%$)} &(86.20,1.02)&(86.15,0.46)&(84.38,1.63)&(76.83,0.70)&(75.86,0.41)&(70.42,0.54)&(77.03,0.85)&(\textbf{88.00},\textbf{1.60})\\
                                &($10^{-4}$,$2^{-1}$)&($10^{-1}$,$10^{-5}$,$2^{0}$)&($10^{-1}$,20)&($10^{-1}$,$10^{-4}$)&($10^{-5}$,$10^{3}$,1.5)&(0.4)&($10^{2}$)&($10^{-1}$,$10^{3}$)\\
  \bottomrule
\end{tabular}
}
\end{table}

\subsubsection{Performance on benchmark datasets}
Table \ref{jizhun11} shows the classification results of our C$L_1$QTSVM model with the rest of the compared methods on the 16 benchmark datasets, where the optimal results are shown in bold. In addition, for the FRTELM method, the activation function is $\frac{1}{1+exp(-(\boldsymbol{w}^{\top}\boldsymbol{x}+b))}$, where $\boldsymbol{w}$ and $b$ are randomly generated. From the experimental results on the 16 benchmark datasets without noise in Table \ref{jizhun11}, it can be seen that the classification accuracies of our C$L_1$QTSVM model are the highest on 8 of the datasets. Specifically, on the benchmark datasets Ionosphere, Sonar, Vote, WBC, Yeast (2 vs. 8), and Robotnavig, our C$L_1$QTSVM model outperforms the second-ranked method by 0.17$\%$, 0.62$\%$, 0.20$\%$, and 0.04$\%$, 0.13$\%$ and 0.03$\%$,respectively; Also, although our C$L_1$QTSVM model does not have the highest accuracy on the datasets Indian, Blood, and Glass (0-1-5 vs. 2), it is only lower than the classification accuracy of the best method on these three datasets by 0.67$\%$, 0.40$\%$, and 0.20$\%$, respectively. These results demonstrate the effectiveness of our method.

\renewcommand\tabcolsep{5pt}
\begin{table}[htbp]
\centering
\caption{Experimental results on the benchmark dataset.}\label{jizhun11}
\vspace{-0.3cm}
\resizebox{.95\columnwidth}{!}{
\begin{tabular}{lllllllll} 
\toprule
  Datasets&TSVM&C$L_1$FRTBSVM&C$L_1$FRTELM&LINEX-TSVM&C$L_{2,p}$-LSTSVM&$\nu$-FRSQSSVM&LSQTSVM&C$L_1$QTSVM\\
   &(Acc,Std)($\%$)&(Acc,Std)($\%$)&(Acc,Std)($\%$)& (Acc,Std)($\%$)& (Acc,Std)($\%$)&(Acc,Std)($\%$)& (Acc,Std)($\%$)&(Acc,Std)($\%$)\\
    &($c_2$,$\sigma$)&($c_1$,$c_2$,$\sigma$) &($c_2$,$L$)&($c_1$,$c_2$)&($c_2$,$\epsilon$,$p$)&$(\nu)$&($c_2$)&($c_1$,$c_2$)\\
    \midrule
  \multirow{2}{*}{Australian} &(81.25,0.28)&(86.46,0.50)&(85.87,0.60)&(85.43,0.19)&(86.57,0.17)&(86.65,0.21)&(80.81,1.52)&(\textbf{86.66},\textbf{0.32})\\
                                &($10^{-5}$,$2^{-1}$)&($10^{3}$,$10^{-3}$,$2^{4}$)&($10^{1}$,50)&($10^{1}$,$10^{1}$)&($10^{1}$,$10^{5}$,1.5)&(0.01)&($10^{-2}$)&($10^{-3}$,$10^{-5}$)\\
    \multirow{2}{*}{Indian} &(71.33,0.19)&(71.77,0.88)&(70.11,1.04)&(71.30,0.13)&(\textbf{72.71},\textbf{1.03})&(69.72,0.80)&(49.95,0.68)&(72.04,0.52)\\
                                &($10^{1}$,$2^{-4}$)&($10^{-3}$,$10^{-3}$,$2^{0}$)&($10^{-5}$,500)&($10^{-1}$,$10^{-5}$)&($10^{-3}$,$10^{-1}$,1.3)&(0.01)&($10^{-4}$)&($10^{-5}$,$10^{0}$)\\
\multirow{2}{*}{Blood} &(44.17,6.70)&(78.20,0.40)&(78.92,0.51)&(76.83,0.34)&(77.33,0.08)&(\textbf{78.64},\textbf{0.68})&(61.35,1.34)&(78.24,0.12)\\
                                &($10^{3}$,$2^{-1}$)&($10^{-5}$,$10^{-5}$,$2^{0}$)&($10^{1}$,200)&($10^{2}$,$10^{-2}$)&($10^{-1}$,$10^{5}$,1.5)&(0.2)&($10^{-4}$)&($10^{-5}$,$10^{0}$)\\
\multirow{2}{*}{Cleve} &(75.98,2.85)&(81.06,0.97)&(78.49,1.83)&(81.66,0.51)&(82.19,0.69)&(\textbf{83.44},\textbf{0.18})&(67.28,2.17)&(81.07,1.52)\\
                                &($10^{5}$,$2^{1}$)&($10^{1}$,$10^{-5}$,$2^{0}$)&($10^{-5}$,20)&($10^{-5}$,$10^{4}$)&($10^{-3}$,$10^{3}$,1.3)&(0.01)&($10^{-1}$)&($10^{-5}$,$10^{-2}$)\\
\multirow{2}{*}{Climate-simulation}&(93.72,0.42)&(92.74,0.23)&(\textbf{95.22},\textbf{0.51})&(85.60,0.69)&(95.01,0.50)&(94.44,0.41)&(93.20,0.59)&(92.94,0.53)\\
                                &($10^{-5}$,$2^{1}$)&($10^{1}$,$10^{-5}$,$2^{0}$)&($10^{-1}$,200)&($10^{-2}$,$10^{-4}$)&($10^{-1}$,$10^{3}$,1.5)&(0.01)&($10^{5}$)&($10^{-4}$,$10^{-4}$)\\
\multirow{2}{*}{CMC} &(58.72,1.29)&(70.47,0.46)&(68.10,0.55)&(68.39,0.17)&(67.55,0.22)&(67.94,0.61)&(69.61,0.34)&(\textbf{70.67},\textbf{0.40})\\
                                &($10^{5}$,$2^{-1}$)&($10^{3}$,$10^{-5}$,$2^{3}$)&($10^{-2}$,50)&($10^{-3}$,$10^{5}$)&($10^{-3}$,$10^{3}$,1.9)&(0.01)&($10^{-4}$)&($10^{-5}$,$10^{0}$)\\
\multirow{2}{*}{Cylinder-bands} &(72.18,1.34)&(68.40,3.92)&(68.77,2.63)&(67.94,0.75)&(70.18,0.64)&(\textbf{73.81},\textbf{0.76})&(69.18,1.44)&(72.99,1.74)\\
                                &($10^{-5}$,$2^{-3}$)&($10^{-3}$,$10^{3}$,$2^{0}$)&($10^{-5}$,50)&($10^{-5}$,$10^{5}$)&($10^{-1}$,$10^{3}$,1.5)&(0.01)&($10^{0}$)&($10^{-5}$,$10^{0}$)\\
\multirow{2}{*}{Glass(0-1-5 vs. 2)} &(91.13,0.05)&(90.91,0.79)&(78.88,9.25)&(58.12,0.11)&(90.16,0.34)&(\textbf{91.13},\textbf{0.04})&(73.82,2.26)&(90.93,0.28)\\
                                &($10^{-5}$,$2^{-2}$)&($10^{3}$,$10^{-1}$,$2^{0}$)&($10^{4}$,100)&($10^{-4}$,$10^{1}$)&($10^{-5}$,$10^{-5}$,1.9)&(0.8)&($10^{-1}$)&($10^{-3}$,$10^{2}$)\\
\multirow{2}{*}{Haberman} &(59.12,4.02)&(74.48,0.95)&(70.96,2.15)&(75.46,0.41)&(\textbf{76.17},\textbf{0.53})&(74.23,0.97)&(71.30,0.83)&(74.93,0.35)\\
                                &($10^{-3}$,$2^{-1}$)&($10^{5}$,$10^{-3}$,$2^{3}$)&($10^{1}$,50)&($10^{-2}$,$10^{5}$)&($10^{-3}$,$10^{3}$,1.5)&(0.01)&($10^{4}$)&($10^{-5}$,$10^{0}$)\\
\multirow{2}{*}{Hepatitis} &(81.00,1.22)&(80.66,1.62)&(80.43,1.58)&(76.77,0.81)&(82.10,1.15)&(\textbf{84.91},\textbf{1.32})&(73.41,2.37)&(83.38,1.48)\\
                                &($10^{3}$,$2^{1}$)&($10^{3}$,$10^{-5}$,$2^{3}$)&($10^{1}$,20)&($10^{-3}$,$10^{3}$)&($10^{-3}$,$10^{3}$,1.9)&(0.01)&($10^{-2}$)&($10^{4}$,$10^{5}$)\\
\multirow{2}{*}{Ionosphere} &(89.81,0.79)&(90.23,0.62)&(89.83,0.57)&(89.06,0.46)&(88.67,0.30)&(89.92,0.43)&(88.69,0.87)&(\textbf{90.40},\textbf{0.69})\\
                                &($10^{5}$,$2^{1}$)&($10^{1}$,$10^{-5}$,$2^{3}$)&($10^{1}$,20)&($10^{-2}$,$10^{3}$)&($10^{-1}$,$10^{-3}$,1.7)&(0.01)&($10^{-5}$)&($10^{1}$,$10^{4}$)\\
\multirow{2}{*}{Sonar} &(83.99,1.65)&(83.22,1.29)&(55.67,3.34)&(73.12,1.07)&(73.90,0.90)&(82.69,1.01)&(86.40,1.49)&(\textbf{87.02},\textbf{0.94})\\
                                &($10^{-2}$,$2^{2}$)&($10^{5}$,$10^{-3}$,$2^{3}$)&($10^{-2}$,40)&($10^{0}$,$10^{3}$)&($10^{-1}$,$10^{3}$,1.3)&(0.2)&($10^{-5}$)&($10^{-1}$,$10^{1}$)\\
\multirow{2}{*}{Vote} &(94.46,0.54)&(94.57,0.25)&(89.86,0.94)&(94.83,0.22)&(94.21,0.21)&(94.20,0.41)&(90.32,0.93)&(\textbf{95.03},\textbf{0.36})\\
                                &($10^{-5}$,$2^{1}$)&($10^{1}$,$10^{-5}$,$2^{3}$)&($10^{-2}$,20)&($10^{0}$,$10^{4}$)&($10^{-3}$,$10^{3}$,1.7)&(0.01)&($10^{3}$)&($10^{-5}$,$10^{-1}$)\\
\multirow{2}{*}{WBC} &(95.77,0.30)&(97.09,0.24)&(93.63,1.75)&(95.96,0.43)&(95.76,0.09)&(96.47,0.11)&(76.10,0.62)&(\textbf{97.13},\textbf{0.25})\\
                                &($10^{1}$,$2^{-1}$)&($10^{1}$,$10^{-5}$,$2^{0}$)&($10^{1}$,50)&($10^{-3}$,$10^{4}$)&($10^{-3}$,$10^{1}$,1.9)&(0.01)&($10^{1}$)&($10^{-1}$,$10^{-4}$)\\
\multirow{2}{*}{Yeast(2 vs. 8)} &(94.33,0.39)&(94.47,0.71)&(92.32,4.59)&(95.21,0.65)&(95.16,0.21)&(95.20,0.17)&(95.08,0.26)&(\textbf{95.34},\textbf{0.26})\\
                                &($10^{-1}$,$2^{-4}$)&($10^{-1}$,$10^{-5}$,$2^{0}$)&($10^{4}$,20)&($10^{-2}$,$10^{-2}$)&($10^{-5}$,$10^{3}$,1.3)&(0.01)&($10^{-3}$)&($10^{-3}$,$10^{-5}$)\\
\multirow{2}{*}{Robotnavig} &(94.19,0.26)&(98.78,0.08)&(94.27,0.33)&(80.78,0.32)&(84.83,0.11)&(88.17,1.46)&(95.08,0.26)&(\textbf{98.81},\textbf{0.09})\\
                                &($10^{-3}$,$2^{-1}$)&($10^{1}$,$10^{-5}$,$2^{3}$)&($10^{-2}$,500)&($10^{2}$,$10^{-2}$)&($10^{-1}$,$10^{1}$,1.3)&(0.01)&($10^{-2}$)&($10^{-5}$,$10^{-1}$)\\
  \bottomrule
\end{tabular}
}
\end{table}

\subsubsection{Performance on image datasets}
Table \ref{tuxiang1} shows the basic information of the four image datasets used in this subsection. And some images from these four image datasets are shown in Figure \ref{tuxiangshuju}(a)-(d). In particular, the USPS dataset and the COIL20 dataset are from the URL \footnote{http://www.cad.zju.edu.cn/home/dengcai/Data/MLData.html}, while the Yale dataset and the ORL dataset are from the URL \footnote{https:// jundongl.github.io/scikit-feature/datasets.html}. For the USPS handwritten digit recognition datasets, we use 9298 images, in which each image has 16$\times$16 pixels, and the label of each image belongs to one of the digits 0 to 9. For the COIL20 dataset, it contains 20 objects with 72 images each, and each image with a pixel size of 32$\times$32. Each object in this dataset has its images taken 5 degrees apart as it rotates on a turntable. For the Yale dataset, which consists of 165 grayscale images in GIF format of 15 individuals, there are 11 images for each subject, and each image has a different facial expression or composition, e.g., center light, glasses, happy, left light, no glasses, normal, right light, sad, sleepy, surprised, and blinking. And the pixel size of each image is 32$\times$32. For the ORL dataset, the dataset consists of 40 different subjects, each subject has 10 different photos. For some subjects, the images were taken at different times, with different lighting, facial expressions (eyes open/close, smiling/not smiling) and facial details (glasses/no glasses). All images were taken against a dark, uniform background, and subjects were held in an upright, frontal position (with some lateral movement allowed). In addition, each image in the ORL dataset has 32$\times$32 pixels.
\begin{figure}[htbp]
\centering
\subfigure[USPS]{
\begin{minipage}[t]{0.5\linewidth}
\centering
\includegraphics[ height=1.5cm,width=6cm]{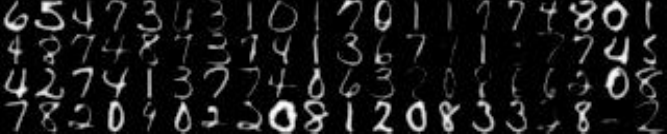}
\end{minipage}%
}%
\subfigure[ORL]{
\begin{minipage}[t]{0.5\linewidth}
\centering
\includegraphics[height=1.5cm, width=6cm]{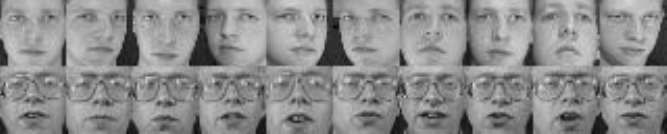}
\end{minipage}%
}%

\subfigure[Yale]{
\begin{minipage}[t]{0.5\linewidth}
\centering
\includegraphics[height=1.5cm, width=6cm]{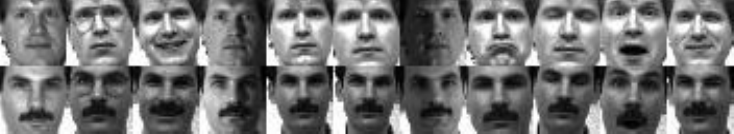}
\end{minipage}%
}%
\subfigure[COIL20]{
\begin{minipage}[t]{0.5\linewidth}
\centering
\includegraphics[height=1.5cm, width=6cm]{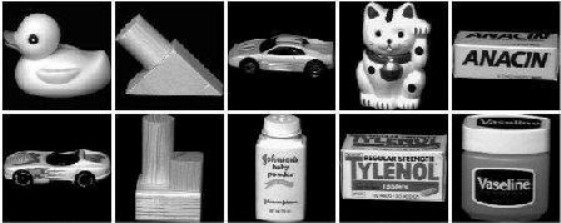}
\end{minipage}%
}%

		\caption{Visualization results for four image datasets.}\label{tuxiangshuju}
\end{figure}

Before we analyze the experimental results on the image dataset, one detail of the experiment needs to be explained. Since the feature dimensions of the image dataset are much higher than those of the benchmark datasets and the synthetic datasets, the time complexity of the C$L_1$QTSVM and LSQTSVM models grows quadratically with the feature dimensions. In order to reduce the time complexity and computational memory, in this paper, both the C$L_1$QTSVM model and LSQTSVM model are simplified using the diagonal vectorization operator and the reduced quadratic vectorization operator without cross terms \citep{gaonu-FRSQSSVM}, then the equations (\ref{Cgongshi1})-(\ref{Cgongshi4}) are rewritten as
\begin{equation}\label{RCgongshi1}
   \begin{split}
    \boldsymbol{\omega}_{+}\triangleq\left[\operatorname{dvec}^{\top}(\boldsymbol{W}_{+}), \boldsymbol{b}_{+}^{\top}, c_+\right]^{\top}\in \mathbb{R}^{2n+1},
    \end{split}
\end{equation}
\begin{equation}\label{RCgongshi2}
   \begin{split}
    \boldsymbol{\omega}_{-}\triangleq\left[\operatorname{dvec}^{\top}(\boldsymbol{W}_{-}), \boldsymbol{b}_{-}^{\top},c_+\right]^{\top}\in \mathbb{R}^{2n+1},
    \end{split}
\end{equation}
\begin{equation}\label{RCgongshi3}
   \begin{split}
   \boldsymbol{h}_{j}^{-}\triangleq\left[\operatorname{qvec}^{\top}(\boldsymbol{x}_{j}), \boldsymbol{x}_{j}^{\top}, 1\right]^{\top}\in \mathbb{R}^{2n+1},j=1, 2,\ldots, m_{-},
    \end{split}
\end{equation}
\begin{equation}\label{RCgongshi4}
   \begin{split}
   \boldsymbol{h}_{i}^{+}\triangleq\left[\operatorname{qvec}^{\top}(\boldsymbol{x}_{i}), \boldsymbol{x}_{i}^{\top}, 1\right]^{\top},\in \mathbb{R}^{2n+1},i=1, 2,\ldots, m_{+}.
    \end{split}
\end{equation}

Therefore, based on the Eqs. (\ref{RCgongshi1})-(\ref{RCgongshi4}), the time complexity of the C$L_1$QTSVM model and the LSQTSVM model on the image datasets is greatly decreased.

The data are normalized to $[-1, 1]$ before performing the experiments. Both the C$L_1$FRTBSVM model and the TSVM model use the RBF kernel function. The evaluation metrics used are Acc and F1, which are obtained by the average of 10 times 10-fold cross-validation. Time is the sum of the training time and testing time of a 10-fold cross validation. Table \ref{tuxiang11} and Table \ref{tuxiang2} show the classification results and optimal parameters of the C$L_1$QTSVM model and the rest of the compared state-of-the-art methods on the four image datasets. As shown in Table \ref{tuxiang11}, the classification accuracy of our C$L_1$QTSVM model on Yale dataset, USPS dataset and ORL dataset are the optimal results, which are higher than the second ranked method by $1.94$\%$, $0.01$\%$ and 0.05$\%$ respectively. The C$L_1$QTSVM model is not the optimal result on the COIL20 dataset, but its classification accuracy is only 0.45$\%$ lower than that of the C$L_1$FRTBSVM model. Similarly, from Table \ref{tuxiang2}, it can also be seen that the F1 scores of  C$L_1$QTSVM model on the Yale dataset, USPS dataset, and ORL dataset are the optimal results, which are higher than the second-ranked method by 2.37$\%$, 0.62$\%$, and 0.08$\%$, respectively. The C$L_1$QTSVM model is not the optimal result on the COIL20 dataset, but its classification accuracy is only 0.35$\%$ lower than that of the C$L_1$FRTBSVM model. 

Table \ref{tuxiang3} briefly shows the running time of each method on the four image datasets, where the running time is the sum of the testing time and training time. From the results shown in Table \ref{tuxiang3}, the following conclusions about our C$L_1$QTSVM model can be obtained: (1) The running time of the C$L_1$QTSVM model on the USPS dataset is much lower than that of the TSVM, C$L_1$FRTBSVM, and $\nu$-FRSQSSVM models, which is mainly due to the fact that our model uses the SMW theorem and the reduced quadratic quantization operator without cross-terms. (2) On the COIL20, USPS and ORL datasets, the running time of the C$L_1$QTSVM model is not as good as that of the C$L_1$FRTBSVM, the C$L_1$FRTELM and the LINEX-TSVM, but it is not much different from them. (3) As shown from the average running time results, the running time of the C$L_1$QTSVM model is not only much lower than that of TSVM, C$L_1$FRTELM, $\nu$-FRSQSSVM and C$L_{2,p}$-LSTSVM, but also comparable to that of C$L_1$FRTELM, LINEX-TSVM and LSQTSVM. .

\renewcommand\tabcolsep{30pt}
\begin{table*}[htbp]
\centering
\caption{Basic information on four image recognition datasets.}\label{tuxiang1}
\vspace{-0.3cm}
\resizebox{.99\columnwidth}{!}{
\begin{tabular}{llllll} 
\hline
 Dataset & Classification &\#Instances&\#Attributes \\ \hline
   COIL20&(0$\sim$10,11$\sim$20)&1440&1024\\
		                       USPS&(0$\sim$4,5$\sim$9)&9298&256\\
		                       Yale&(1$\sim$7,8$\sim$15)&165&1024\\
		                       ORL&(1$\sim$5,6$\sim$10)&100&1024\\
		
  \hline
\end{tabular}
}
\end{table*}

\renewcommand\tabcolsep{5pt}
\begin{table}[htbp]
\centering
\caption{Optimal accuracy and optimal parameter results for each method on four image datasets.}\label{tuxiang11}
\vspace{-0.3cm}
\resizebox{.99\columnwidth}{!}{
\begin{tabular}{lllllllll} 
\toprule
 Datasets&TSVM&C$L_1$FRTBSVM&C$L_1$FRTELM&LINEX-TSVM&C$L_{2,p}$-LSTSVM&$\nu$-FRSQSSVM&LSQTSVM&C$L_1$QTSVM\\
   &(Acc,Std)($\%$)&(Acc,Std)($\%$)&(Acc,Std)($\%$)& (Acc,Std)($\%$)& (Acc,Std)($\%$)&(Acc,Std)($\%$)& (Acc,Std)($\%$)&(Acc,Std)($\%$)\\
    &($c_2$,$\sigma$)&($c_1$,$c_2$,$\sigma$) &($c_2$,$L$)&($c_1$,$c_2$)&($c_2$,$\epsilon$,$p$)&$(\nu)$&($c_2$)&($c_1$,$c_2$)\\
    \midrule
  \multirow{2}{*}{COIL20} &(99.72,0.13)&(\textbf{100.00},\textbf{0.00})&(52.23,1.98)&(94.66,0.12)&(95.89,0.15)&(95.80,0.37)&(96.98,0.19)&(99.55,0.09)\\
                                &($10^{5}$,$2^{3}$)&($10^{1}$,$10^{-5}$,$2^{3}$)&($10^{1}$,50)&($10^{-1}$,$10^{5}$)&($10^{-3}$,$10^{3}$,1.9)&(0.01)&($10^{2}$)&($10^{-5}$,$10^{-2}$)\\
    \multirow{2}{*}{USPS} &(90.26,2.65)&(91.21,0.64)&(56.79,1.77)&(81.26,0.08)&(86.94,0.07)&(90.05,0.58)&(84.58,0.45)&(\textbf{91.22},\textbf{0.09})\\
                                &($10^{-3}$,$2^{1}$)&($10^{-5}$,$10^{-2}$,$2^{3}$)&($10^{-3}$,100)&($10^{-1}$,$10^{5}$)&($10^{-1}$,$10^{5}$,1.5)&(0.01)&($10^{-1}$)&($10^{-5}$,$10^{-1}$)\\
\multirow{2}{*}{Yale} &(82.30,1.56)&(82.09,0.82)&(54.03,3.52)&(76.42,0.85)&(80.13,0.44)&(83.31,1.45)&(78.11,1.06)&(\textbf{85.25},\textbf{1.66})\\
                                &($10^{5}$,$2^{3}$)&($10^{1}$,$10^{-5}$,$2^{3}$)&($10^{3}$,500)&($10^{-1}$,$10^{5}$)&($10^{-5}$,$10^{5}$,1.9)&(0.2)&($10^{2}$)&($10^{1}$,$10^{5}$)\\
\multirow{2}{*}{ORL} &(99.60,0.52)&(99.75,0.37)&(50.70,5.91)&(96.48,0.75)&(99.40,0.34)&(99.20,0.63)&(97.70,0.67)&(\textbf{99.80},\textbf{0.42})\\
                                &($10^{-3}$,$2^{3}$)&($10^{1}$,$10^{-5}$,$2^{3}$)&($10^{3}$,50)&($10^{-3}$,$10^{1}$)&($10^{-3}$,$10^{1}$,1.3)&(0.01)&($10^{-5}$)&($10^{-5}$,$10^{-2}$)\\
   \textbf{Avg.Acc}&92.97&93.26&53.44&87.21&90.59&92.09&89.34&\textbf{93.96}\\
  \textbf{Avg.rank}&2.75&2.25&8.00&7.00&4.75&4.25&5.50&\textbf{1.50}\\
  \bottomrule
\end{tabular}
}
\end{table}

\renewcommand\tabcolsep{5pt}
\begin{table}[htbp]
\centering
\caption{Optimal F1 scores and optimal parameter results for each method on four image datasets.}\label{tuxiang2}
\vspace{-0.3cm}
\resizebox{.99\columnwidth}{!}{
\begin{tabular}{lllllllll} 
\toprule
 Datasets&TSVM&C$L_1$FRTBSVM&C$L_1$FRTELM&LINEX-TSVM&C$L_{2,p}$-LSTSVM&$\nu$-FRSQSSVM&LSQTSVM&C$L_1$QTSVM\\
   &(F1,Std)($\%$)&(F1,Std)($\%$)&(F1,Std)($\%$)& (F1,Std)($\%$)& (F1,Std)($\%$)&(F1,Std)($\%$)& (F1,Std)($\%$)&(F1,Std)($\%$)\\
    &($c_2$,$\sigma$)&($c_1$,$c_2$,$\sigma$) &($c_2$,$L$)&($c_1$,$c_2$)&($c_2$,$\epsilon$,$p$)&$(\nu)$&($c_2$)&($c_1$,$c_2$)\\
    \midrule
  \multirow{2}{*}{COIL20} &(99.81,0.10)&(\textbf{100.00},\textbf{0.00})&(48.50,4.07)&(94.25,0.32)&(95.92,0.25)&(96.07,0.37)&(96.95,0.20)&(99.65,0.09)\\
                                &($10^{5}$,$2^{3}$)&($10^{1}$,$10^{-5}$,$2^{3}$)&($10^{-5}$,40)&($10^{-1}$,$10^{5}$)&($10^{-3}$,$10^{3}$,1.9)&(0.01)&($10^{2}$)&($10^{-5}$,$10^{-1}$)\\
    \multirow{2}{*}{USPS} &(87.83,2.98)&(88.74,1.05)&(NaN,NaN)&(75.04,0.07)&(83.89,0.08)&(87.93,0.77)&(80.41,0.81)&(\textbf{89.45},\textbf{0.12})\\
                                &($10^{-3}$,$2^{1}$)&($10^{-5}$,$10^{-2}$,$2^{3}$)&(-,-)&($10^{1}$,$10^{3}$)&($10^{-1}$,$10^{3}$,1.9)&(0.01)&($10^{-1}$)&($10^{-5}$,$10^{-1}$)\\
\multirow{2}{*}{Yale} &(79.05,2.03)&(81.08,2.27)&(NaN,NaN)&(79.19,0.69)&(71.99,1.58)&(81.56,3.31)&(77.38,1.19)&(\textbf{83.93},\textbf{2.22})\\
                                &($10^{3}$,$2^{3}$)&($10^{4}$,$10^{-5}$,$2^{3}$)&(-,-)&($10^{-1}$,$10^{5}$)&($10^{-5}$,$10^{5}$,1.9)&(0.2)&($10^{2}$)&($10^{1}$,$10^{5}$)\\
\multirow{2}{*}{ORL} &(99.78,0.49)&(99.86,0.45)&(NaN,NaN)&(97.62,0.42)&(98.09,0.58)&(99.11,0.71)&(97.30,1.34)&(\textbf{99.94},\textbf{0.19})\\
                                &($10^{-3}$,$2^{5}$)&($10^{1}$,$10^{-5}$,$2^{3}$)&(-,-)&($10^{1}$,$10^{5}$)&($10^{-5}$,$10^{3}$,1.3)&(0.01)&($10^{-5}$)&($10^{-5}$,$10^{-2}$)\\
    \textbf{Avg.F1}&91.62&92.40&NaN&85.53&87.47&91.17&88.01&\textbf{93.24}\\
  \textbf{Avg.rank}&3.50&2.00&8.00&6.00&5.75&3.50&5.75&\textbf{1.50}\\
  \bottomrule
\end{tabular}
}
\end{table}

\renewcommand\tabcolsep{5pt}
\begin{table}[htbp]
\centering
\caption{Running time of each method on four image datasets.}\label{tuxiang3}
\vspace{-0.3cm}
\resizebox{.99\columnwidth}{!}{
\begin{tabular}{lllllllll} 
\toprule
 Datasets&TSVM&C$L_1$FRTBSVM&C$L_1$FRTELM&LINEX-TSVM&C$L_{2,p}$-LSTSVM&$\nu$-FRSQSSVM&LSQTSVM&C$L_1$QTSVM\\
   &(Time)(s)&(Time)(s)&(Time)(s)&(Time)(s)&(Time)(s)&(Time)(s)&(Time)(s)&(Time)(s)\\
    \midrule
\multirow{1}{*}{COIL20} &(90.76)&(36.40)&(8.19)&(\textbf{8.11})&(161.22)&(1198.97)&(199.72)&(98.43)\\

    \multirow{1}{*}{USPS} &(1089.48)&(5363.28)&(69.73)&(181.67)&(467.56)&(11838.59)&(\textbf{65.08})&(100.83)\\

\multirow{1}{*}{Yale} &(0.62)&(\textbf{0.31})&(16.29)&(2.69)&(3.25)&(139.14)&(23.60)&(34.20)\\

\multirow{1}{*}{ORL} &(0.27)&(\textbf{0.21})&(1.11)&(2.57)&(3.93)&(88.65)&(14.65)&(35.31)\\
 \textbf{Avg.Time}&295.28&1350.05&\textbf{23.83}&48.76&158.99&3316.34&75.76&67.19\\      
  \bottomrule
\end{tabular}
}
\end{table}

\subsection{Robustness analysis}
In this subsection, to validate that our C$L_1$QTSVM model is insensitive to outliers on the benchmark datasets, we select six small-scale benchmark datasets for numerical experiments. As shown in Figure \ref{sanwei12}, the additive noise ratios are 0$\%$, 5$\%$ and 10$\%$, 15$\%$, 20$\%$, 25$\%$ and 30$\%$, respectively. As can be seen from the results, the classification performance of our C$L_1$QTSVM model is less affected by the increase of noise ratios. Specifically, on the datasets Australian, CMC, Sonar, WBC, Yeast (2 vs. 8) and Robotnavig, the classification performance of the C$L_1$QTSVM model slightly outperforms that of the rest of the compared methods. In particular, on the dataset Robotnavig, the robustness of the C$L_1$QTSVM model is higher than that of the remaining compared methods. However, on the CMC, Australian and WBC datasets, the classification performance of our C$L_1$QTSVM model is comparable to that of the C$L_1$FRTBSVM model. It should be noted that our C$L_1$QTSVM model has much better classification results on these six datasets compared with the traditional TSVM and LSQTSVM models. From the analysis of the above results, it can be seen that our C$L_1$QTSVM model has some advantages.

Our C$L_1$QTSVM model outperforms other state-of-the-art methods for the following reasons: (1) Capped $L_1$-norm is introduced into C$L_1$QTSVM model, which effectively reduces the influence of outliers on our model. (2) The introduction of $L_2$-regularization term improves the generalization ability of our model. (3) The designed iterative algorithm can have the ability to solve C$L_1$QTSVM model, which leads to better performance. (4) Our C$L_1$QTSVM model is kernel-free, which effectively solves the difficulty of selecting the kernel functions and kernel parameters of the comparative kernel methods on certain datasets.

\begin{figure}[htbp]
\centering
\subfigure[Australian]{
\begin{minipage}[t]{0.5\linewidth}
\centering
\includegraphics[width=6cm]{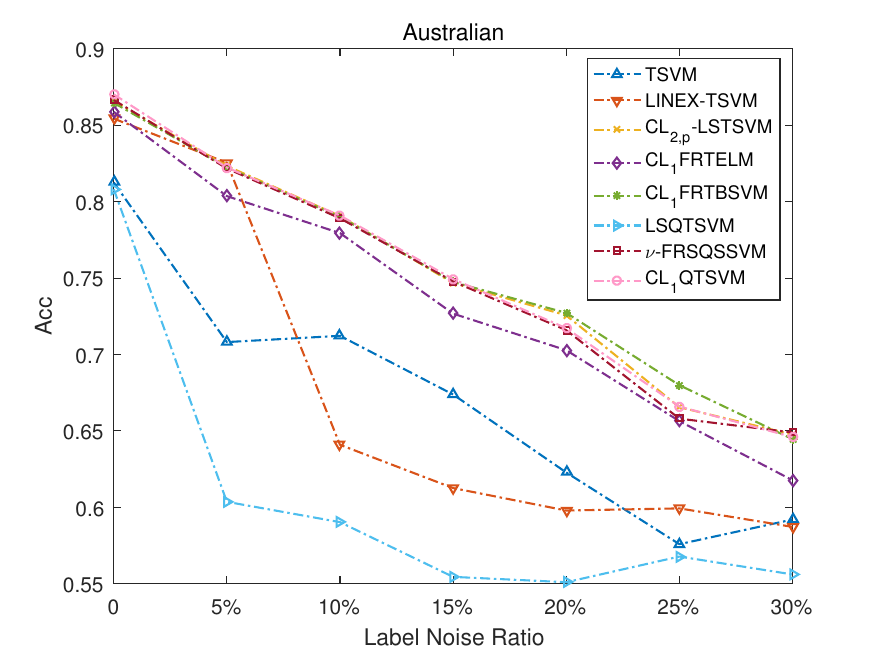}
\end{minipage}%
}%
\subfigure[CMC]{
\begin{minipage}[t]{0.5\linewidth}
\centering
\includegraphics[width=6cm]{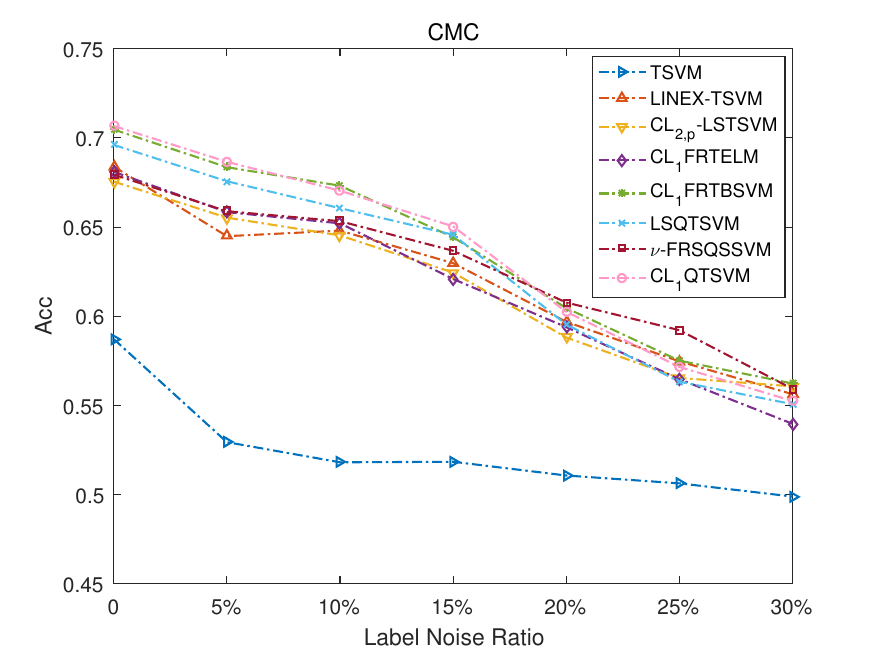}
\end{minipage}%
}%

\subfigure[Sonar]{
\begin{minipage}[t]{0.5\linewidth}
\centering
\includegraphics[width=6cm]{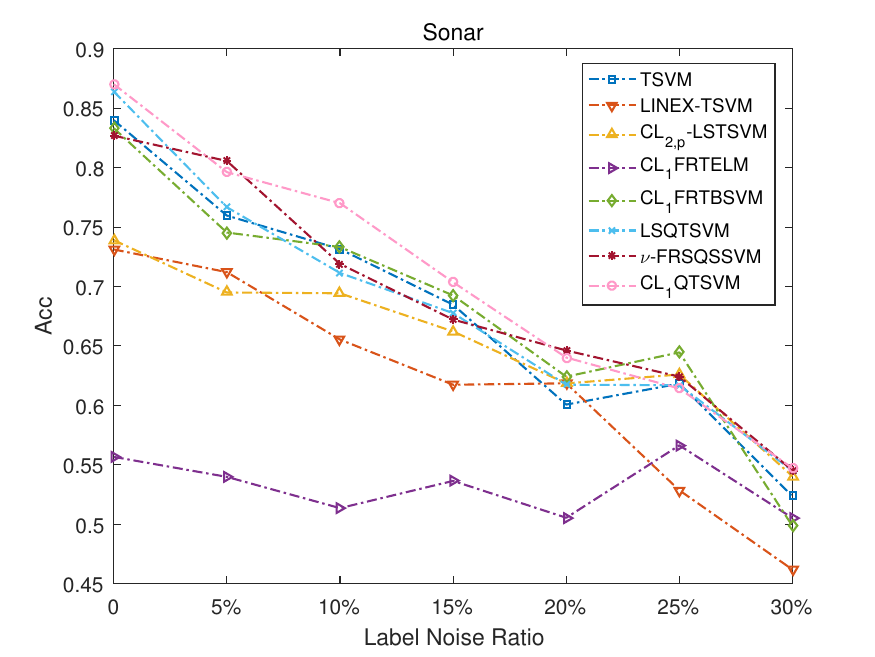}
\end{minipage}%
}%
\subfigure[WBC]{
\begin{minipage}[t]{0.5\linewidth}
\centering
\includegraphics[width=6cm]{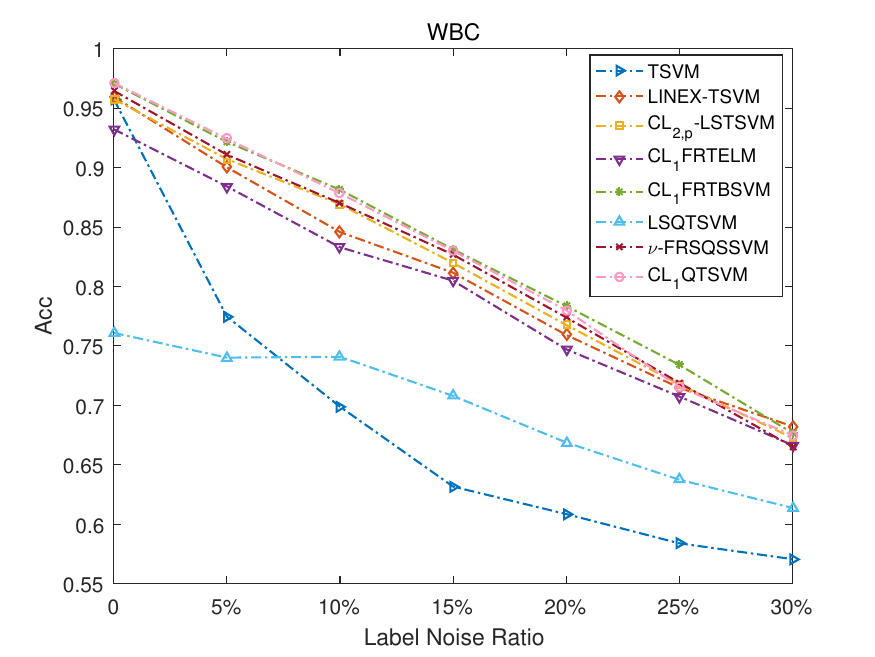}
\end{minipage}%
}%

\subfigure[Yeast(2 vs. 8)]{
\begin{minipage}[t]{0.5\linewidth}
\centering
\includegraphics[width=6cm]{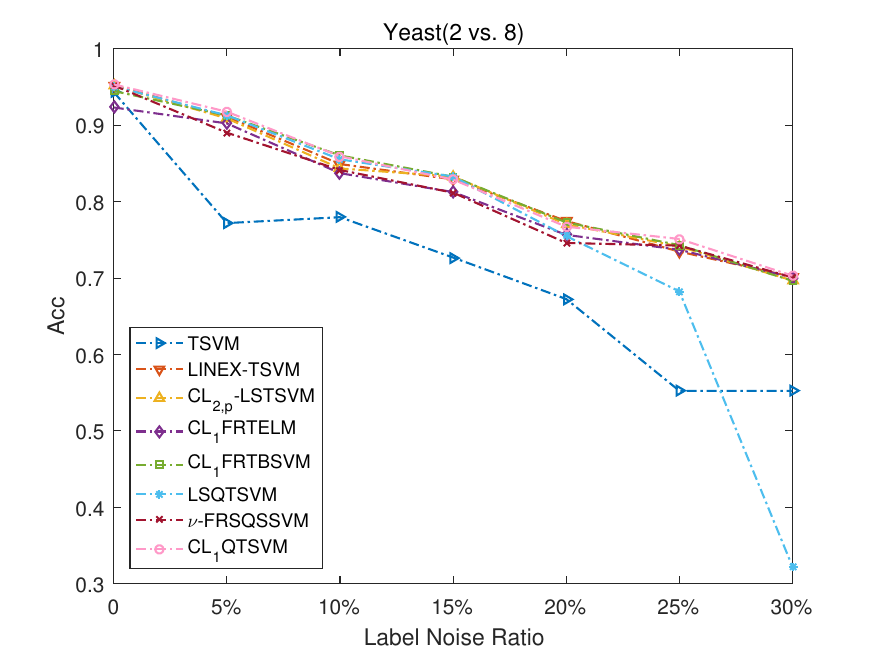}
\end{minipage}%
}%
\subfigure[Robotnavig]{
\begin{minipage}[t]{0.5\linewidth}
\centering
\includegraphics[width=6cm]{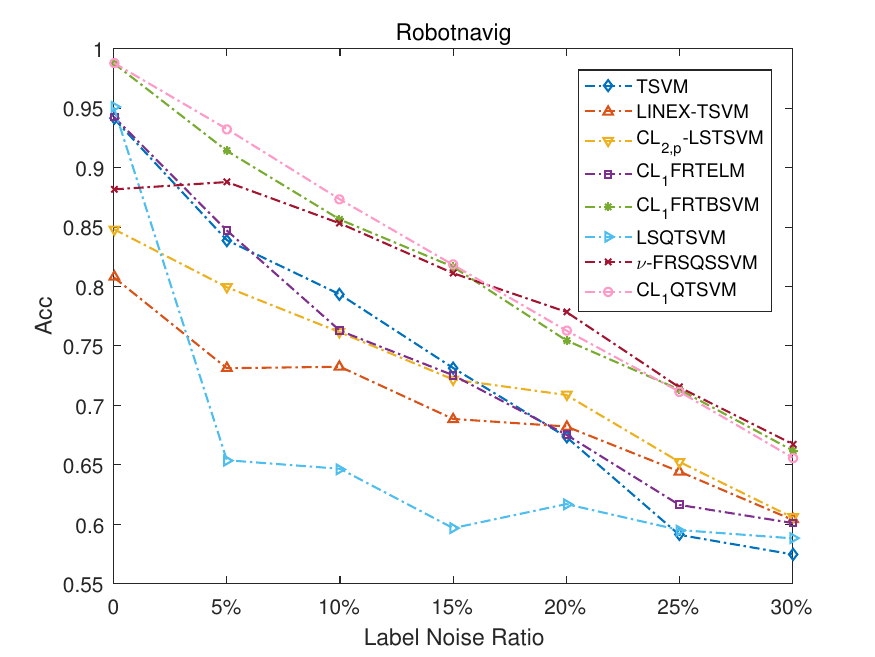}
\end{minipage}
}%
		\caption{Comparison of classification performance by methods with different label noises.}\label{sanwei12}
\end{figure}

\subsection{Parameter sensitivity analysis}
This subsection mainly analyzes the relationship between the regularization parameter $c_1$ and the penalty parameter $c_2$ of our C$L_1$ QTSVM model with the classification accuracy. Figure \ref{sanwei1} shows the results on six benchmark datasets. The results are analyzed as follows: on the dataset Australian, it can be seen that our C$L_1$QTSVM model is insensitive to the regularization parameter $c_1$, but is more sensitive to the penalty parameter $c_2$; when $c_2\in(10^{-5},10^{0})$, our C$L_1$QTSVM model can obtain the highest accuracy. On the dataset Climate-simulation, our C$L_1$QTSVM model is insensitive to both parameters $c_1$ and $c_2$. On the Ionosphere, our C$L_1$QTSVM model is insensitive to the parameter $c_1$; However, our C$L_1$QTSVM model can find the optimal parameter when $c_2\in(10^{-5},10^{-3})$. On the dataset Robotnavig, our C$L_1$QTSVM model finds the optimal parameter combinations when $c_2\in(10^{-5},10^{0})$ and $c_1\in(10^{-5},10^{-3})$. On the dataset Sonar and Yeast (2 vs. 8), our C$L_1$QTSVM model is insensitive to the parameter $c_1$; but when $c_2\in(10^{-5},10^{0})$, the C$L_1$QTSVM model finds the optimal parameter combination. Based on the above analysis, we can get the following conclusions: (1) Our C$L_1$QTSVM model is insensitive to the parameter $c_1$, which can be set to the range of $c_1\in(10^{-5},10^{0})$. (2) In order to obtain better classification performance, the optimal parameter range for parameter $c_2$ is $c_1\in(10^{-5},10^{0})$.

\begin{figure}[htbp]
\centering
\subfigure[Australian]{
\begin{minipage}[t]{0.5\linewidth}
\centering
\includegraphics[width=6cm]{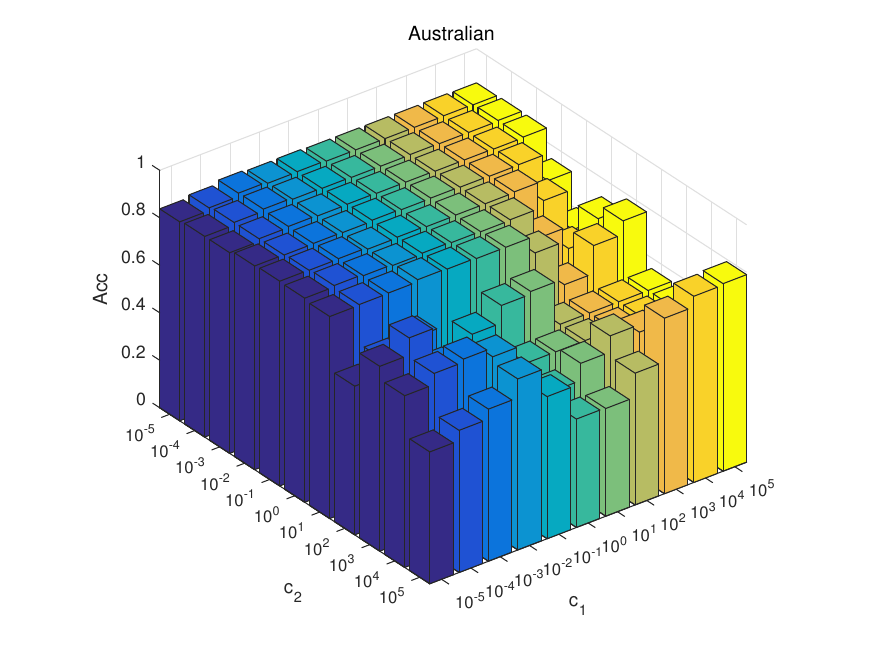}
\end{minipage}%
}%
\subfigure[Climate-simulation]{
\begin{minipage}[t]{0.5\linewidth}
\centering
\includegraphics[width=6cm]{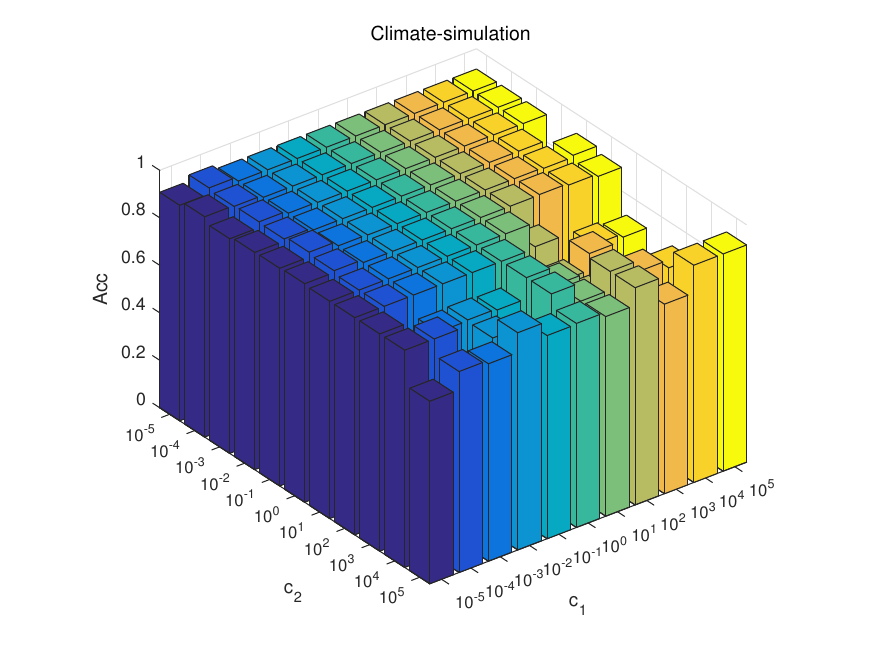}
\end{minipage}%
}%

\subfigure[Ionosphere]{
\begin{minipage}[t]{0.5\linewidth}
\centering
\includegraphics[width=6cm]{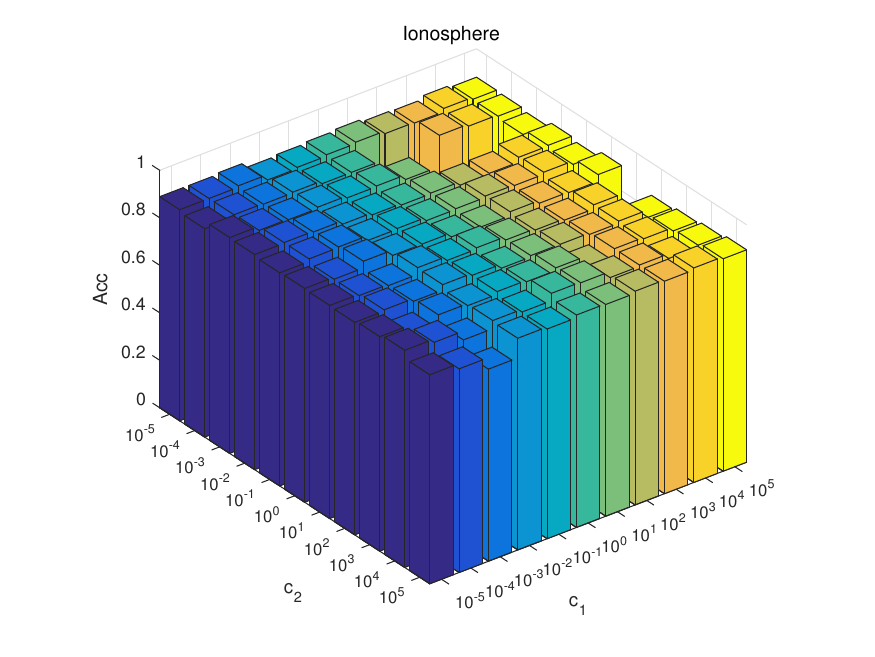}
\end{minipage}%
}%
\subfigure[Robotnavig]{
\begin{minipage}[t]{0.5\linewidth}
\centering
\includegraphics[width=6cm]{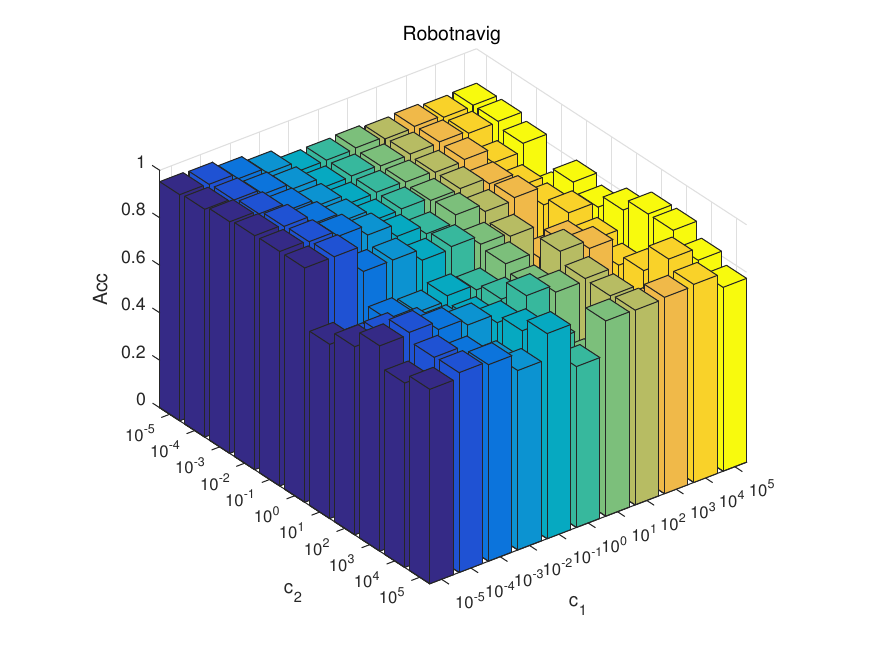}
\end{minipage}%
}%

\subfigure[Sonar]{
\begin{minipage}[t]{0.5\linewidth}
\centering
\includegraphics[width=6cm]{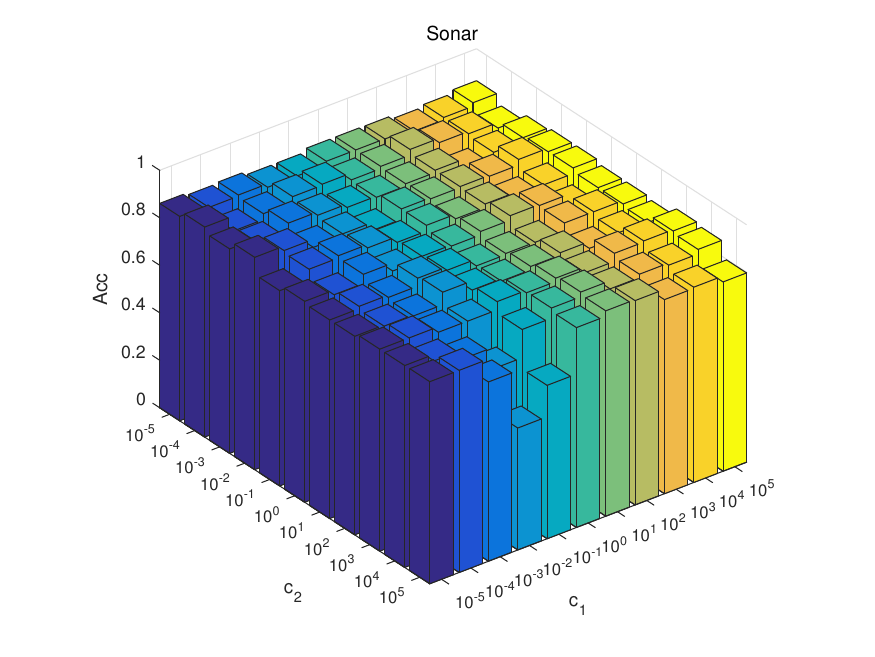}
\end{minipage}%
}%
\subfigure[Yeast(2 vs. 8)]{
\begin{minipage}[t]{0.5\linewidth}
\centering
\includegraphics[width=6cm]{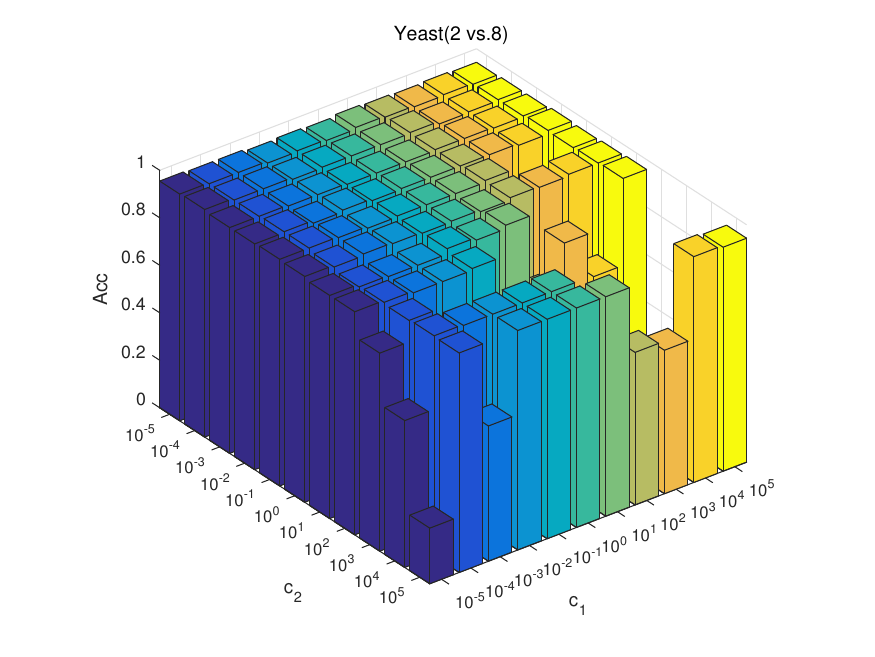}
\end{minipage}
}%
		\caption{Accuracy variation of C$L_1$QTSVM model with different hyperparameters.}\label{sanwei1}
\end{figure}

\subsection{Convergence analysis}
Figure \ref{shoulian1} shows the objective function value versus the number of iterations for our C$L_1$QTSVM method on the six benchmark datasets. Note that the objective function value is the sum of the objective function values of our optimization problem (\ref{CL1SLQTSVM1}) and (\ref{CL1SLQTSVM2}). In the convergence analysis experiments, 90$\%$ of the data points in each benchmark dataset are randomly selected as the training set, and the remaining 10$\%$ of the points are used as the test set. From the Figure \ref{shoulian1} we can observe that the objective function value decreases rapidly and converges within 10 iteration steps. Therefore, these convergence analysis results effectively validate the convergence and effectiveness of our proposed C$L_1$QTSVM model.

\begin{figure}[htbp]
\centering
\subfigure[Glass(0-1-5 vs .2)]{
\begin{minipage}[t]{0.5\linewidth}
\centering
\includegraphics[width=6cm]{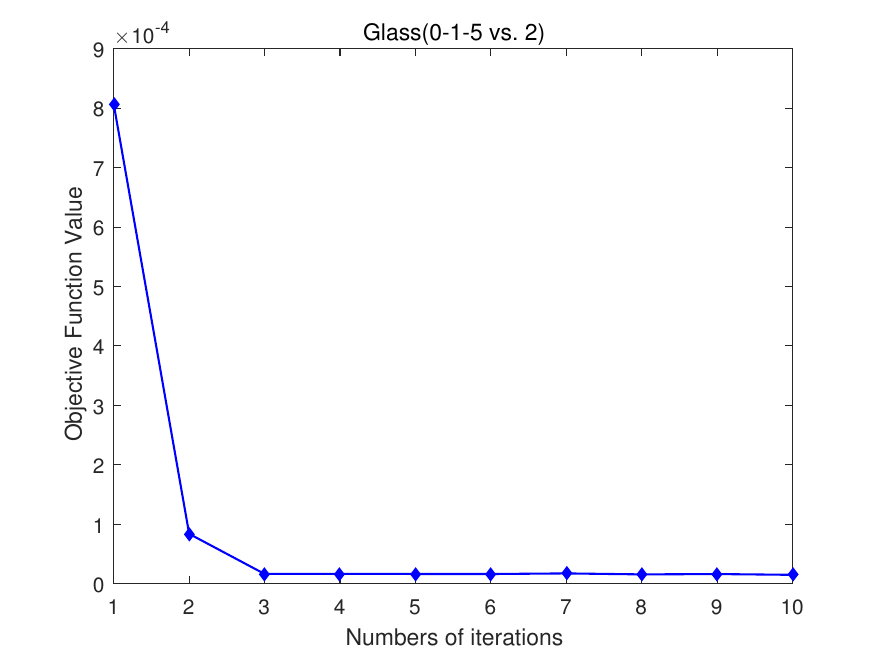}
\end{minipage}%
}%
\subfigure[Cylinder-bands]{
\begin{minipage}[t]{0.5\linewidth}
\centering
\includegraphics[width=6cm]{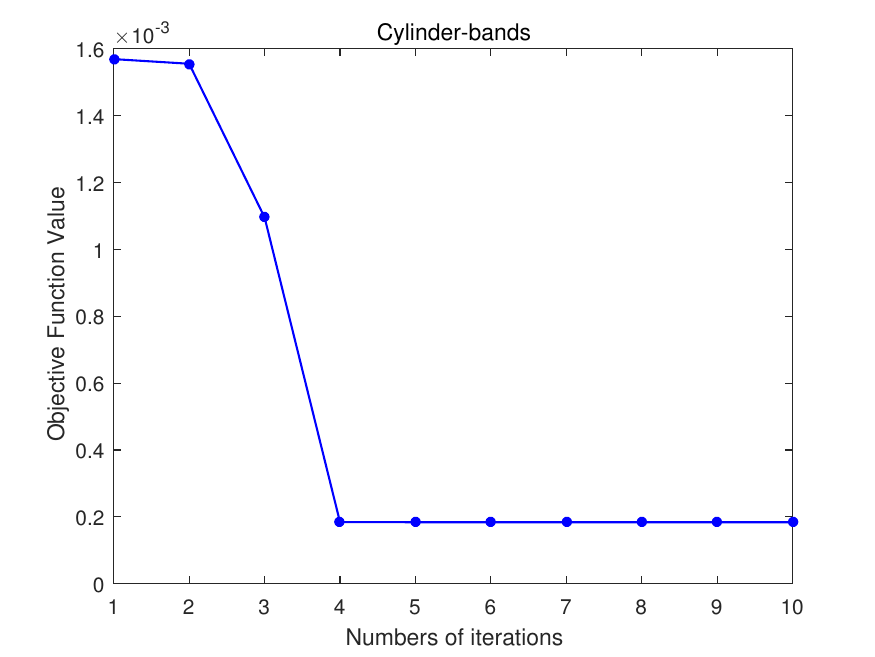}
\end{minipage}%
}%

\subfigure[Hepatitis]{
\begin{minipage}[t]{0.5\linewidth}
\centering
\includegraphics[width=6cm]{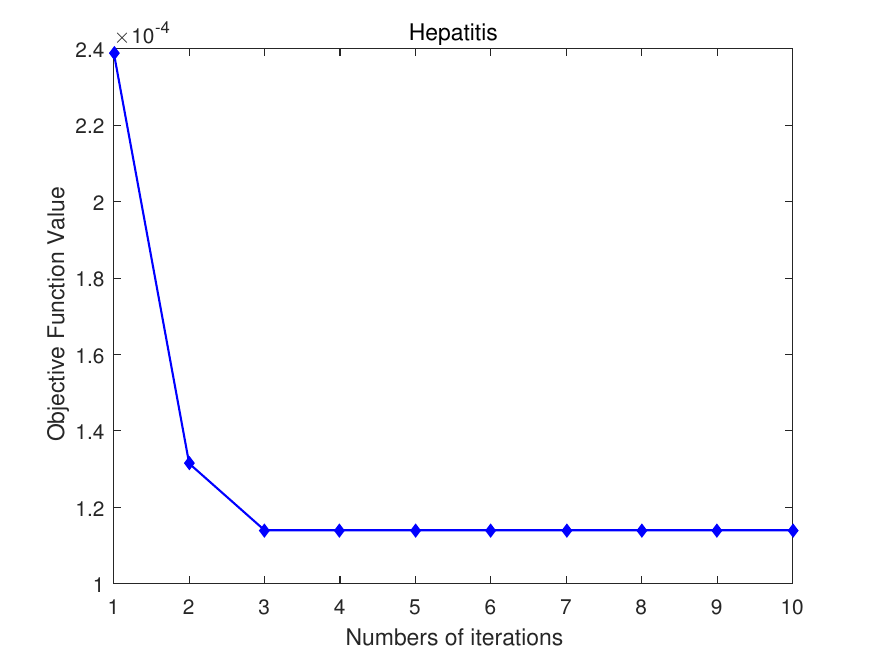}
\end{minipage}%
}%
\subfigure[Haberman]{
\begin{minipage}[t]{0.5\linewidth}
\centering
\includegraphics[width=6cm]{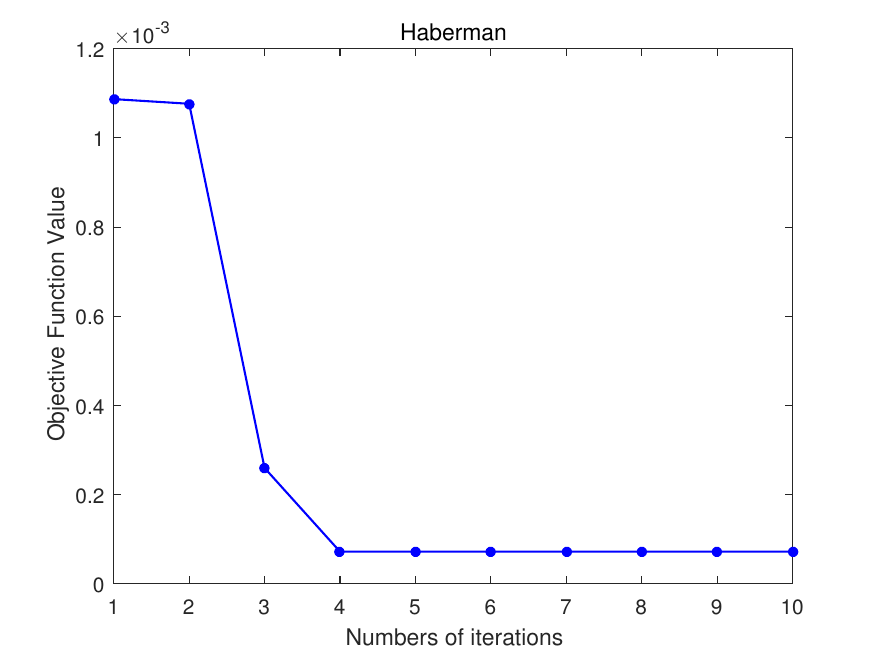}
\end{minipage}%
}%

\subfigure[Ionosphere]{
\begin{minipage}[t]{0.5\linewidth}
\centering
\includegraphics[width=6cm]{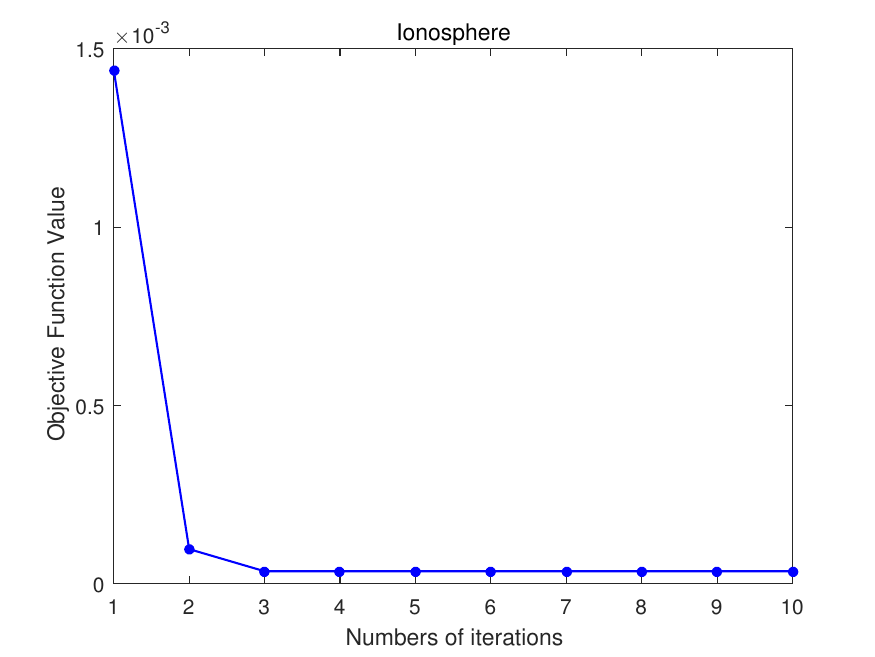}
\end{minipage}%
}%
\subfigure[Sonar]{
\begin{minipage}[t]{0.5\linewidth}
\centering
\includegraphics[width=6cm]{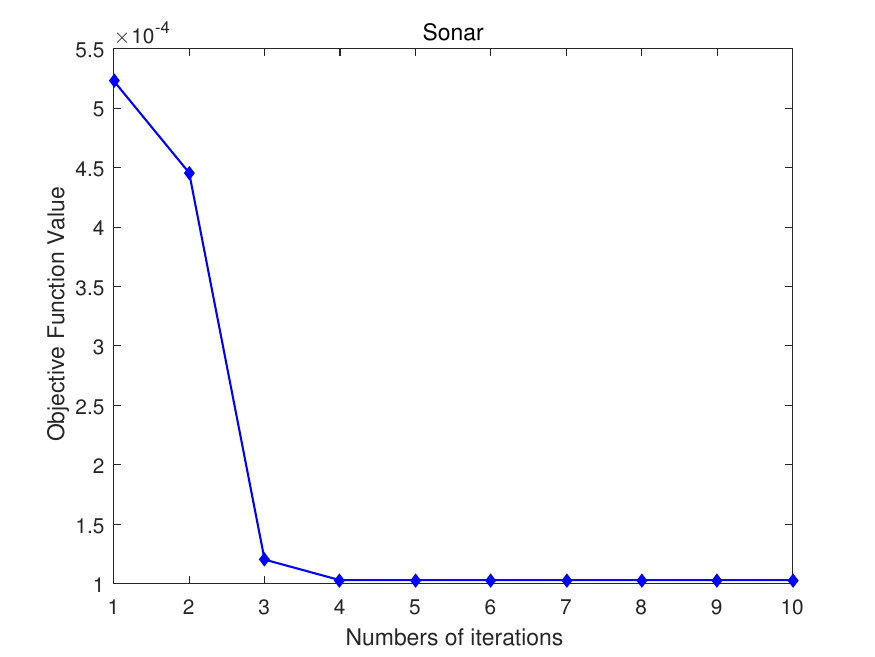}
\end{minipage}
}%
		\caption{Convergence of the C$L_1$QTSVM model on 6 benchmark datasets.}\label{shoulian1}
\end{figure}

\subsection{Non-parametric statistical test}
In this subsection, we use the Nemenyi post hoc test \citep{post} to further validate the differences between the C$L_1$QTSVM model and other comparative state-of-the-art methods. The principle of this test is that if the difference between the mean ranks of the two algorithms is greater than the calculated critical difference (CD), we consider the two algorithms to be significantly different. Otherwise, there is no significant difference between the two algorithms. In addition, the CD value is calculated as follows
\begin{equation}\label{448}
C D=q_{\alpha}(k) \times \sqrt{\frac{k(k+1)}{6 N}},
\end{equation}
where $\alpha=0.05$, and we have $q_{\alpha}=3.0310$.

As shown in Figure (\ref{posttest1}), there is no significant difference between the methods connected by the red line. Based on the experimental results on the synthetic dataset in Table \ref{rengongjieguo}, Figure (\ref{posttest1})(a) demonstrates the results of the corresponding Nemenyi post hoc test. From Figure (\ref{posttest1})(a), it can be seen that the average classification results of the proposed C$L_1$QTSVM model on the nine synthetic datasets are not significantly different from those of C$L_1$FRTBSVM, TSVM, and LSQTSVM; However, there is a significant difference between the classification performance of our C$L_1$QTSVM model and that of the remaining four comparison methods.  In addition, based on the experimental results on the benchmark dataset in Table \ref{jizhun1}, Figure (\ref{posttest1})(b) shows the corresponding Nemenyi post-test results. From Figure (\ref{posttest1})(b), it can be seen that the classification performance of the proposed C$L_1$QTSVM model on the 16 benchmark datasets is not significantly different from that of $\nu$-FRSQSSVM, C$L_1$FRTBSVM and C$L_{2,p}$-LSTSVM, but there is a significant difference between them and the remaining four methods.  From the above results, we can see that our C$L_1$QTSVM model is effective in dealing with synthetic data with outliers, benchmark data and nonlinear data.

\begin{figure}[htbp]
\centering
\subfigure[Synthetic datasets]{
\begin{minipage}[t]{0.5\linewidth}
\centering
\includegraphics[width=6cm]{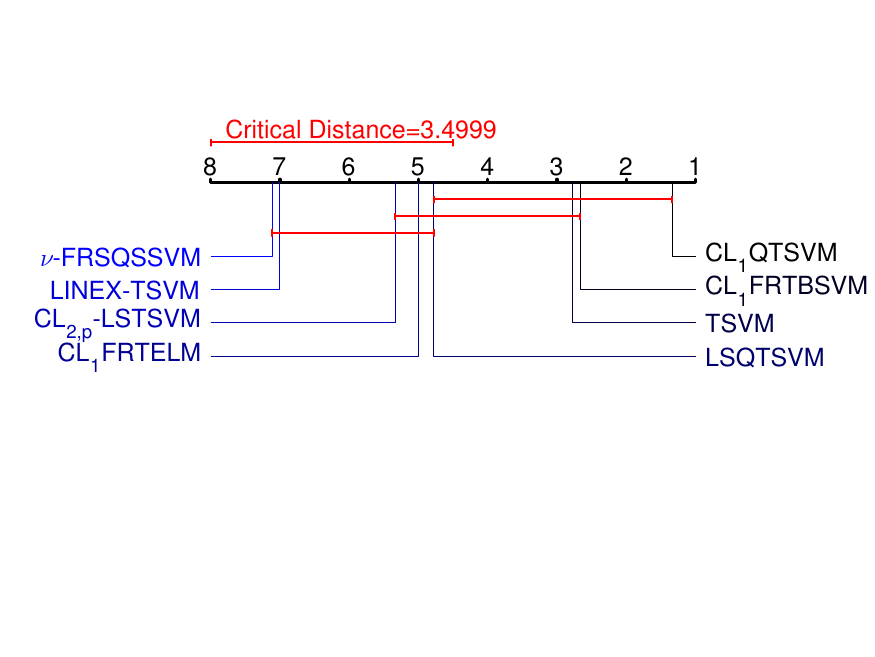}
\end{minipage}%
}%
\subfigure[Benchmark datasets]{
\begin{minipage}[t]{0.5\linewidth}
\centering
\includegraphics[width=6cm]{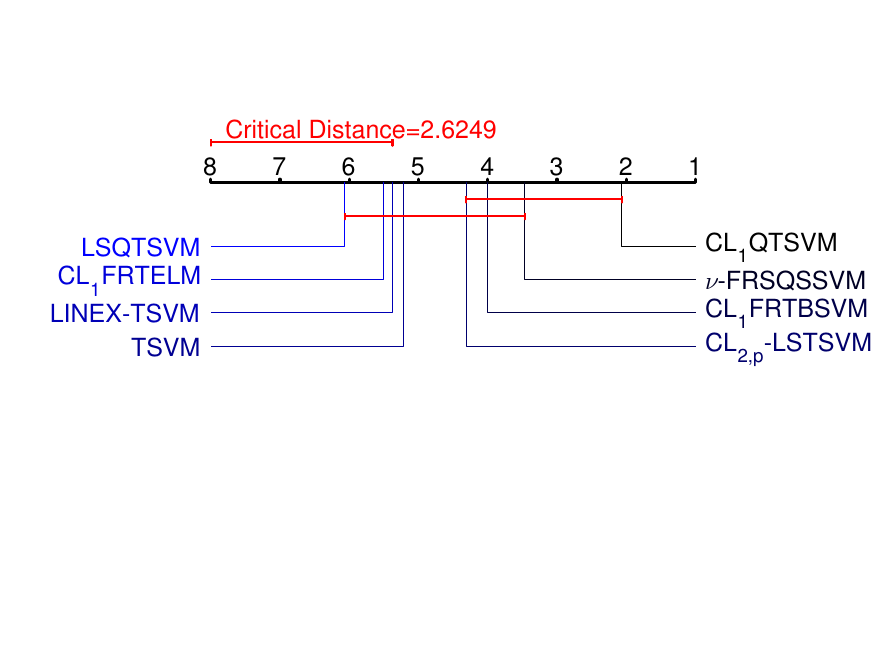}
\end{minipage}%
}
\centering
\caption{Nemenyi post hoc tests on synthetic and benchmark datasets.}\label{posttest1}
\end{figure}

\section{Conclusion}\label{6}
In this paper, a capped ${L}_1$-norm kernel-free quadratic support vector machine (C${L}_1$QTSVM) is proposed from the perspectives of distance norms and kernel-free techniques.  First, a non-smooth but bounded and symmetric capped ${L}_1$-norm distance measure is introduced to improve the robustness of the model to outliers. Meanwhile, the introduction of the ${L}_2$-norm regularization term improves the generalization ability of the model.  In addition, in order to avoid the time-consuming problem of selecting the appropriate kernel functions and kernel parameters for the traditional twin support vector machine model on some real datasets, this paper extends the model to a kernel-free version by utilizing the kernel-free technique, where only two quadratic hypersurfaces need to be found for the proposed C${L}_1$QTSVM model. For the established non-differentiable optimization problem, the original problem is equivalently transformed by reweighting technique, and an efficient iterative algorithm is constructed for the transformed optimization problem. In the theory section, the convergence, time complexity and existence of local optimal solutions of the designed iterative algorithm are analyzed in detail. 

In the numerical experiments, the experiments on synthetic dataset, benchmark dataset and image recognition dataset show that the C${L}_1$QTSVM model has certain advantages in terms of classification performance, robustness and computation time.  In the future, we can extend the C$L_1$QTSVM model to multi-view learning, multi-classification problems and clustering problems. Similarly, deep learning and functional data classification are also directions that can be considered.

\section*{Acknowledgements}
This work is supported by the National Natural Science Foundation of China (No.12061071) and Xinjiang Key Laboratory of Applied Mathematics (No.XJDX1401).

\section*{Conflict of interest}
The authors declare that they have no conflict of interest.

\bibliographystyle{elsarticle-harv}
\bibliography{CL1QTSVM}
\end{document}